\def\checkmark{\tikz\fill[scale=0.3](0,.35) -- (.25,0) -- (1,.7) -- (.25,.15) -- cycle;}
\newcommand{\cmark}{\checkmark}%
\newcommand{\xmark}{$\times$}%
\DeclareMathOperator*{\argmax}{arg\,max}
\acrodef{AoA}{Angle of Arrival}
\acrodef{AWGN}{Additive White Gaussian Noise}
\acrodef{BER}{Bit-Error-Rate}
\acrodef{BPSK}{Binary Phase-Shift Keying}
\acrodef{BSC}{Binary Symmetric Channel}
\acrodef{CDF}[CDF]{Cumulative Distribution Function}
\acrodef{CLT}[CLT]{Central Limit Theorem}
\acrodef{CSI}[CSI]{Channel State Information}
\acrodef{DMC}[DMC]{Discrete Memoryless Channel}
\acrodef{DMS}[DMS]{Discrete Memoryless Source}
\acrodef{iid}[i.i.d.]{independent and identically distributed}
\acrodef{LDPC}[LDPC]{Low-Density Parity-Check}
\acrodef{MAC}[MAC]{multiple-access channel}
\acrodef{MIMO}[MIMO]{Multiple-Input Multiple-Output}
\acrodef{MISO}{Multiple-Input Single-Output}
\acrodef{PDF}[PDF]{Probability Distribution Function}
\acrodef{PMF}[PMF]{Probability Mass Function}
\acrodef{PSD}{Power Spectral Density}
\acrodef{QPSK}{Quadrature Phase-Shift Keying}
\acrodef{SIMO}{Single-Input Multiple-Output}
\acrodef{SNR}{Signal-to-Noise Ratio}
\acrodef{wrt}[w.r.t.]{with respect to}
\acrodef{WSS}{Wide Sense Stationary}
\newcommand{\V}[1]{{{\mathbb{V}}\!\left(#1\right)}}
\newcommand{\avgH}[1]{{\mathbb{H}}\!\left(#1\right)}
\newcommand{\card}[1]{\ensuremath{\left|{#1}\right|}}           
\newcommand{\eqdef}{\ensuremath{\triangleq}}                    
\newcommand{\intseq}[2]{\ensuremath{\llbracket{#1},{#2}\rrbracket}}  
\renewcommand{\leq}{\leqslant}
\renewcommand{\geq}{\geqslant}
\DeclareMathAlphabet{\eurm}{U}{eur}{m}{n}
\DeclareMathAlphabet{\mathbsf}{OT1}{cmss}{bx}{n}
\DeclareMathAlphabet{\mathssf}{OT1}{cmss}{m}{sl}
\DeclareMathAlphabet{\mathcsf}{OT1}{cmss}{sbc}{n}
\DeclareSymbolFont{bsfletters}{OT1}{cmss}{bx}{n}  
\DeclareSymbolFont{ssfletters}{OT1}{cmss}{m}{n}
\DeclareMathSymbol{\bsfGamma}{0}{bsfletters}{'000}
\DeclareMathSymbol{\ssfGamma}{0}{ssfletters}{'000}
\DeclareMathSymbol{\bsfDelta}{0}{bsfletters}{'001}
\DeclareMathSymbol{\ssfDelta}{0}{ssfletters}{'001}
\DeclareMathSymbol{\bsfTheta}{0}{bsfletters}{'002}
\DeclareMathSymbol{\ssfTheta}{0}{ssfletters}{'002}
\DeclareMathSymbol{\bsfLambda}{0}{bsfletters}{'003}
\DeclareMathSymbol{\ssfLambda}{0}{ssfletters}{'003}
\DeclareMathSymbol{\bsfXi}{0}{bsfletters}{'004}
\DeclareMathSymbol{\ssfXi}{0}{ssfletters}{'004}
\DeclareMathSymbol{\bsfPi}{0}{bsfletters}{'005}
\DeclareMathSymbol{\ssfPi}{0}{ssfletters}{'005}
\DeclareMathSymbol{\bsfSigma}{0}{bsfletters}{'006}
\DeclareMathSymbol{\ssfSigma}{0}{ssfletters}{'006}
\DeclareMathSymbol{\bsfUpsilon}{0}{bsfletters}{'007}
\DeclareMathSymbol{\ssfUpsilon}{0}{ssfletters}{'007}
\DeclareMathSymbol{\bsfPhi}{0}{bsfletters}{'010}
\DeclareMathSymbol{\ssfPhi}{0}{ssfletters}{'010}
\DeclareMathSymbol{\bsfPsi}{0}{bsfletters}{'011}
\DeclareMathSymbol{\ssfPsi}{0}{ssfletters}{'011}
\DeclareMathSymbol{\bsfOmega}{0}{bsfletters}{'012}
\DeclareMathSymbol{\ssfOmega}{0}{ssfletters}{'012}
\newcommand{\calH}{{\mathcal{H}}}
\newcommand{\calR}{{\mathcal{R}}}
\newcommand{\calU}{{\mathcal{U}}}
\newcommand{\calV}{{\mathcal{V}}}
\newcommand{\calX}{{\mathcal{X}}}
\newcommand{\calY}{{\mathcal{Y}}}
\newcommand{\calZ}{{\mathcal{Z}}}
\begin{document}

\newtheorem{thm}{Theorem} 
\newtheorem{lem}{Lemma}
\newtheorem{prop}{Proposition}
\newtheorem{cor}{Corollary}
\newtheorem{defn}{Definition}
\newtheorem{rem}{Remark}
\newtheorem{ex}{Example}

\newenvironment{example}[1][Example]{\begin{trivlist}
\item[\hskip \labelsep {\bfseries #1}]}{\end{trivlist}}

\renewcommand{\qedsymbol}{ \begin{tiny}$\blacksquare$ \end{tiny} }

\renewcommand{\leq}{\leqslant}
\renewcommand{\geq}{\geqslant}
\renewcommand{\avgH}[1]{{H}\!\left(#1\right)}

\title {Polar Coding for the Broadcast Channel\\ with Confidential Messages:\\ A Random Binning Analogy}

\author{
    \IEEEauthorblockN{R\'{e}mi A. Chou, Matthieu R. Bloch}
    \thanks{R. A. Chou was with the School~of~Electrical~and~Computer~Engineering,~Georgia~Institute~of~Technology, Atlanta,~GA~30332, and is now with the Department of Electrical Engineering, The Pennsylvania State University, Univeristy Park, PA 16802. M. R. Bloch is with the School~of~Electrical~and~Computer~Engineering,~Georgia~Institute~of~Technology, Atlanta,~GA~30332. E-mail : remi.chou@psu.edu; matthieu.bloch@ece.gatech.edu. This work was supported in part by the NSF under Award CCF 1320298 and by ANR with grant 13-BS03-0008. Part of this work has been presented in \cite{Chou15}.}
}
\maketitle

\begin{abstract}
We develop a low-complexity polar coding scheme for the discrete memoryless broadcast channel with confidential messages under strong secrecy and randomness constraints. Our scheme extends previous work by using an optimal rate of uniform randomness in the stochastic encoder, and avoiding assumptions regarding the symmetry or degraded nature of the channels. The price paid for these extensions is that the encoder and decoders are required to share a secret seed of negligible size and to increase the block length through chaining. We also highlight a close conceptual connection between the proposed polar coding scheme and a random binning proof of the secrecy capacity region.
\end{abstract}

\section{Introduction}

With the renewed interest for information-theoretic security, there have been several attempts to develop low-complexity coding schemes achieving the fundamental secrecy limits of the wiretap channel models. In particular, explicit coding schemes based on low-density parity-check codes~\cite{Thangaraj2007,Subramanian2011,Rathi2011}, polar codes~\cite{Mahdavifar11,Sasoglu13,renes2013efficient,Andersson2013}, and invertible extractors~\cite{Hayashi11,Bellare2012} have been successfully developed for special cases of Wyner's model~\cite{Wyner75}, in which the channels are at least required to be symmetric. The recently introduced chaining techniques for polar codes provide, however, a convenient way to construct explicit low-complexity coding schemes for a variety of information-theoretic channel models~\cite{Mondelli14b} without any restrictions on the channels.

In this paper, we develop a low-complexity polar coding scheme for the broadcast channel with confidential messages~\cite{Csiszar78}. We do not make degradation or symmetry assumptions on the communication channel. Moreover, rather than view randomness as a free resource, which could be used to simulate random numbers at arbitrary rate with no cost, we adopt the point of view put forward in~\cite{Watanabe12,Bloch12}, in which any randomness used for stochastic encoding must be explicitly accounted for. In particular, our proposed polar coding scheme exploits the optimal rate of randomness identified in~\cite{Watanabe12} and provides, in addition, a polar coding construction to perform channel prefixing. 

Results related to the present work have been independently and concurrently developed in~\cite{Gulcu14,Wei14}, whose main differences can be summarized as follows. Unlike \cite{Wei14}, our coding scheme does not require that a non-negligible amount of common randomness is shared between the legitimate users as in \cite[Section III-A]{Honda13}, and unlike \cite{Gulcu14}, our coding scheme does not rely on \cite[Theorem 3]{Honda13} and existence, through averaging, of certain deterministic maps. Moreover, in contrast to~\cite{Wei14,Gulcu14}, we consider randomness as a resource and use the optimal amount of local randomness for  the stochastic encoder (see Section~\ref{sec:coding-rates}), we consider auxiliary random variables with non-binary alphabets to achieve the entire region in Theorem~\ref{thm:watanabe} (see Lemma \ref{lemcard_2} and Remark \ref{remcard}), and  we do not assume that channel prefixing can be performed perfectly (see Section~\ref{sec:channel-prefixing}). Note also that \cite{Wei14} only considers weak secrecy. Consequently, our coding scheme and proofs are different from~\cite{Wei14,Gulcu14}. Remark also that, in our encoding scheme, we do not use maximum a posteriori (MAP) decisions\footnote{We refer the reader to \cite{Chou15f} for additional details on MAP decisions in encoding and decoding of polar codes.} in the same way as in \cite{Wei14,Gulcu14}. 
When specialized to Wyner's wiretap model, our scheme is also related to~\cite{renes2013efficient}, but with a number of notable distinctions. Specifically, while no pre-shared secret seed is required in~\cite{renes2013efficient}, the coding scheme therein relies on a two-layer construction for which no efficient code construction is presently known~\cite[Section 3.3]{renes2013efficient}. In contrast, our coding scheme requires a pre-shared secret seed, but at the benefit of only using a single layer of polarization. 

We summarize a comparison between our result specialized to the wiretap channel model and \cite{renes2013efficient,Gulcu14,Wei14} in Figure~\ref{fig_dif}.
\begin{figure}
\small
\begin{center}
    \begin{tabular}{ c | c | c | c | c|} 
    \cline{2-5}
      & \cite{renes2013efficient} & \cite{Gulcu14} & \cite{Wei14} & \phantom{$\hat{l}$}This paper\phantom{$\hat{l}$} \\ \hline
   \multicolumn{1}{|c|}{\phantom{$\hat{l}$}1)} & \cmark & \xmark & \cmark  & \cmark \\ \hline
       \multicolumn{1}{|c|}{\phantom{$\hat{l}$}2)}  & \cmark & \cmark & \xmark  & \cmark \\ \hline
     \multicolumn{1}{|c|}{ \phantom{$\hat{l}$}3)} & \cmark & \cmark & \xmark  & \cmark \\ \hline
                 \multicolumn{1}{|c|}{ \phantom{$\hat{l}$}4)}
& \xmark & \xmark & \xmark  & \cmark \\ \hline    
 \multicolumn{1}{|c|}{ \phantom{$\hat{l}$}5)} & \xmark & \xmark & \xmark  & \cmark \\ \hline    
    \multicolumn{1}{|c|}{ \phantom{$\hat{l}$}6)} & \xmark & \xmark & \xmark  & \cmark \\ \hline    
        \multicolumn{1}{|c|}{ \phantom{$\hat{l}$}7)} & \xmark & \cmark & \cmark  & \cmark \\ \hline    
            \end{tabular}
\end{center}
\normalsize
\caption{Summary of differences between the present work and related polar coding schemes for arbitrary discrete memoryless wiretap channels~\cite{renes2013efficient,Gulcu14,Wei14}. 1) holds when the coding scheme is explicit and does not rely on existence, through averaging, of certain deterministic maps as in \cite[Theorem~3]{Honda13}, 2) holds when the coding scheme does not rely on a non-negligible amount of common randomness shared between the legitimate users as in \cite[Section III.A]{Honda13}, 3) holds when strong secrecy is considered, 4) holds when non-binary auxiliary random variables are considered -- see Lemma \ref{lemcard_2}, 5) holds when the optimal amount of local randomness is used at the encoder, 6) holds when it is not assumed that channel prefixing can be perfectly performed, 7) holds when an efficient code construction is known.} \label{fig_dif}
\end{figure}

We summarize our contributions as follows.
\begin{itemize}
\item For the broadcast channel with confidential messages, we propose an explicit low-complexity and capacity achieving coding scheme under strong secrecy. Moreover, we do not make symmetry or degradation assumptions on the communication channel. Our result particularizes to the wiretap channel model to also provide an explicit low-complexity and capacity achieving coding scheme under strong secrecy.\footnote{Although no secrecy constraint holds on the common messages for a broadcast channel model, the latter introduces additional difficulties  in the security analysis, compared to a point-to-point wiretap channel model, because of our chaining constructions; see Figure \ref{figFGD2}.}
	\item To the best of our knowledge, the parallel between random binning and polar codes made in the manuscript does not explicitly appear elsewhere. This conceptual consideration also has direct implications for the study of our coding scheme. Specifically, it stresses the fact that the distribution induced by the encoder must be precisely analyzed to rigorously assess reliability and secrecy.
	\item We develop a scheme that uses the minimal amount of local randomness required in the stochastic encoding. %
	\item We consider polar coding for channel prefixing and do not assume that this operation can be perfectly~realized. %
\end{itemize}

The remaining of the paper is organized as follows. Section~\ref{sec:broadc-chann-with} formally introduces the notation and the model under investigation. Section~\ref{Sec_CS} develops a random binning proof of the results in~\cite{Watanabe12}, which serves as a guideline for the design of the polar coding scheme. Section~\ref{sec:polar-coding-schem} describes the proposed polar coding scheme, while Section~\ref{sec:analys-polar-coding} provides its detailed analysis. Section~\ref{sec:conclusion} offers some concluding remarks.

\section{Broadcast channel with confidential messages and constrained randomization}
\label{sec:broadc-chann-with}

\subsection{Notation} 
We define the integer interval $\llbracket a,b \rrbracket$, as the set of integers between $\lfloor a \rfloor$ and $\lceil b \rceil$.  For $n \in \mathbb{N}$ and $N \triangleq 2^n$, we let $G_n \triangleq  \left[ \begin{smallmatrix}
       1 & 0            \\[0.3em]
       1 & 1 
     \end{smallmatrix} \right]^{\otimes n} $ be the source polarization transform defined in~\cite{Arikan10}. Let the components of a vector, $X^{1:N}$, of size $N$, be denoted by superscripts, i.e., $X^{1:N} \triangleq (X^1 , X^2, \ldots, X^{N})$. For any set of indices $\mathcal{I} \subseteq \llbracket 1,N \rrbracket$, we define $X^{1:N}[\mathcal{I}] \triangleq \{ X^i \}_{i\in \mathcal{I}}$. We also use the notation $\mathcal{S}^c$ to the denote the complement in~$\llbracket 1,N \rrbracket$ of any subset $\mathcal{S}$ of $\llbracket 1,N \rrbracket$.
Unless specified otherwise, capital letters designate random variables, whereas lowercase letters designate realizations of associated random variables, e.g., $x$ is a realization of the random variable $X$. When the context makes clear that we are dealing with vectors, we write $X^N$ in place of $X^{1:N}$. Let $\mathbb{V}(\cdot, \cdot)$ and $\mathbb{D}(\cdot || \cdot)$ denote the variational distance and the divergence, respectively, between two distributions. Finally, we define the indicator function  $\mathds{1}\{ \omega \}$, which is equal to $1$ if the predicate $\omega$ is true and $0$ otherwise.

\subsection{Channel model and capacity region}
\label{Sec_PS}
We consider the problem of secure communication over a discrete memoryless broadcast channel $(\mathcal{X}, p_{YZ|X}, \mathcal{Y},\mathcal{Z})$ illustrated in Figure~\ref{figBCC}. %
The marginal probabilities $p_{Y|X}$ and $p_{Z|X}$ define two \acp{DMC} $(\mathcal{X}, p_{Y|X}, \mathcal{Y})$ and $(\mathcal{X}, p_{Z|X}, \mathcal{Z})$, which we refer to as Bob's channel and Eve's channel, respectively. 

\begin{figure}
\centering
  \includegraphics[width=8.7cm]{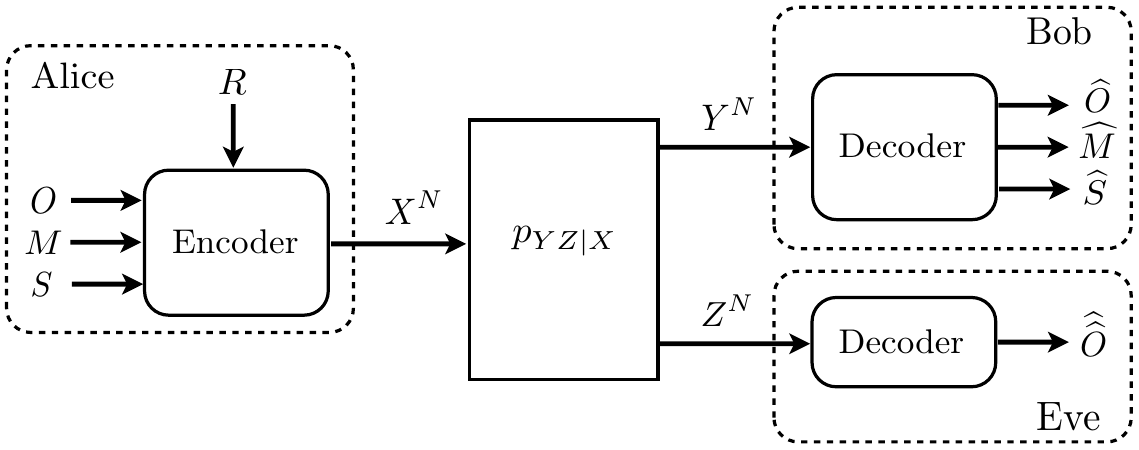}
  \caption{Communication over a broadcast channel with confidential messages. $O$ is a common message that must be reconstructed by both Bob and Eve. $S$ is a confidential message that must be reconstructed by Bob and kept secret from Eve. $M$ is a private message that Alice wishes to send to Bob without secrecy constraint, i.e., $M$ is not required to be reconstructed by Eve and is not required to be kept secret from Eve. $R$ represents an additional randomization sequence used at the encoder. }
  \label{figBCC}
\end{figure}

\begin{defn}
A $(2^{NR_O},2^{NR_M},2^{NR_S},2^{NR_R},N)$ code $\mathcal{C}_N$ for the broadcast channel consists of
\begin{itemize}
\item a common message set $\mathcal{O} \eqdef \llbracket 1 , 2^{NR_O} \rrbracket$;
\item a private message set $\mathcal{M} \eqdef \llbracket 1 , 2^{NR_M} \rrbracket$;
\item a confidential message set $\mathcal{S} \eqdef \llbracket 1 , 2^{NR_S} \rrbracket$;
\item a randomization sequence set $\mathcal{R} \eqdef \llbracket 1 , 2^{NR_R} \rrbracket$;
\item an encoding function $f:  \mathcal{O} \times \mathcal{M} \times \mathcal{S}
\times \mathcal{R}  \to \mathcal{X}^N$, which maps the messages $(o,m,s)$ and the randomness $r$ to a codeword $x^N$;
\item a decoding function $g:  \mathcal{Y}^N \to \mathcal{O} \times \mathcal{M} \times \mathcal{S}$, which maps each observation of Bob's channel $y^N$ to the messages $(\hat{o},\hat{m},\hat{s})$;
\item a decoding function $h:  \mathcal{Z}^N \to \mathcal{O} $, which maps each observation of Eve's channel $z^N$ to the message~$\hat{\hat{o}}$.
\end{itemize}
\end{defn}

\begin{rem}
The randomization sequence required at the encoder is used for prefixing and is not needed at the decoder. We refer to it as ``local randomness."%
\end{rem}

For uniformly distributed $O$, $M$, $S$, and $R$, the performance of a $(2^{NR_O},2^{NR_M},2^{NR_S},2^{NR_R},N)$ code $\mathcal{C}_N$ for the broadcast channel is measured in terms of its probability of error
$$
\mathbf{P}_e(\mathcal{C}_N) \triangleq \mathbb{P}\left[ \left\{ (\widehat{O},\widehat{M},\widehat{S}) \neq (O,M,S) \right\} \cup \left\{ \widehat{\widehat{O}} \neq O \right\}\right],
$$
and its leakage of information about the confidential message to Eve
$$
\mathbf{L}_e(\mathcal{C}_N) \triangleq I(S;Z^N).
$$

\begin{defn}
A rate tuple $(R_O,R_M,R_S,R_R)$ is achievable for the broadcast channel if there exists a sequence of $(2^{NR_O},2^{NR_M},2^{NR_S},2^{NR_R},N)$ codes $\{ \mathcal{C}_N \}_{N \geq 1}$ such that
\begin{align*}
\lim_{N \to \infty} \mathbf{P}_e(\mathcal{C}_N)& =0 \text{ (reliability condition}),\\
\lim_{N \to \infty}  \mathbf{L}_e(\mathcal{C}_N)& =0 \text{ (strong secrecy}).
\end{align*}
The achievable region $\mathcal{R}_{\textup{BCC}}$ is defined as the closure of the set of all achievable rate quadruples. 
\end{defn}

\begin{rem}
We require strong secrecy, as opposed to weak secrecy which would require  
$$
\lim_{N \to \infty}  \frac{\mathbf{L}_e(\mathcal{C}_N)}{N}=0.
$$
 Weak secrecy can often be analyzed through an astute use of Fano's inequality \cite{Fano61}. Strong secrecy usually requires more involved proof techniques but is perhaps a more meaningful secrecy metric as discussed in \cite{Maurer00}.
\end{rem}

The exact characterization of $\mathcal{R}_{\textup{BCC}}$ was obtained in~\cite{Watanabe12}.
\begin{thm}[{\cite{Watanabe12}}]
\label{thm:watanabe}
$\mathcal{R}_{\textup{BCC}}$ is the closed convex set consisting of the quadruples $(R_O,R_M,R_S,R_R)$ for which there exist auxiliary random variables $(U,V)$ such that $U - V - X - (Y,Z)$, $|\mathcal{U}| \leq |\mathcal{X}|+3$, $|\mathcal{V}| \leq (|\mathcal{X}|+3)(|\mathcal{X}|+1)$, and
\begin{align*}
R_O &\leq \min[ I(U;Y), I(U;Z) ],  \\
R_O + R_M + R_S &\leq I(V;Y|U) + \min[ I(U;Y), I(U;Z)],  \\
R_S &\leq I(V;Y|U) - I(V;Z|U),  \\
R_M + R_R &\geq I(X;Z|U),  \\
R_R &\geq I(X; Z|V). 
\end{align*}
\end{thm}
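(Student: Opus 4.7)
The plan is to prove Theorem~\ref{thm:watanabe} by the classical two-step approach --- random binning achievability and Csisz\'ar--K\"orner converse --- but with explicit accounting of the randomization rate $R_R$, which is what makes the characterization sharp compared to the unconstrained-randomness setting. Section~\ref{Sec_CS} of the paper in fact promises a random-binning achievability that I would follow as the template for the subsequent polar construction.

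For achievability, I would stack three random layers. A superposition outer layer with $2^{NR_O}$ codewords $U^N \sim p_U^{\otimes N}$ carries the common message $O$, and the constraint $R_O \leq \min[I(U;Y),I(U;Z)]$ ensures both Bob and Eve decode it by joint typicality. Inside each $U^N$, a middle layer of $V^N \sim p_{V|U}^{\otimes N}$ codewords of total rate $I(V;Y|U)$ is partitioned by three indices $(M,S,T)$, where $T$ is a confusion sub-index. Strong secrecy of $S$ requires that $(M,T)$ together supply rate at least $I(V;Z|U)$, which combined with Bob's decoding constraint $R_M+R_S+R_T \leq I(V;Y|U)$ produces both $R_S \leq I(V;Y|U)-I(V;Z|U)$ and the sum bound, obtained by optimally splitting $I(V;Z|U)$ between $M$ and $T$. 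A local prefixing layer $X^N \sim p_{X|V}^{\otimes N}$ then consumes randomness from $R$ at rate at least $I(X;Z|V)$ to make $X^N$ resolvable for Eve given $V^N$, yielding $R_R \geq I(X;Z|V)$; since $M$ again doubles as randomization from Eve's point of view, a resolvability argument at the coarser $U$ level gives $R_M+R_R \geq I(X;Z|U)$. Strong secrecy is verified by a channel-resolvability / soft-covering bound on $\mathbb{V}(p_{Z^N|S=s},p_{Z^N})$ uniformly in $s$, then lifted to $I(S;Z^N)$ via Pinsker and the continuity of entropy.

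For the converse, I would take the standard Csisz\'ar--K\"orner identification $U_i \eqdef (O,Y^{1:i-1},Z^{i+1:N})$ and $V_i \eqdef (U_i,S)$, introduce an independent time-sharing $Q$ uniform on $\intseq{1}{N}$, and absorb $Q$ into the auxiliaries to obtain single-letter random variables satisfying $U-V-X-(Y,Z)$. Fano's inequality applied to Bob's and Eve's decoders yields the reliability bounds, while the strong secrecy condition $I(S;Z^N) \to 0$ combined with the Csisz\'ar sum identity gives $R_S \leq I(V;Y|U)-I(V;Z|U)$. For the randomness bounds --- the nonstandard piece --- I would write $NR_R \geq H(R) \geq H(X^N\mid O,M,S) \geq I(X^N;Z^N\mid O,M,S)$, which single-letterizes to $R_R \geq I(X;Z|V)$, and analogously $N(R_M+R_R) \geq H(M,R\mid O) \geq I(X^N;Z^N\mid O)$ to obtain $R_M+R_R \geq I(X;Z|U)$.

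The main obstacle is the achievability: one must align the two resolvability layers (the $V$-level one providing the secrecy confusion budget $I(V;Z|U)$, and the $X$-level prefixing budget $I(X;Z|V)$) without double counting, and show simultaneously that $M$ can be decoded reliably by Bob and recycled as dummy randomization against Eve. This coupling is what produces the form $R_M+R_R \geq I(X;Z|U)$ rather than the sum of the two separate resolvability rates, and it is the reason a naive independent-layering argument overshoots. Verifying strong secrecy with these tight randomness budgets requires either the two-stage soft-covering argument sketched above or the one-shot channel resolvability tools that the paper later employs in the polar construction; the converse, by contrast, is routine once the auxiliaries are correctly identified, the only subtlety being the verification of the Markov chain $U-V-X-(Y,Z)$ after the time-sharing absorption.
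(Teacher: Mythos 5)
First, note that the paper does not prove Theorem~\ref{thm:watanabe} at all: it is cited from~\cite{Watanabe12}, and Section~\ref{Sec_CS} deliberately sketches only the \emph{achievability} direction via random binning of source sequences (Lemmas~\ref{lm:1} and~\ref{lm:2}), as scaffolding for the polar construction. Your proposal attempts the full result, and even the achievability half is not really what the paper does: you describe superposition \emph{codebook generation} with an explicit confusion sub-index $T$, whereas the paper's Section~\ref{sec:rand-binn-secure} bins \emph{source sequences} $u^N,v^N,x^N$ and then ``inverts'' the binning into a stochastic encoder, so that the confusion rate appears implicitly as the binning rate $\rho_{V|U}$ rather than as a separate index. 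Both routes can be made to work, but they are distinct proof techniques; if you intend to follow the paper as a template, you would drop the $T$-index bookkeeping entirely and work with $\rho_U,\rho_{V|U},\rho_{X|V}$ as the paper does.

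The converse, which you call routine, is precisely where the argument is fragile. Two concrete gaps: (i) your identification $V_i\eqdef(U_i,S)$ cannot deliver $R_R\geq I(X;Z|V)$ from $NR_R\geq I(X^{N};Z^{N}\mid O,M,S)$. Single-letterizing via the backward chain rule requires comparing $H(Z_i\mid Z^{i+1:N}OMS)$ with $H(Z_i\mid O,S,Y^{1:i-1},Z^{i+1:N})$, and these are not ordered: the left side conditions on $M$ but not on $Y^{1:i-1}$, the right side the reverse. You need $V_i\eqdef(O,M,S,Y^{1:i-1},Z^{i+1:N})$; that choice also makes $R_O+R_M+R_S\leq I(V;Y|U)+\min[\cdot]$ recoverable, which your $V_i=(U_i,S)$ provably cannot give since then $I(V;Y|U)\leq H(S)/N$ would wrongly cap $R_M$. (ii) The step $N(R_M+R_R)\geq H(M,R\mid O)\geq I(X^N;Z^N\mid O)$ is false as written, because $X^N$ also depends on $S$, so $H(X^N\mid O)$ can exceed $H(M,R\mid O)$. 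The correct chain is $N(R_M+R_R)=H(M,R\mid O,S)\geq H(X^N\mid O,S)\geq I(X^N;Z^N\mid O,S)$, and only then do you invoke $I(X^N;Z^N\mid O,S)=I(X^N;Z^N\mid O)-I(S;Z^N\mid O)$ (using $S-X^N-Z^N$) together with strong secrecy and Fano on Eve's estimate $\widehat{\widehat{O}}$ to make $I(S;Z^N\mid O)$ vanish. These are not cosmetic: they are exactly the points at which the randomness constraints of~\cite{Watanabe12} differ from the classical Csisz\'ar--K\"orner converse, and they need to be stated correctly for the proof to close.
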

The main contribution of the present work is to develop a polar coding scheme achieving the rates in $\mathcal{R}_{\textup{BCC}}$.

\section{A binning approach to code design: from random binning to polar binning} 
\label{Sec_CS}

In this section, we argue that our construction of polar codes for the broadcast channel with confidential messages is essentially the constructive counterpart of a \emph{random binning} proof of the region $\calR_{\textup{BCC}}$. While \emph{random coding} is often the natural tool to address channel coding problems, random binning is already found in~\cite{Csiszar1996} to establish the strong secrecy of the wiretap channel, and is the tool of choice in quantum information theory~\cite{Renes2011}; there has also been a renewed interest for random binning proofs in multi-user information theory, motivated in part by~\cite{yassaee2014achievability}. In Section~\ref{sec:rand-binn-secure}, we sketch a random binning proof of the characterization of $\calR_{\textup{BCC}}$ established in~\cite{Watanabe12}, which may be viewed as a refinement of the analysis in~\cite{yassaee2014achievability} to obtain a more precise characterization of the stochastic encoder. Section~\ref{sec:rand-binn-secure} does not involve polar codes and does not contain new results, but we use this alternative proof in Section~\ref{sec:binning-with-polar} to obtain high-level insight into the construction of polar codes. The main benefit is to clearly highlight the crucial steps of the construction in Section~\ref{sec:polar-coding-schem} and of its analysis in Section~\ref{sec:analys-polar-coding}. In particular, the rate conditions developed in the random binning proof of Section~\ref{sec:rand-binn-secure} directly translate into the definition of the polarization sets in Section~\ref{sec:binning-with-polar}.

\subsection{Information-theoretic random binning}
\label{sec:rand-binn-secure}
Information-theoretic random binning proofs rely on the following well-known lemmas -- see, for instance, \cite{Csiszar1996,Renes2011,yassaee2014achievability} for a proof. We use the notation $\delta(N)$ to denote an unspecified positive function of $N$ that vanishes as $N$ goes to infinity.

\begin{lem}[Source-coding with side information]
\label{lm:1}
  Consider a \ac{DMS} $(\calX\times\calY,p_{XY})$. For each $x^N\in\calX^N$, assign an index $\Phi(x^N)\in\intseq{1}{2^{NR}}$ uniformly at random. If $R>\avgH{X|Y}$, then $\exists N_0$ such that $\forall N\geq N_0$, there exists a deterministic function $$g_N:\intseq{1}{2^{NR}}\times\calY^N\rightarrow \calX^N:(\Phi(x^N),y^N)\mapsto \hat{x}^N$$ such that
  \begin{align*}
\mathbb{E}_{\Phi} \left[ \mathbb{P} \left[ X^N \neq g_N(\Phi(X^N),Y^N)\right] \right]\leq \delta(N).
  \end{align*}
\end{lem}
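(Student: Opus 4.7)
The plan is to follow a textbook Slepian--Wolf style random binning argument with a joint-typicality decoder, and then convert the resulting error probability into the variational-distance statement of the lemma. Fix some $\epsilon>0$ and let $\calT_\epsilon^{(N)}$ denote the set of jointly $\epsilon$-typical sequences under $p_{XY}$. I would define the decoder $g_N$ by declaring $g_N(b,y^N)=\hat{x}^N$ whenever there is a unique $\hat{x}^N\in\Phi^{-1}(b)$ satisfying $(\hat{x}^N,y^N)\in\calT_\epsilon^{(N)}$, and outputting an arbitrary fixed sequence otherwise. Observe that $g_N$ is itself random through its implicit dependence on the realization of $\Phi$.

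The key quantity to bound is the expected error probability $\E[\Phi]{\P{g_N(\Phi(X^N),Y^N)\neq X^N}}$. Splitting this probability by the usual union bound yields two terms: (i) the probability that $(X^N,Y^N)\notin\calT_\epsilon^{(N)}$, which vanishes by the AEP, and (ii) the probability that there exists some $\tilde{x}^N\neq X^N$ lying in the same bin as $X^N$ and jointly typical with $Y^N$. Because $\Phi$ assigns indices uniformly and independently across distinct sequences, $\P{\Phi(\tilde{x}^N)=\Phi(X^N)}=2^{-NR}$ for every fixed $\tilde{x}^N\neq X^N$, while the number of competing sequences in the conditional typical set is upper bounded by $2^{N(\avgH{X|Y}+\epsilon)}$. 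A union bound thus yields a bound on (ii) of the form $2^{-N(R-\avgH{X|Y}-\epsilon)}$, which vanishes as $N\to\infty$ provided $\epsilon$ is chosen small enough that $R>\avgH{X|Y}+\epsilon$. Combining the two contributions, $\E[\Phi]{\P{g_N(\Phi(X^N),Y^N)\neq X^N}}\leq\delta(N)$.

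To finish, I would use the identity $\V{p_{X^NX^N},p_{X^Ng_N(Y^N)}}=2\,\P{g_N(\Phi(X^N),Y^N)\neq X^N}$, which holds for every realization of $\Phi$ and follows by direct computation from the fact that the joint law $p_{X^NX^N}$ is supported on the diagonal: the on-diagonal and off-diagonal mass discrepancies each contribute exactly $\P{g_N(\Phi(X^N),Y^N)\neq X^N}$. Taking expectation over $\Phi$ then converts the error probability bound of the previous paragraph into the claimed inequality. The argument is classical and no step presents a serious obstacle; the only subtlety worth tracking is that $g_N$ is itself a function of the realization of $\Phi$, which is precisely why the statement is phrased in terms of an expected variational distance rather than a deterministic bound.
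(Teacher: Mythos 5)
The paper does not actually supply a proof of Lemma~\ref{lm:1}; it is stated as a well-known result (essentially Slepian--Wolf achievability expressed in distributional form) with the remark that more explicit convergence rates are possible but omitted. Your proposal therefore cannot be compared against a paper-internal argument, but it is a correct and entirely standard proof of the claim. The joint-typicality decoder, the two-term error decomposition (atypicality plus a union bound over at most $2^{N(\avgH{X|Y}+\epsilon)}$ competing conditionally-typical sequences in the same bin, each colliding with probability $2^{-NR}$), and the choice of $\epsilon$ small enough that $R>\avgH{X|Y}+\epsilon$ are exactly the textbook ingredients, and they do yield $\E[\Phi]{\P{\hat{X}^N\neq X^N}}\leq\delta(N)$.

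Two small points are worth flagging. First, your identity $\V{p_{X^NX^N},p_{X^Ng_N(Y^N)}}=2\,\P{g_N(\Phi(X^N),Y^N)\neq X^N}$ is correct under the $L^1$ convention for $\mathbb{V}(\cdot,\cdot)$, which is indeed the convention the paper uses (as can be checked against its applications of Pinsker's inequality in the form $\mathbb{V}\leq\sqrt{2\log 2}\sqrt{\mathbb{D}}$); if one used the half-$L^1$ (total variation) convention the factor of $2$ would disappear, but nothing in the argument changes. Second, you correctly identify the one genuine subtlety, namely that $g_N$ is chosen as a function of the realization of $\Phi$, which is precisely why the guarantee is an expectation over $\Phi$ rather than a pointwise bound; your reading of the quantifiers matches the intent of the statement, where $g_N$ operates on the bin index $\Phi(x^N)$ and the decoder is allowed to know the binning. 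In short, the proof is sound and fills in exactly what the paper leaves implicit.
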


\begin{lem}[Privacy amplification, channel intrinsic randomness, output statistics of random binning]
\label{lm:2}
  Consider a \ac{DMS} $(\calX\times\calZ,p_{XZ})$ and let $\epsilon>0$. For each $x^N\in\calX^N$, assign an index $\Psi(x^N)\in\intseq{1}{2^{NR}}$ uniformly at random. Denote by $q_U$ the uniform distribution on $\intseq{1}{2^{NR}}$. 
  
  If $    R<\avgH{X|Z}$, then $\exists N_0$ such that $\forall N\geq N_0$
  \begin{align*}
    \mathbb{E}_{\Psi} \left[\V{p_{\Psi(X^N)Z^N},q_U p_{Z^N}} \right]\leq\delta(N).
  \end{align*}
\end{lem}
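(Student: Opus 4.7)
My plan is to bound the expected variational distance via a second-moment (leftover-hash-type) argument combined with a typical-set smoothing step, which is the standard route for random binning results of this kind.

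First, I would fix $(m,z^N)$ and write
$$p_{\Psi(X^N)Z^N}(m,z^N)=\sum_{x^N}\mathds{1}\{\Psi(x^N)=m\}\,p_{X^NZ^N}(x^N,z^N).$$
Since $\Psi$ assigns bins uniformly and independently, each indicator has mean $2^{-NR}$ and the indicators for distinct $x^N$ are pairwise independent. Taking expectations immediately gives $\mathbb{E}_\Psi[p_{\Psi(X^N)Z^N}(m,z^N)]=q_M(m)p_{Z^N}(z^N)$, so the claim is a concentration statement about random binning around its mean. Computing the variance via the pairwise independence yields
$$\mathrm{Var}_\Psi[p_{\Psi(X^N)Z^N}(m,z^N)]\leq 2^{-NR}\sum_{x^N}p_{X^NZ^N}(x^N,z^N)^{2}.$$

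Next, I would convert $L^1$ deviation into an $L^2$ quantity. By Jensen's inequality followed by a Cauchy--Schwarz in the form that produces a $\chi^2$-like expression,
$$\bigl(\mathbb{E}_\Psi[\mathbb{V}(p_{\Psi(X^N)Z^N},q_Mp_{Z^N})]\bigr)^{2}\leq \sum_{m,z^N}\frac{\mathrm{Var}_\Psi[p_{\Psi(X^N)Z^N}(m,z^N)]}{q_M(m)p_{Z^N}(z^N)}\leq 2^{NR}\sum_{z^N}p_{Z^N}(z^N)\sum_{x^N}p_{X^N|Z^N}(x^N|z^N)^{2}.$$
The right-hand side is $2^{NR}$ times the conditional collision probability, which for the \ac{iid} law factorizes as $2^{-NH_{2}(X|Z)}$, where $H_2$ denotes the conditional collision entropy.

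The main obstacle is the gap between $H_2(X|Z)$ and $H(X|Z)$: the hypothesis is $R<H(X|Z)$, not $R<H_{2}(X|Z)$, so a direct use of the $H_2$ bound is insufficient. To bridge this gap, I would introduce a smoothing step based on strong/robust typicality. Let $\calA_\epsilon^N$ be the jointly typical set for $p_{XZ}$. Split each pair $(x^N,z^N)$ into typical and atypical contributions. On the atypical part, use the AEP to show the total mass, and hence the induced variational-distance contribution, is $\delta(N)$. On the typical part, use the uniform estimate $p_{X^N|Z^N}(x^N|z^N)\leq 2^{-N(H(X|Z)-\epsilon)}$ to replace the collision sum by $2^{-N(H(X|Z)-\epsilon)}$, yielding the bound $2^{-N(H(X|Z)-R-\epsilon)/2}$ on the expected variational distance. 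Choosing $\epsilon$ small enough that $H(X|Z)-R-\epsilon>0$ gives exponential decay and hence $\delta(N)$, completing the proof.

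Two side remarks: the same template would prove Lemma~\ref{lm:1} by swapping the role of $Z^N$ (replace the privacy-amplification target by a source-coding-with-side-information decoder test), and the proof could alternatively be packaged using smooth min-entropy, which absorbs the typicality step into a single inequality but gives the same asymptotic conclusion.
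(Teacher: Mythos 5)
The paper does not actually prove Lemma~\ref{lm:2}: it is stated as a known fact from the privacy-amplification, channel-intrinsic-randomness, and output-statistics-of-random-binning literature, with only the remark that one may obtain more explicit convergence rates but these are omitted for brevity. There is therefore no in-paper proof against which to compare your attempt; I am evaluating it on its own.

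Your argument is correct and is the standard second-moment route to this result, essentially the one underlying the output-statistics-of-random-binning framework of~\cite{Yassaee2012} as well as Hayashi-style privacy-amplification bounds. The mean computation, the variance bound from independence of the bin labels $\Psi(x^N)$, the pass from $L^1$ to a chi-squared-type quantity via Cauchy--Schwarz and Jensen, and the reduction to the conditional collision probability are all exactly right. You also correctly isolate the one genuinely nontrivial point: the raw bound only requires $R<H_2(X|Z)$ with $H_2$ the conditional collision entropy, and in general $H_2(X|Z)\leq H(X|Z)$, so a smoothing step is needed. To execute it cleanly you should truncate $p_{X^NZ^N}$ to the jointly typical set \emph{before} pushing through $\Psi$, observe that the $\Psi$-average of the truncated bin distribution is $q_M\hat p_{Z^N}$ with $\hat p_{Z^N}$ the marginal of the truncated law (not $q_Mp_{Z^N}$ itself), and close with a three-term triangle inequality (atypical mass, the second-moment bound on the truncated measure, and the $\mathbb{V}(\hat p_{Z^N},p_{Z^N})$ mismatch), each of which is $\delta(N)$. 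Your sketch contains all of these ideas; fleshed out it yields the bound $2^{-N(H(X|Z)-R-\epsilon)/2}+\delta(N)$, which is strictly stronger than the unquantified $\delta(N)$ the lemma asks for.
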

One may obtain more explicit results regarding the convergence to zero in Lemma~\ref{lm:1} and Lemma~\ref{lm:2}, but we ignore this for brevity.

The principle of a random binning proof of Theorem~\ref{thm:watanabe} is to consider a \ac{DMS} $(\calU\times\calV\times\calX\times\calY\times\calZ,p_{UVXYZ})$ such that $U-V-X-YZ$, and to assign two types of indices to source sequences by random binning. The first type identifies subsets of sequences that play the roles of codebooks, while the second type labels sequences with indices that can be thought of as messages. As explained in the next paragraphs, the crux of the proof is to show that the binning can be ``inverted,'' so that the sources may be generated from independent choices of uniform codebooks and messages.\smallskip

\noindent\textbf{Common message encoding.} We introduce two indices $\psi^U\in\intseq{1}{2^{N\rho_U}}$ and $o\in\intseq{1}{2^{NR_O}}$ by random binning on $u^N$ such that:
\begin{itemize}
\item $\rho_U>\max\left(\avgH{U|Y},\avgH{U|Z}\right)$, so that Lemma~\ref{lm:1} ensures\footnote{Apply the substitutions $R \leftarrow \rho_U$, $\Phi(X^N) \leftarrow \Psi^U$, $X \leftarrow U$, and $Y \leftarrow (Y \text{ or } Z)$.} that the knowledge of $\Psi^U$ allows Bob and Eve to reconstruct $U^N$ with high probability knowing $Y^N$ or $Z^N$, respectively;
\item $\rho_U+R_O<\avgH{U}$, so that Lemma~\ref{lm:2} ensures\footnote{Apply the substitutions $R \leftarrow (\rho_U + R_O)$, $\Psi(X^N) \leftarrow (\Psi^U,O)$, $X \leftarrow U$, and $Z \leftarrow \emptyset$.} that $\Psi^U$ and $O$ are almost uniformly distributed and independent of each other.
\end{itemize}
The binning scheme induces a joint distribution $p_{U^N\Psi^UO}$. To convert the binning scheme into a channel coding scheme, Alice operates as follows. Upon sampling indices $\widetilde{\psi}^U\in\intseq{1}{2^{N\rho_U}}$ and $\widetilde{o}\in\intseq{1}{2^{NR_O}}$ from independent uniform distributions, Alice \emph{stochastically} encodes them into a sequence $\widetilde{u}^{N}$ drawn according to $p_{U^N|\Psi^UO}(\widetilde{u}^N|\widetilde{\psi}^U,\widetilde{o})$. The choice of rates above guarantees that the joint distribution $p_{\widetilde{U}^N\widetilde{\Psi}^U\widetilde{O}}$ approximates the distribution $p_{U^N\Psi^UO}$ in variational distance, so that disclosing $\widetilde{\psi}^U$ allows Bob and Eve to decode the sequence $\widetilde{u}^N$. \smallskip

\noindent\textbf{Secret and private message encoding.} Following the same approach, we introduce three indices $\psi^{V|U}\in\intseq{1}{2^{N\rho_{V|U}}}$, $s\in\intseq{1}{2^{NR_S}}$, and $m\in\intseq{1}{2^{NR_M}}$ by random binning on $v^N$ such that
\begin{itemize}
\item $\rho_{V|U}>\avgH{V|UY}$, to ensure\footnote{By Lemma~\ref{lm:1} with the substitutions $R \leftarrow \rho_{V|U}$, $\Phi(X^N) \leftarrow \Psi^{V|U}$,  $X \leftarrow V$, and $Y \leftarrow (U,Y)$.} that knowing $\Psi^{V|U}$, $U^N$, and $Y^N$, Bob may reconstruct $V^N$;
\item $\rho_{V|U}+R_S<\avgH{V|UZ}$ and $\rho_{V|U}+R_S+R_M<\avgH{V|U}$ to ensure\footnote{By Lemma~\ref{lm:2} with the substitutions $R \leftarrow (\rho_{V|U} + R_S)$, $\Psi(X^N) \leftarrow (\Psi^{V|U},S)$, $X \leftarrow V$, and $Z \leftarrow (U,Z)$, and with the substitutions $R \leftarrow (\rho_{V|U} + R_S + R_M)$, $\Psi(X^N) \leftarrow (\Psi^{V|U},S,M)$, $X \leftarrow V$, and $Z \leftarrow U$.} that the indices are almost uniformly distributed and independent of each other, as well as of $U^N$ or $(U^N,Z^N)$ for the secret message $S$.
\end{itemize}
The binning scheme induces a joint distribution $p_{V^NU^N\Psi^{V|U}SM}$. To obtain a channel coding scheme, Alice encodes the realizations of independent and uniformly distributed indices $\widetilde{\psi}^{V|U}\in\intseq{1}{2^{N\rho_{V|U}}}$, $\widetilde{s}\in\intseq{1}{2^{NR_S}}$,  $\widetilde{m}\in\intseq{1}{2^{NR_M}}$, and the sequence $ \widetilde{u}^N$, into a sequence $\widetilde{v}^N$ drawn according to the distribution $p_{V^N|U^N\Psi^{V|U}SM}(\widetilde{v}^N|\widetilde{u}^N,\widetilde{\psi}^{V|U},\widetilde{s},\widetilde{m})$. The resulting joint distribution is again a close approximation of $p_{V^NU^N\Psi^{V|U}SM}$, so that the scheme inherits the reliability and secrecy properties of the random binning scheme upon disclosing $\widetilde{\psi}^{V|U}$. \smallskip

\noindent\textbf{Channel prefixing.} Finally, we introduce the indices $\psi^{X|V}\in\intseq{1}{2^{N\rho_{X|V}}}$ and $r\in\intseq{1}{2^{NR_R}}$ by random binning on $x^N$ such that
\begin{itemize}
\item $\rho_{X|V}<\avgH{X|VZ}$ to ensure\footnote{By Lemma~\ref{lm:2} with the substitutions $R \leftarrow \rho_{X|V}$, $\Psi(X^N) \leftarrow \Psi^{X|V}$, and $Z \leftarrow (V,Z)$.} that $\Psi^{X|V}$ is independent of $V^N$ and $Z^N$;
\item $\rho_{X|V}+R_R<\avgH{X|V}$ to ensure\footnote{By Lemma~\ref{lm:2} with the substitutions $R \leftarrow (\rho_{X|V} + R_R)$, $\Psi(X^N) \leftarrow (\Psi^{X|V},R)$, and $Z \leftarrow V$.} that the indices are almost uniformly distributed and independent of each other, as well as of $V^N$.
\end{itemize}
The binning scheme induces a joint distribution $p_{X^NV^NU^N\Psi^{X|V}R}$. To obtain a channel prefixing scheme, Alice encodes the realizations of uniformly distributed indices $\widetilde{\psi}^{X|V}$ and $\widetilde{r}$, and the previously obtained $\widetilde{v}^N$ into a sequence $\widetilde{x}^N$ drawn according to $p_{X^N|V^N\Psi^{X|V}R}(\widetilde{x}^N|\widetilde{v}^N\widetilde{\psi}^{X|V}\widetilde{r})$. The resulting joint distribution induced is once again a close approximation of $p_{X^NV^NU^N\Psi^{X|V}R}$.\smallskip

\noindent\textbf{Chaining to de-randomize the codebooks.} The downside of the schemes described earlier is that they require sharing the indices $\widetilde{\psi}^{U}$, $\widetilde{\psi}^{V|U}$, and $\widetilde{\psi}^{X|V}$, identifying the codebooks between Alice, Bob, and Eve; however, the rate cost may be amortized by reusing the \emph{same} indices over sequences of $k$ blocks. Specifically, the union bound shows that the average error probability over $k$ blocks is at most $k$ times that of an individual block, and a hybrid argument shows that the information leakage over $k$ blocks is at most $k$ times that of an individual block. Consequently, for $k$ and $N$ large enough, the impact on the transmission rates is negligible.\smallskip

\noindent\textbf{Total amount of randomness.} The total amount of randomness required for encoding includes not only the explicit random numbers used for channel prefixing but also all the randomness required in the stochastic encoding to approximate the source distribution. One can show that the rate randomness specifically used in the stochastic encoding is negligible; we omit the proof of this result for random binning, but this is analyzed precisely for polar codes in Section~\ref{sec:analys-polar-coding}. 

By combining all the rate constraints above and performing Fourier-Motzkin elimination, one recovers the rates in Theorem~\ref{thm:watanabe}.

\subsection{Binning with polar codes}
\label{sec:binning-with-polar} 

The main observation to translate the analysis of Section~\ref{sec:rand-binn-secure} into a polar coding scheme is that Lemma~\ref{lm:1} and Lemma~\ref{lm:2} have the following counterparts in terms of source polarization. 
\begin{lem}[adapted from~{\cite{Arikan10}}]
  \label{lm:3}
  Consider a \ac{DMS} $(\calX\times\calY,p_{XY})$. For each $x^{1:N}\in\mathbb{F}_2^N$ polarized as $u^{1:N}\triangleq x^{1:N} G_n$, let $u^{1:N}[\calH_{X|Y}]$ denote the high entropy bits of $u^{1:N}$ in positions $\calH_{X|Y}\eqdef\{i\in\llbracket 1,N \rrbracket:\avgH{U^i|U^{1:i-1}Y^{1:N}}>\delta_N\}$ and $\delta_N\eqdef 2^{-N^\beta}$ with $\beta\in]0,\tfrac{1}{2}[$. For every $i\in\llbracket 1,N \rrbracket$, sample $\widetilde{u}^{1:N}$ from the distribution
  \begin{multline*}
    \widetilde{p}_{U^i|U^{1:i-1}}(\widetilde{u}^i|\widetilde{u}^{1:i-1}) \\
    \eqdef
      \begin{cases}
        \mathds{1}\left\{\widetilde{u}^i=u^i\right\}&\text{if }i\in\calH_{Y|X}\\
          p_{U^i|U^{1:i-1}Y^{1:N}}(\widetilde{u}^i|\widetilde{u}^{1:i-1}y^{1:N}) &\text{if }i\in\calH_{Y|X}^c
        \end{cases},
    \end{multline*}
    and create $\widetilde{x}^{1:N}=\widetilde{u}^{1:N}G_n$. Then,
    \begin{align*}
      \mathbb{P} \left[ \widetilde{X}^{1:N} \neq X^{1:N} \right] = O(N\delta_N),
    \end{align*}
    and $\displaystyle\lim_{N\rightarrow\infty} \frac{1}{N}\card{\calH_{X|Y}}=\avgH{X|Y}$.
  \end{lem}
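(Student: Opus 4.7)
The plan is to establish the cardinality claim via Arıkan's source polarization theorem for the pair $(X,Y)$, and then to bound the variational distance by a coupling argument that exploits the concentration of $p_{U^i|U^{1:i-1}Y^N}$ at the indices $i\in\calH_{X|Y}^c$.

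First I would invoke Arıkan's fast source polarization theorem \cite{Arikan10}: with $\delta_N=2^{-N^\beta}$ and $\beta\in(0,\tfrac{1}{2})$, the conditional entropies $\avgH{U^i|U^{1:i-1}Y^N}$ polarize in the sense that the fraction of indices with entropy above $\delta_N$ converges to $\avgH{X|Y}$ and the fraction with entropy below $\delta_N$ converges to $1-\avgH{X|Y}$. This immediately gives $\lim_{N\rightarrow\infty}|\calH_{X|Y}|/N=\avgH{X|Y}$, the intended reading of the cardinality statement.

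For the variational-distance bound, I would first use the bijectivity of $G_n$ on $\mathbb{F}_2^N$ to reduce the problem to controlling $\V{p_{U^{1:N}U^{1:N}},p_{U^{1:N}\tilde{U}^{1:N}}}$; since the first distribution is supported on the diagonal, this TV distance equals $\P{\tilde{U}^{1:N}\neq U^{1:N}}$ under the coupling built into the construction. Indices in $\calH_{X|Y}$ contribute no mismatches by construction, so only $i\in\calH_{X|Y}^c$ matter. Conditional on $\tilde{U}^{1:i-1}=U^{1:i-1}$, the sampling distribution at step $i$ is $p_{U^i|U^{1:i-1}Y^N}(\cdot|U^{1:i-1},Y^N)$, so the per-step mismatch probability is $2\mathbb{E}[p(0|\cdot)p(1|\cdot)]$ and is therefore upper-bounded by the Bhattacharyya parameter $Z(U^i|U^{1:i-1}Y^N)$. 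Using the standard relation $Z\leq\sqrt{2\,\avgH{U^i|U^{1:i-1}Y^N}}$ together with $\avgH{U^i|U^{1:i-1}Y^N}\leq\delta_N$ on $\calH_{X|Y}^c$, a union bound over at most $N$ indices yields a total mismatch probability of at most $N\sqrt{2\delta_N}$, which decays super-polynomially and can be absorbed into a (slightly weakened) $\delta_N$ at the cost of replacing $\beta$ by any $\beta'<\beta/2$.

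The main obstacle I anticipate is aggregating the per-step bounds: qualitative polarization alone would only give a vanishing, but not super-polynomially small, variational distance, so it is essential to use Arıkan's exponential polarization rate to absorb the union-bound factor of $N$. Beyond that, the argument is routine chain-rule bookkeeping, and the sets $\calH_{X|Y}$ and $\calH_{X|Y}^c$ surfacing here will be re-used verbatim to define the polarization sets of the coding scheme in Section~\ref{sec:polar-coding-schem}.
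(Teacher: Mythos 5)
The paper does not actually supply a proof of Lemma~\ref{lm:3}: it is stated as ``adapted from~\cite{Arikan10}'' and used as a black box, so there is no in-paper argument to compare against. Your proposal is nonetheless a sound reconstruction of what the citation delivers, and it follows the standard route. The reduction via bijectivity of $G_n$ to $\mathbb{P}[\tilde{U}^{1:N}\neq U^{1:N}]$ is correct (the total-variation distance between the diagonal coupling and the joint law of $(X^{1:N},\tilde{X}^{1:N})$ is exactly that mismatch probability), and the union-bound / first-disagreement argument over $i\in\calH_{X|Y}^c$ is the right skeleton. The cardinality statement is indeed Ar{\i}kan's source polarization theorem read as $|\calH_{X|Y}|/N\to\avgH{X|Y}$; you are right to read the lemma that way, and also to read the evident typo $\calH_{Y|X}$ in the sampling rule as $\calH_{X|Y}$.

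Two small remarks. First, the detour through the Bhattacharyya parameter is unnecessarily lossy. The per-step collision probability is $\E{2p(0|\cdot)p(1|\cdot)}$, and $2p(1-p)\leq H_b(p)$ pointwise, so the per-step term is bounded directly by $\avgH{U^i|U^{1:i-1}Y^N}\leq\delta_N$; summing over at most $N$ indices gives $N\delta_N$ rather than your $N\sqrt{2\delta_N}$. This is exactly the form of bound the paper itself obtains in Lemmas~\ref{lem_dist_A}--\ref{lemprefix}, and it is also what makes the later Pinsker-style bookkeeping (the $\sqrt{2\log 2}\sqrt{N\delta_N}$ factors) come out the way it does. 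Second, you are correct that the lemma's claimed bound ``$\leq\delta_N$'' is not literally what one obtains; the honest bound carries the extra polynomial factor, which can be absorbed by shrinking $\beta$. This is a harmless imprecision of the lemma as stated, and the rest of the paper works with the polynomial-factor versions anyway, so nothing downstream is affected.
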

In other words, the high entropy bits in positions $\calH_{X|Y}$ play the same role as the random binning index in Lemma~\ref{lm:1}. However, note that the construction of $\widetilde{x}^{1:N}$ in Lemma~\ref{lm:3} is explicitly stochastic. 

\begin{lem}[adapted from~{\cite{Chou14rev}}]
\label{lm:4}
    Consider a \ac{DMS} $(\calX\times\calZ,p_{XZ})$. For each $x^{1:N}\in\mathbb{F}_2^N$ polarized as $u^{1:N}\triangleq x^{1:N} G_n$, let $u^{1:N}[\calV_{X|Z}]$ denote the very high entropy bits of $u^{1:N}$ in positions $\calV_{X|Z}\eqdef\{i\in\llbracket 1,N \rrbracket:\avgH{U^i|U^{1:i-1}Z^{1:N}}>1-\delta_N\}$ and $\delta_N\eqdef 2^{-N^\beta}$ with $\beta\in]0,\tfrac{1}{2}[$. Denote by $q_{U}$ the uniform distribution over $ \llbracket 1 ,2^{|\calV_{X|Z}|} \rrbracket$. Then,
\begin{align*}
  \V{p_{U^{1:N}[\calV_{X|Z}]Z^{1:N}},q_{U}p_{Z^{1:N}}} = O(\sqrt{N\delta_N}),
\end{align*}
and $\displaystyle\lim_{N\rightarrow\infty}\frac{1}{N}\card{\calV_{X|Z}}=\avgH{X|Z}$ by \cite[Lemma 1]{Chou14rev}.
\end{lem}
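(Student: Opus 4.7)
My plan is to establish the variational-distance bound via a divergence bound followed by Pinsker's inequality, and to invoke the source polarization theorem for the cardinality claim. The intuition is that every index $i\in\calV_{X|Z}$ is, by construction, individually almost uniform and almost independent of $(U^{1:i-1},Z^{1:N})$; the main work is to patch these per-index approximations into a single joint approximation via the chain rule, which is the natural polar-code counterpart to the random-binning argument in Lemma~\ref{lm:2}.

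Concretely, I would abbreviate $\calV\eqdef\calV_{X|Z}$ and denote by $U^{\calV,<i}$ the bits of $u^{1:N}[\calV]$ with index below $i$. Applying the chain rule for relative entropy and using that the reference $q_{U^{1:N}[\calV]}$ is uniform on bits gives
\begin{align*}
\mathbb{D}\!\left(p_{U^{1:N}[\calV]Z^{1:N}}\,\|\,q_{U^{1:N}[\calV]}p_{Z^{1:N}}\right)=\sum_{i\in\calV}\left[1-\avgH{U^i\mid U^{\calV,<i},Z^{1:N}}\right].
\end{align*}
Since conditioning on fewer random variables cannot decrease entropy, $\avgH{U^i|U^{\calV,<i},Z^{1:N}}\geq\avgH{U^i|U^{1:i-1},Z^{1:N}}>1-\delta_N$ for $i\in\calV$, so each summand is at most $\delta_N$. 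Summing yields a divergence bound of $|\calV|\delta_N\leq N\delta_N$, and Pinsker's inequality then converts this into a variational distance of order $\sqrt{N\delta_N\ln 2/2}$. Because $\delta_N=2^{-N^\beta}$ with $\beta<\tfrac{1}{2}$, this quantity decays sub-exponentially in $N^\beta$ and can be absorbed into a redefinition of the vanishing function $\delta_N$ appearing in the statement.

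For the cardinality, I would simply quote \cite[Lemma 1]{Chou14rev} as a black box: the source polarization theorem of Ar{\i}kan guarantees $|\calV_{X|Z}|/N\to\avgH{X|Z}$, which is the content of the second assertion. The main technical subtlety I anticipate is the $\sqrt{N}$ factor that Pinsker introduces, which nominally degrades the bound from $\delta_N$ to $O(\sqrt{N\delta_N})$; this is harmless in downstream applications and can either be tolerated by slightly shrinking the polarization exponent $\beta$, or removed by controlling the pointwise likelihood ratio $p_{U^i|U^{\calV,<i}Z^{1:N}}/q_{U^i}$ directly from the near-uniformity of each conditional marginal and then summing total-variation contributions without ever passing through divergence. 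Either route gives a bound that remains sub-exponentially small in $N^\beta$, which suffices for every use of this lemma in the later analysis.
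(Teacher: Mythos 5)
Your argument is correct and is the standard one; the paper itself does not prove Lemma~\ref{lm:4} (it is quoted from~\cite{Chou14rev}), but your chain-rule/Pinsker route is exactly the technique the paper uses for analogous bounds elsewhere, e.g.\ in Appendix~\ref{App_lem_dist_A}--\ref{App_lemprefix} and in the mutual-information estimate inside Appendix~\ref{App_lem1c}, where $I\!\left(B^{1:N}[\calV_{V|UZ}];U^{1:N}Z^{1:N}\right)$ is bounded by $N\delta_N$ via conditioning-reduces-entropy and then fed into Pinsker. Your observation about the $\sqrt{N\delta_N}$ factor is also consistent with the paper's conventions: the displayed $\delta_N$ on the right-hand side of the lemma is a generic vanishing quantity, and the paper routinely absorbs such polynomial-in-$N$ prefactors (cf.\ $\delta_N^{(U)},\delta_N^{(UV)},\delta_N^{(P)}$) without renaming.
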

The very high entropy bits in positions $\calV_{X|Z}$ therefore play the same role as the random binning index in Lemma~\ref{lm:2}.

Intuitively, information theoretic constraints resulting from Lemma~\ref{lm:1} translate into the use of ``high entropy'' sets $\calH$, while those resulting from Lemma~\ref{lm:2}  translate into the use of ``very high entropy'' sets $\calV$. However, unlike the indices resulting from random binning, the high entropy and very high entropy sets may not necessarily be aligned, and the precise design of a polar coding scheme requires more care.

In the remainder of the paper, we consider a \ac{DMS} $(\calU\times\calV\times\calX\times\calY\times\calZ,p_{UVXYZ})$ such that $U-V-X-YZ,\text{ and }  I(V;Y|U) - I(V;Z|U) >0,\footnote{This avoids the trivial case of $R_S=0$ in Theorem \ref{thm:watanabe}, i.e., no secret information can be transmitted over the channel.}$ $|\mathcal{X}| = q^{(X)}$, with $q^{(X)}$ a prime number, $|\mathcal{U}|=q^{(U)}$, with $q^{(U)}$ the smallest prime number larger than $q^{(X)}+3$, and $ |\mathcal{V}| = q^{(V)}$, with $q^{(V)}$ the smallest prime number larger than $(q^{(X)}+3)(q^{(X)}+1)$. We also assume without loss of generality $I(U;Y) \leq I(U;Z)$, since the case $I(U;Y) > I(U;Z)$ is obtained by exchanging the role of $Y$ and $Z$ in the encoding scheme for the common messages, and by exchanging the role of Bob and Eve in the decoding of the common messages.

\noindent\textbf{Common message encoding.} Define the polar transform of $U^{1:N}$, as $A^{1:N} \triangleq U^{1:N} G_n$ and the associated sets
\begin{align}
 {\mathcal{H}}_U & \triangleq \left\{ i \in \llbracket 1,N \rrbracket  : H( A^i | A^{1:i-1}) >  \delta_N  \right\}, \label{eq:common_msg_sets_1}\\
{\mathcal{V}}_U & \triangleq \left\{ i \in \llbracket 1,N \rrbracket  : H( A^i | A^{1:i-1}) > \log_2(q^{(U)})-  \delta_N  \right\},\\
 {\mathcal{H}}_{U|Y} & \triangleq \left\{ i \in \llbracket 1,N \rrbracket  :  H( A^i | A^{1:i-1} Y^{1:N}) > \delta_N \right\},\\
 \mathcal{H}_{U|Z} & \triangleq \left\{ i \in \llbracket 1,N \rrbracket  :  H( A^i | A^{1:i-1} Z^{1:N}) > \delta_N \right\}. \label{eq:common_msg_sets_6}
\end{align}
If we could guarantee\footnote{In general, one only has ${\mathcal{V}}_U \subseteq {\mathcal{H}}_U$, ${\mathcal{H}}_{U|Y} \subseteq {\mathcal{H}}_U$, and ${\mathcal{H}}_{U|Z} \subseteq {\mathcal{H}}_U$.} that $\calH_{U|Z}\subseteq \calH_{U|Y}\subseteq\calV_{U}$, then we could directly mimic the information-theoretic random binning proof. We would use random $q^{(U)}$-ary symbols in positions $\calH_{U|Z}$ to identify the code, random $q^{(U)}$-ary symbols in positions $\calV_U\setminus \mathcal{H}_{U|Z}$ for the message, successive cancellation encoding to compute the $q^{(U)}$-ary symbols in positions $\calV_U^c$ and approximate the source distribution, and chaining to amortize the rate cost of the $q^{(U)}$-ary symbols in positions $\calH_{U|Z}$. Unfortunately, the inclusion $\calH_{U|Y}\subseteq \calH_{U|Z}$ is not true in general, and one must also use chaining as to ``realign'' the sets of indices. Furthermore, only the inclusions $\calH_{U|Z}\subseteq \calH_U$ and $\calH_{U|Y}\subseteq \calH_U$ are true in general, so that the $q^{(U)}$-ary symbols in positions $\calH_{U|Z}\cap\calV_U^c$ and $\calH_{U|Y}\cap\calV_U^c$ must be transmitted separately. The precise coding scheme is detailed in Section~\ref{sec:comm-mess-encod}.

\noindent\textbf{Secret and private messages encoding.} Define the polar transform of $V^{1:N}$ as $B^{1:N} \triangleq V^{1:N} G_n$ and the associated sets
\begin{align}
\mathcal{H}_{V|UY}  & \triangleq   \left\{ i \in \llbracket 1,N \rrbracket:  H( B^i | B^{1:i-1} U^{1:N}Y^{1:N}) > \delta_N  \right\}, \label{eq:private_msg_sets_1} \\ 
\mathcal{V}_{V|U}  &\triangleq  \left\{ i \in \llbracket 1,N \rrbracket:  H( B^i | B^{1:i-1} U^{1:N})\right.\nonumber\\
 &\left. \phantom{mmmmlmmmmmmm}>  \log_2(q^{(V)}) - \delta_N  \right\},  \\ \nonumber
\mathcal{V}_{V|UZ} & \triangleq  \left\{ i \in \llbracket 1,N \rrbracket: H( B^i | B^{1:i-1} U^{1:N}Z^{1:N}) \right.\nonumber\\
 &\left. \phantom{mmmmlmmmmmmm} > \log_2(q^{(V)}) - \delta_N  \right\}, \displaybreak[0]\\
 \mathcal{V}_{V|UY} &   \triangleq  \left\{ i \in \llbracket 1,N \rrbracket: H( B^i | B^{1:i-1} U^{1:N}Y^{1:N}) \right.\nonumber\\
 &\left. \phantom{mmmmmlmmmmmm}> \log_2(q^{(V)}) - \delta_N  \right\}.  \label{eq:private_msg_sets_5}
\end{align}
If the inclusion $\calH_{V|UY}\subseteq \calV_{V|UZ}$ were true,\footnote{In general, we only have ${\mathcal{V}}_{V|UZ} \subseteq {\mathcal{V}}_{V|U}$, ${\mathcal{V}}_{V|UY} \subseteq {\mathcal{H}}_{V|UY}$, and ${\mathcal{V}}_{V|UY} \subseteq {\mathcal{V}}_{V|U}$.} then we would place random $q^{(V)}$-ary symbols identifying the codebook in positions $\calH_{V|UY}$, random $q^{(V)}$-ary symbols describing the secret message  in positions $\calV_{V|UZ}\setminus \calH_{V|UY}$, random $q^{(V)}$-ary symbols describing the private message in positions $\calV_{V|U}\setminus \calV_{V|UZ}$, use successive cancellation encoding to compute the $q^{(V)}$-ary symbols in positions $\calV_{V|U}^c$ and approximate the source distribution, and use chaining to amortize the rate cost of the $q^{(V)}$-ary symbols in positions $\calH_{V|UY}$. This is unfortunately again not directly possible in general, and one needs to exploit chaining to realign the indices, and transmit the $q^{(V)}$-ary symbols in positions $\calH_{V|UY}\cap\calV_{V|U}^c$ separately and secretly to Bob. The precise coding scheme is detailed in Section~\ref{sec:secr-priv-mess}.

\noindent\textbf{Channel prefixing.} Finally, define the polar transform of $X^{1:N}$ as $T^{1:N} \triangleq X^{1:N} G_n$ and the associated sets
\begin{align}
 \mathcal{V}_{X|V} & \triangleq \left\{ i \in \llbracket 1,N \rrbracket: H( T^i | T^{1:i-1} V^{1:N}) \right.\nonumber\\
& \left. \phantom{mmmmmmmmmmml}> \log_2(q^{(X)}) - \delta_N  \right\}, \label{eq:randomization_msg_sets_1}\\
 \mathcal{V}_{X|VZ} & \triangleq \left\{ i \in \llbracket 1,N \rrbracket:  H( T^i | T^{1:i-1} V^{1:N}Z^{1:N}) \right.\nonumber\\
& \left. \phantom{mmmmmlmmmmmm} > \log_2(q^{(X)}) - \delta_N  \right\}. \label{eq:randomization_msg_sets_2}
\end{align}
Note that $\mathcal{V}_{X|V} \subseteq \mathcal{V}_{X|VZ}$. One performs channel prefixing by placing random $q^{(X)}$-ary symbols identifying the code in positions $\calV_{X|VZ}$, random $q^{(X)}$-ary symbols describing the randomization sequence in positions $\calV_{X|V}\setminus \calV_{X|VZ}$, and using successive cancellation encoding to compute the $q^{(X)}$-ary symbols in positions $\calV_{X|V}^c$ and approximate the source distribution. Chaining is finally used to amortize the cost of randomness for describing the code.  The precise coding scheme is detailed in Section~\ref{sec:channel-prefixing}.
\begin{rem}
Although we only formally prove it for the model considered in this paper, we conjecture that any results obtained from random binning could be derived using source polarization as a constructive and low-complexity alternative. This conjecture has been shown to hold for secret-key generation~\cite{Chou14rev}, uniform compression \cite[Section IV-B]{chou2015coding}, strong coordination \cite{Chou15b}, and channel resolvability \cite{Chou15b}. 
\end{rem}

\section{Polar coding scheme}
\label{sec:polar-coding-schem}

In this section, we describe the details of the polar coding scheme resulting from the discussion of the previous section. Recall that the joint probability distribution $p_{UVXYZ}$ of the original source is fixed and defined as in Section \ref{sec:binning-with-polar}. As alluded to earlier, we perform the encoding over $k$ blocks of size $N$. We use the subscript $i \in \llbracket 1, k \rrbracket$ to denote random variables associated to encoding Block $i$. The chaining constructions corresponding to the encoding of the common, secret, and private messages, and randomization sequence, are described in Section~\ref{sec:comm-mess-encod}, Section~\ref{sec:secr-priv-mess}, and Section~\ref{sec:channel-prefixing}, respectively. Although each chaining is described independently, all messages should be encoded in every block before moving to the next. Specifically, in every block $i\in\intseq{1}{k-1}$, Alice successively encodes the common message, the secret and private messages, and performs channel prefixing, before she moves to the next block $i+1$.

\begin{rem}
In the following, we construct random variables whose distributions approach target distributions. We use the tilde in the notation for these random variables to display this intention. For instance, we construct the random variable $\widetilde{U}^{1:N}$ with distribution $\widetilde{p}_{U^{1:N}}$ such that $\widetilde{p}_{U^{1:N}}$ approaches the distribution $p_{U^{1:N}}$ of the random variable $U^{1:N}$. We provide a precise analysis of the variational distance between the distribution of the ``tilded" random variables and the targeted distributions in Section~\ref{sec:appr-stat}. %
\end{rem}
\subsection{Common message encoding}
\label{sec:comm-mess-encod}
In addition to the polarization sets defined in~\eqref{eq:common_msg_sets_1}--\eqref{eq:common_msg_sets_6} we also define
\begin{align*}
  \mathcal{I}_{UY} & \triangleq {\mathcal{V}}_{U} \backslash {\mathcal{H}}_{U|Y},\\
\mathcal{I}_{UZ} & \triangleq {\mathcal{V}}_{U} \backslash {\mathcal{H}}_{U|Z},\\
\mathcal{A}_{UYZ}&\eqdef \text{a} \text{ subset\footnotemark\phantom{l}of $\mathcal{I}_{UZ} \backslash \mathcal{I}_{UY}$ with size $|  \mathcal{I}_{UY} \backslash \mathcal{I}_{UZ}|$.}
\end{align*}
\footnotetext{$\mathcal{A}_{UYZ}$ can be chosen as any subset of $\mathcal{I}_{UZ} \backslash \mathcal{I}_{UY}$, what matters is that $\mathcal{A}_{UYZ}$ is a subset of $\mathcal{I}_{UZ} \backslash \mathcal{I}_{UY}$ and inherits its properties.}
Note that $\mathcal{A}_{UYZ}$ exists because
\begin{align*}
  |\mathcal{I}_{UZ} \backslash \mathcal{I}_{UY}| - | \mathcal{I}_{UY}
  \backslash \mathcal{I}_{UZ}| = |\mathcal{I}_{UZ} |- |
  \mathcal{I}_{UY}|,
\end{align*}
and since we have assumed $I(U;Y)\leq I(U;Z)$, one can show with Lemmas \ref{lemcard_1}, \ref{lemcard_2},  $$\lim_{N \to \infty} (|\mathcal{I}_{UZ} |- |
  \mathcal{I}_{UY}|)/N \geq 0.$$
The encoding procedure with chaining is summarized in Figure~\ref{fig_atilde}.\smallskip
\begin{figure*}
\centering
\includegraphics[width=16.5cm]{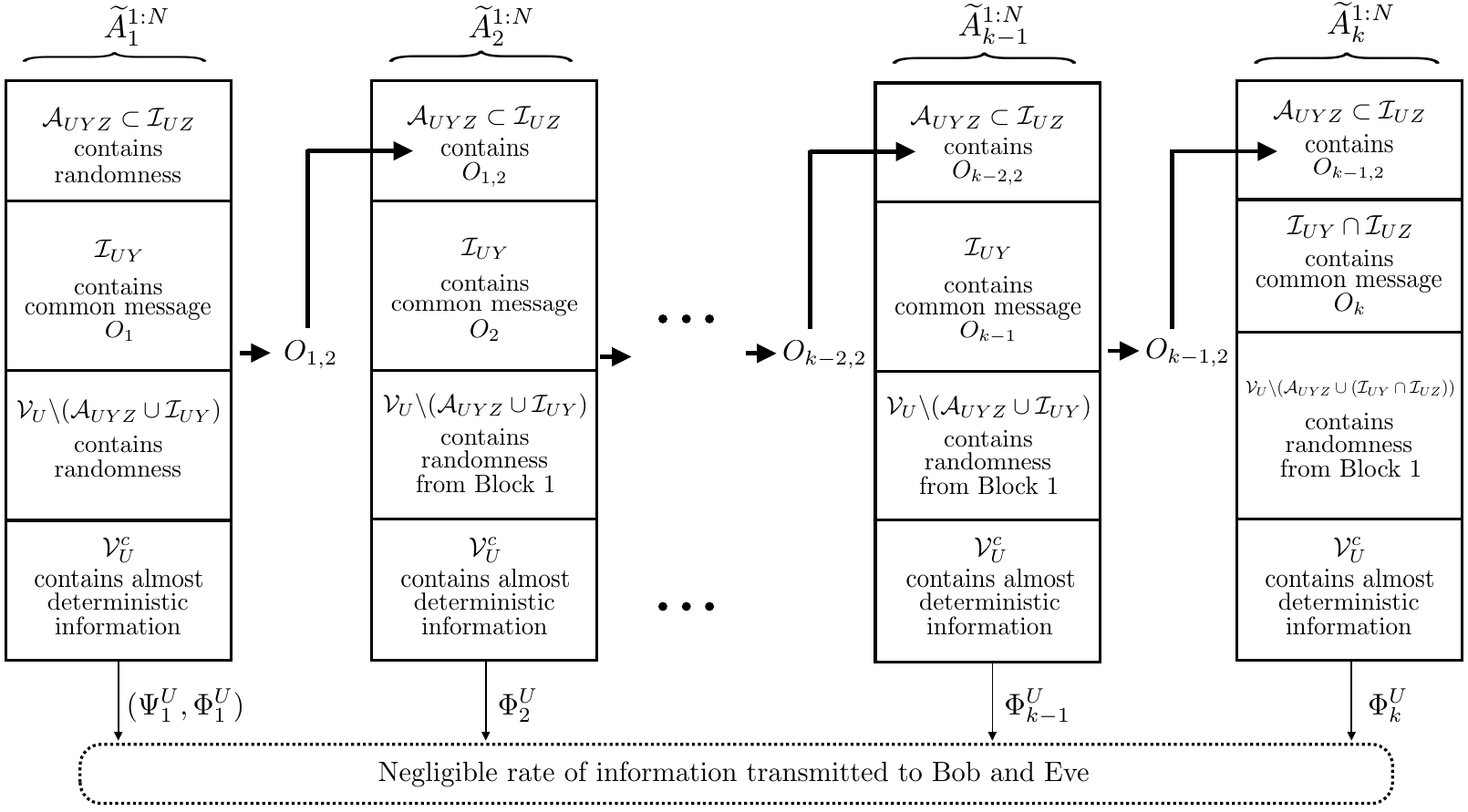}
  \caption{Chaining for the encoding of the $\widetilde{A}_i^{1:N}$'s, which corresponds to the encoding of the common messages. In Block $i \in \llbracket 1, k-1\rrbracket$, $\widetilde{A}_i^{1:N}$ is constructed from the common message $O_i$, the subsequence $O_{i-1,2} \triangleq \widetilde{A}_{i-1}^{1:N}[\mathcal{I}_{UY} \backslash \mathcal{I}_{UZ}]$ of the common message $O_{i-1}$, and part of the randomness $\Psi_1^U \triangleq \widetilde{A}_{1}^{1:N}[{\mathcal{V}}_{U} \backslash \mathcal{I}_{UY}  ]$ repeated from Block $1$. The remaining symbols of $\widetilde{A}_i^{1:N}$ are almost deterministic given $(O_i,O_{i-1,2},\Psi_1^U)$. Note that Block $k$ contains a smaller common message $O_k$ -- see the decoding scheme for more details. Finally, for all $i \in \llbracket 1, k\rrbracket$, $\Phi_i^U \triangleq \widetilde{A}_i^{1:N}[ ({\mathcal{H}}_{U|Y} \cup {\mathcal{H}}_{U|Z}) \cap {\mathcal{V}}_{U}^c]$, which is non-uniform and has negligible rate, is transmitted separately to Bob and Eve. $\Psi_1^U$ is also transmitted separately to Bob and Eve -- note that the rate of this transmission vanishes to zero as the number of blocks $k$ increases.}
  \label{fig_atilde}
\end{figure*}

In {Block ${1}$}, the encoder forms $\widetilde{U}_1^{1:N}$ as follows. Let ${O}_1$ be a vector of $|\mathcal{I}_{UY}|$ uniformly distributed $q^{(U)}$-ary symbols that represents  the common message to be reconstructed by Bob and Eve. Upon observing a realization $o_1$, the encoder samples $\widetilde{a}_1^{1:N}$ from the distribution $\widetilde{p}_{A_1^{1:N}}$ defined as 
\begin{multline} \label{eq_sim_A_1}
\widetilde{p}_{{A}_1^j|{A}_1^{1:j-1}} ({a}_1^j|{a}_1^{1:j-1}) \\ \triangleq
\begin{cases}
  \mathds{1} \left\{ a_1^j  = {o_1^j} \right\} & \text{if } j \in \mathcal{I}_{UY}\\
  1/q^{(U)} & \text{if }j \in {\mathcal{V}}_{U} \backslash \mathcal{I}_{UY}\\
  {p}_{A^j|A^{1:j-1}} (a_1^j|a_1^{1:j-1}) & \text{if }j\in {\mathcal{V}}_{U}^c
 \end{cases},
\end{multline}
where the components of $o_1$ have been indexed by the set of indices $\mathcal{I}_{UY}$ for convenience, so that $${O}_1 = \widetilde{A}_1^{1:N} [\mathcal{I}_{UY}].$$ The random $q^{(U)}$-ary symbols that identify the codebook and that are required to reconstruct $\widetilde{A}_1^{1:N}$ are $\widetilde{A}_1^{1:N}[ \mathcal{H}_{U|Z}]$ for Eve and  $\widetilde{A}_1^{1:N}[ \mathcal{H}_{U|Y}]$ for Bob. 
Moreover, we define
\begin{align*}
  \Psi^{U}_1 &\triangleq \widetilde{A}_1^{1:N}[ {\mathcal{V}}_{U} \backslash \mathcal{I}_{UY} ]=\widetilde{A}_1^{1:N}[ {\mathcal{V}}_{U} \cap\calH_{U|Y} ],\\
  \Phi^U_1 &\triangleq \widetilde{A}_1^{1:N}[ ({\mathcal{H}}_{U|Y} \cup {\mathcal{H}}_{U|Z}) \backslash {\mathcal{V}}_{U}].
\end{align*}
Both $\Psi^{U}_1$ and $\Phi^{U}_1$ are publicly transmitted to both Bob and Eve. Note that, unlike in the random binning proof, the use of polarization forces us to distinguish the part $\Psi^{U}_1$ that is nearly uniform from the part $\Phi^{U}_1$ that is not. We show later that the rate cost of this additional transmission is negligible. We also write 
$$O_{1} \triangleq [{O}_{1,1}, {O}_{1,2}],$$ where  
\begin{align*}
	{O}_{1,1} &\triangleq \widetilde{A}_1^{1:N} [\mathcal{I}_{UY} \cap \mathcal{I}_{UZ}], \\
	{O}_{1,2} &\triangleq \widetilde{A}_1^{1:N} [\mathcal{I}_{UY} \backslash \mathcal{I}_{UZ} ].
\end{align*} We will retransmit ${O}_{1,2}$ in the next block. Finally, we compute $$\widetilde{U}_1^{1:N} \triangleq \widetilde{A}_1^{1:N} G_n.$$

In {Block} ${i \in \llbracket 2,k-1\rrbracket}$, the encoder forms $\widetilde{U}_i^{1:N}$ as follows. Let $O_i$ be a vector of $|\mathcal{I}_{UY}|$ uniformly distributed $q^{(U)}$-ary symbols representing the common message in that block. Upon observing the realization ${o}_i$ and knowing $o_{i-1}$, the encoder draws $\widetilde{a}_i^{1:N}$ from the distribution $\widetilde{p}_{A_i^{1:N}}$ defined as follows.
\begin{align} 
&\widetilde{p}_{{A}_i^j|{A}_i^{1:j-1}} ({a}_i^j|{a}_i^{1:j-1}) \nonumber \\ & \phantom{l} \triangleq
\begin{cases}
  \mathds{1} \left\{ a_i^j  = {o_i^j} \right\} & \text{if } j \in \mathcal{I}_{UY}\\
  \mathds{1} \left\{ a_i^j  = {o_{i-1,2}^j} \right\} & \text{if } j \in \mathcal{A}_{UYZ} \\
  \mathds{1} \left\{ a_i^j  = (\psi^{U}_1)^j\right\} & \text{if } j \in {\mathcal{V}}_{U} \backslash (\mathcal{I}_{UY} \cup \mathcal{A}_{UYZ})  \\
    {p}_{A^j|A^{1:j-1}} (a_i^j|a_i^{1:j-1}) & \text{if }j\in {\mathcal{V}}_{U}^c
 \end{cases}, \label{eq_sim_A_i}
\end{align}
where the components of $o_i$, ${o_{i-1,2}}$, and $\psi^{U}_1$, have been indexed by the set of indices $\mathcal{I}_{UY}$, $\mathcal{A}_{UYZ}$, and $\mathcal{V}_{U} \backslash (\mathcal{I}_{UY} \cup \mathcal{A}_{UYZ})$, respectively. 
Consequently, note that $${O}_i = \widetilde{A}_i^{1:N} [\mathcal{I}_{UY}] \text{ and }O_{i-1,2} = \widetilde{A}_i^{1:N} [\mathcal{A}_{UYZ}].$$ The random $q^{(U)}$-ary symbols that identify the codebook and that are required to reconstruct $\widetilde{A}_i^{1:N}$ are $\widetilde{A}_i^{1:N} [\mathcal{H}_{U|Y}]$ for Bob and  $\widetilde{A}_i^{1:N}[ \mathcal{H}_{U|Z}]$ for Eve. %
We define
\begin{align*}
  \Psi^{U}_i &\triangleq \widetilde{A}_i^{1:N}[{\mathcal{V}}_{U} \backslash (\mathcal{I}_{UY} \cup \mathcal{A}_{UYZ})],\\
  \Phi^{U}_i &\triangleq \widetilde{A}_i^{1:N}[ ({\mathcal{H}}_{U|Y} \cup {\mathcal{H}}_{U|Z}) \backslash {\mathcal{V}}_{U}].
\end{align*}
Note that the $q^{(U)}$-ary symbols in $\Psi^{U}_i$ are reusing some of the $q^{(U)}$-ary symbols in $\Psi^{U}_1$; however, it is necessary to make the $q^{(U)}$-ary symbols $\Phi^{U}_i$ available to both Bob and Eve, to enable the reconstruction of $O_i$ -- See Remark \ref{remsubt}.i. We show later that this entails a negligible rate cost. Finally, we write \begin{align*}{O}_{i} \triangleq [{O}_{i,1}, {O}_{i,2}],\end{align*} where  
\begin{align*}
{O}_{i,1}\triangleq \widetilde{A}_i^{1:N} [\mathcal{I}_{UY} \cap \mathcal{I}_{UZ}], \\
{O}_{i,2}\triangleq\widetilde{A}_i^{1:N}  [\mathcal{I}_{UY} \backslash \mathcal{I}_{UZ}],
\end{align*} and we retransmit ${O}_{i,2}$ in the next block. We finally compute $$\widetilde{U}_i^{1:N} \triangleq \widetilde{A}_i^{1:N} G_n.$$

Finally, the encoder forms $\widetilde{U}_k^{1:N}$ in {Block} ${k}$, as follows. Let $O_k$ be a vector of $|\mathcal{I}_{UY} \cap \mathcal{I}_{UZ}|$ uniformly distributed $q^{(U)}$-ary symbols representing the common message in that block. Given realizations ${o}_k$ and $o_{k-1}$, the encoder samples $\widetilde{a}_k^{1:N}$ from the distribution $\widetilde{p}_{A_k^{1:N}}$ defined as follows.
\begin{align} \label{eq_sim_A_k}
&\widetilde{p}_{{A}_k^j|{A}_k^{1:j-1}} ({a}_k^j|{a}_k^{1:j-1}) \nonumber \\& \triangleq
\begin{cases}
  \mathds{1} \left\{ a_k^j  = {o_k^j} \right\}  &\text{if } j \in \mathcal{I}_{UY} \cap \mathcal{I}_{UZ}\\
    \mathds{1} \left\{ a_k^j  = {o_{k-1,2}^j} \right\} &\text{if } j \in \mathcal{A}_{UYZ}  \\
  \mathds{1} \left\{ a_k^j  = (\psi^{U}_1)^j\right\} &\!\!\!\!\!\!\!\!\! \text{if }j \in {\mathcal{V}}_{U} \backslash ( \mathcal{A}_{UYZ} \cup(\mathcal{I}_{UY} \cap \mathcal{I}_{UZ}) )  \\
  {p}_{A^j|A^{1:j-1}} (a_k^j|a_k^{1:j-1})  &\text{if }j\in {\mathcal{V}}_{U}^c
 \end{cases}
\end{align}
where the components of $o_k$, ${o_{k-1,2}}$, and $\psi^{U}_1$ have been indexed by the set of indices $\mathcal{I}_{UY} \cap \mathcal{I}_{UZ}$, $\mathcal{A}_{UYZ}$, and ${\mathcal{V}}_{U} \backslash ( \mathcal{A}_{UYZ} \cup(\mathcal{I}_{UY} \cap \mathcal{I}_{UZ}) )$, respectively. Consequently, 
$${O}_k = \widetilde{A}_k^{1:N} [\mathcal{I}_{UY} \cap \mathcal{I}_{UZ}], \text{ }O_{k-1,2} = \widetilde{A}_k^{1:N} [\mathcal{A}_{UYZ}].$$
The random $q^{(U)}$-ary symbols that identify the codebook and that are required to reconstruct $\widetilde{A}_k^{1:N}$ are $\widetilde{A}_k^{1:N} [\mathcal{H}_{U|Y}]$ for Bob and  $\widetilde{A}_k^{1:N}[ \mathcal{H}_{U|Z}]$ for Eve. %
We define
\begin{align*}
  \Psi^{U}_k &\triangleq \widetilde{A}_k^{1:N}[{\mathcal{V}}_{U} \backslash ( \mathcal{A}_{UYZ} \cup(\mathcal{I}_{UY} \cap \mathcal{I}_{UZ}) )],\\
  \Phi^{U}_k &\triangleq \widetilde{A}_k^{1:N}[({\mathcal{H}}_{U|Y} \cup {\mathcal{H}}_{U|Z}) \backslash {\mathcal{V}}_{U}],
\end{align*}
and note that $\Psi^{U}_k$ merely reuses some of the $q^{(U)}$-ary symbols of $\Psi^{U}_1$. $\Phi^{U}_k$ is made available to both Bob and Eve to help them reconstruct $O_k$, but this incurs a negligible rate cost. We finally compute $$\widetilde{U}_k^{1:N} \triangleq \widetilde{A}_k^{1:N} G_n.$$

The public transmission of $(\Psi^{U}_1,\Phi^{U}_{1:k})$ to perform the reconstruction of the common message is taken into account in the secrecy analysis in Section~\ref{sec:analys-polar-coding}.

\subsection{Secret and private message encoding}
\label{sec:secr-priv-mess}
In addition to the polarization set defined in~\eqref{eq:private_msg_sets_1}--\eqref{eq:private_msg_sets_5}, we also define
\begin{align*}
  \mathcal{B}_{V|UY}&\eqdef \text{a subset\footnotemark\phantom{l}of $\mathcal{V}_{V|UZ}$ with size $| {\mathcal{H}}_{V|UY}\cap {\mathcal{V}}_{V|U}|$}\\
  \mathcal{M}_{UVZ}&\triangleq \mathcal{V}_{V|U}\backslash \mathcal{V}_{V|UZ}.
\end{align*}
\footnotetext{$\mathcal{B}_{V|UY}$ can be chosen as any subset of $\mathcal{V}_{V|UZ}$, what matters is that $\mathcal{B}_{V|UY}$ is a subset of $\mathcal{V}_{V|UZ}$ and inherits its properties.}
The encoding procedure with chaining is summarized in Fig.~\ref{fig_btilde}.
\begin{figure*} 
\centering
  \includegraphics[width=16.cm]{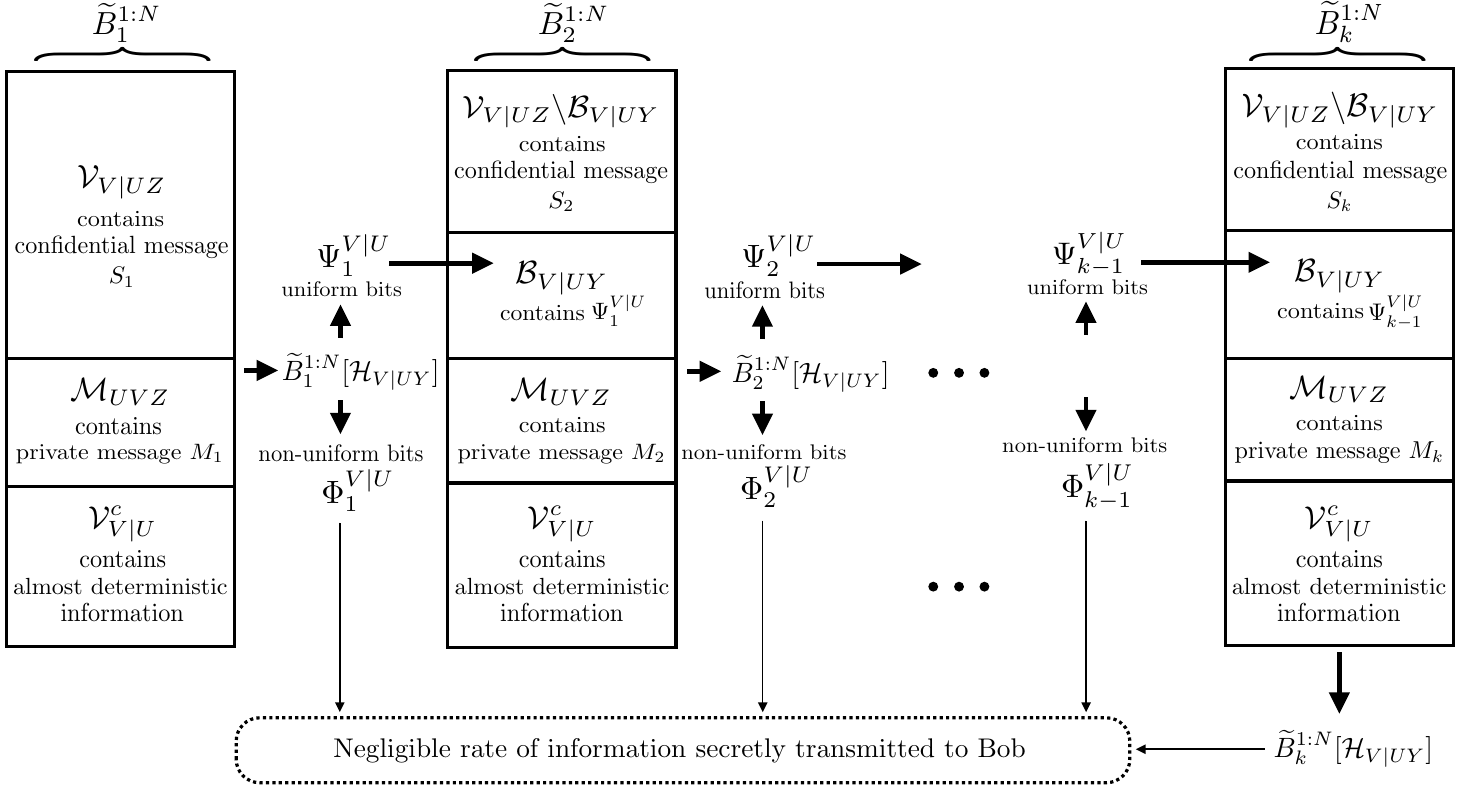}
  \caption{Chaining for the encoding of the $\widetilde{B}_i^{1:N}$'s, which corresponds to the encoding of the private and confidential messages. In Block $i\in \llbracket 1, k\rrbracket$, $\widetilde{B}_i^{1:N}$ is constructed from the confidential message $S_i$, the private message $M_i$, and the subsequence $\Psi^{V|U}_{i-1}$ of the previous block $\widetilde{B}_{i-1}^{1:N}$. The remaining symbols of $\widetilde{B}_i^{1:N}$ are almost deterministic given $(S_i,M_i,\Psi^{V|U}_{i-1})$.
   Note that $(\Psi_i^{V|U},\Phi_i^{V|U})$ is the information necessary to the legitimate receiver to recover~$\widetilde{B}_i^{1:N}$. Note also that $\Psi_i^{V|U}$ is uniform and repeated in Block $i+1$, whereas $\Phi_i^{V|U}$, whose rate is negligible, is non-uniform and secretly transmitted to the legitimate receiver with a one-time pad. Finally, $\widetilde{B}_k^{1:N}[\mathcal{H}_{V|UY}]$ is also secretly transmitted to the legitimate receiver with a one-time pad, and the rate of this transmission vanishes to zero as the number of blocks $k$ increases.}
  \label{fig_btilde}
\end{figure*}

In Block ${1}$, the encoder forms $\widetilde{V}_1^{1:N}$ as follows. Let ${S}_1$ be a vector of $|\mathcal{V}_{V|UZ}|$ uniformly distributed $q^{(V)}$-ary symbols representing the secret message and let ${M}_1$ be a vector of $|\mathcal{M}_{UVZ}|$ uniformly distributed $q^{(V)}$-ary symbols representing the private message to be reconstructed by Bob. Given a confidential message ${s}_1$, a private message $m_1$, and $\widetilde{u}_1^{1:N}$ resulting from the encoding of the common message, the encoder samples $\widetilde{b}_1^{1:N}$ from the distribution $\widetilde{p}_{B_1^{1:N}}$ defined as follows.
\begin{align} 
&\widetilde{p}_{{B}_1^j|{B}_1^{1:j-1}U_1^{1:N}} ({b}_1^j|{b}_1^{1:j-1} \widetilde{u}_1^{1:N}) \nonumber \\
& \triangleq 
\begin{cases}
   \mathds{1} \left\{ b_1^j = {s}_1^j \right\} & \text{if } j \in \mathcal{V}_{V|UZ}\\
  \mathds{1} \left\{ b_1^j = {m}_1^j \right\} & \text{if }  j \in \mathcal{M}_{UVZ}\\
  {p}_{B^j|B^{1:j-1}U^{1:N}} (b_1^j|b_1^{1:j-1} \widetilde{u}_1^{1:N}) & \text{if }j\in {\mathcal{V}}_{V|U}^c
 \end{cases}, \label{eq_sim_Bv_1}
\end{align}
where the components of $s_1$ and $m_1$ have been indexed by the set of indices $\mathcal{V}_{V|UZ}$  and $\mathcal{M}_{UVZ}$, respectively. Consequently, note that 
\begin{align*}
	S_1 & = \widetilde{B}_1^{1:N} [\mathcal{V}_{V|UZ}],\\
	M_1 & = \widetilde{B}_1^{1:N} [\mathcal{M}_{UVZ}].
\end{align*}
 The random $q^{(V)}$-ary symbols that identify the codebook required for reconstruction are those in positions $\calH_{V|UY}$, which we split as
 \begin{align*}
 &\Psi^{V|U}_1 \triangleq  \widetilde{B}_1^{1:N}[ {\mathcal{H}}_{V|UY}\cap {\mathcal{V}}_{V|U}] ,\\
 &\Phi^{V|U}_1 \triangleq \widetilde{B}_1^{1:N}[{\mathcal{H}}_{V|UY} \cap {\mathcal{V}}_{V|U}^c].
 \end{align*}
Note that $\Psi^{V|U}_1$ is uniformly distributed but $\Phi^{V|U}_1$ is not. Consequently, we may reuse $\Psi^{V|U}_1$ in the next block but we cannot reuse $\Phi^{V|U}_1$. We instead share $\Phi^{V|U}_1$ secretly between Alice and Bob and we show later that this may be accomplished with negligible rate cost. Finally, define $$\widetilde{V}_1^{1:N} \triangleq \widetilde{B}_1^{1:N} G_n.$$%

In Block ${i \in \llbracket 2,k\rrbracket}$, the encoder forms $\widetilde{V}_i^{1:N}$ as follows. Let $S_i$ be a vector of $|\mathcal{V}_{V|UZ}  \backslash \mathcal{B}_{V|UY}|$ uniformly distributed $q^{(V)}$-ary symbols and $M_i$ be a vector of $|\mathcal{M}_{UVZ}|$ uniformly distributed $q^{(V)}$-ary symbols  that represent the secret and private message in Block $i$, respectively. Given a private message $m_i$,  a confidential message $s_i$, $\psi^{V|U}_{i-1}$, and $\widetilde{u}_{i}^{1:N}$ resulting from the encoding of the common message, the encoder draws $\widetilde{b}_i^{1:N}$ from the distribution $\widetilde{p}_{B_i^{1:N}}$ defined as follows.\\

\begin{align}
&\widetilde{p}_{{B}_i^j|{B}_i^{1:j-1} U_i^{1:N}} ({b}_i^j|{b}_i^{1:j-1}\widetilde{u}_{i}^{1:N}) \nonumber\\
& \triangleq
\begin{cases}
 \mathds{1} \left\{ b_i^j = s_i^j \right\} & \text{if } j \in \mathcal{V}_{V|UZ}  \backslash \mathcal{B}_{V|UY}\\
  \mathds{1} \left\{ b_i^j = \left(\psi^{V|U}_{i-1}\right)^j \right\} & \text{if } j \in  \mathcal{B}_{V|UY}\\
    \mathds{1} \left\{ b_i^j = m_{i}^j \right\} & \text{if } j \in \mathcal{M}_{UVZ} \\
  {p}_{B^j|B^{1:j-1}U^{1:N}} (b_i^j|b_i^{1:j-1}\widetilde{u}_{i}^{1:N}) & \text{if }j\in {\mathcal{V}}_{V|U}^c
 \end{cases},\label{defsimBi} 
\end{align}
where the components of $s_i$, $\psi^{V|U}_{i-1}$, and $m_i$ have been indexed by the set of indices $\mathcal{V}_{V|UZ}  \backslash \mathcal{B}_{V|UY}$,  $ \mathcal{B}_{V|UY}$, and $\mathcal{M}_{UVZ}$ respectively, so that
\begin{align*}
S_i & = \widetilde{B}_i^{1:N} [\mathcal{V}_{V|UZ} \backslash \mathcal{B}_{V|UY}], \\	
\Psi^{V|U}_{i-1} & = \widetilde{B}_i^{1:N} [ \mathcal{B}_{V|UY}], \\
M_i & = \widetilde{B}_i^{1:N} [\mathcal{M}_{UVZ}].
\end{align*}
 The random $q^{(V)}$-ary symbols that identify the codebook required for reconstruction are those in positions $\calH_{V|UY}$, which we split as
 \begin{align*}
 &\Psi^{V|U}_i \triangleq  \widetilde{B}_i^{1:N}[{\mathcal{H}}_{V|UY} \cap {\mathcal{V}}_{V|U}] ,\\
&\Phi^{V|U}_i \triangleq \widetilde{B}_i^{1:N}[{\mathcal{H}}_{V|UY} \cap {\mathcal{V}}_{V|U}^c].
 \end{align*}
Again, $\Psi^{V|U}_i$ is uniformly distributed but $\Phi^{V|U}_i$ is not, so that we reuse $\Psi^{V|U}_i$ in the next block but we share $\Phi^{V|U}_i$ securely between Alice and Bob. We show later that the cost of sharing $\Phi^{V|U}_i$ is negligible. We then define $$\widetilde{V}_i^{1:N} \triangleq \widetilde{B}_i^{1:N} G_n.$$

In Block $k$, Alice securely shares $\left(\Psi^{V|U}_{k},\Phi^{V|U}_{1:k}\right)$ with Bob as follows. Alice performs a modulo-$q^{(V)}$ addition between $\left(\Psi^{V|U}_{k},\Phi^{V|U}_{1:k}\right)$ and a secret seed, i.e., a uniform sequence of $q^{(V)}$-ary symbols privately shared with Bob. Alice sends the result, which is a uniform sequence of $q^{(V)}$-ary symbols, to Bob by means of a channel polar code \cite{Sasoglu09}.\footnote{Note that a basic construction that achieves the symmetric capacity of the channel is sufficient here, as the length of the sequence transmitted is negligible compared to the overall blocklength $kN$.} Although this transmission incurs a rate loss, the later vanishes to zero as the length of the transmission is negligible compared to the overall blocklength $kN$. This point is detailed in Section~\ref{sec:coding-rates}.

\begin{rem} \label{remsubt}
The encoding of the secret messages requires a small pre-shared seed between the legitimate users for the two following reasons.
\begin{enumerate}[(i)]
\item In Lemma \ref{lm:1}, one cannot replace $\mathcal{H}_{X|Y}$ by $$\calV_{X|Y}\eqdef\{i\in\llbracket 1,N \rrbracket:\avgH{U^i|U^{1:i-1}Y^N}>1-\delta_N\},$$ i.e., $U^{1:N}$ \emph{cannot} be losslessly reconstructed from $U^{1:N}[\mathcal{V}_{X|Y}]$ and $Y^{1:N}$, although $|\mathcal{H}_{X|Y} | -| \mathcal{V}_{X|Y}|=o(N)  \text{ \cite[Lemma 1]{Chou14rev}}.$ This results from the trade-off between lossless source coding and the intrinsic randomness problem~\cite{Han05,Hayashi08,Chou13}. This translates in our coding scheme by the partition of $\widetilde{B}_i^{1:N}[\calH_{V|Y}]$ into $\Psi^{V}_i$ and $\Phi^{V}_i$, $i \in \llbracket 1,k\rrbracket$, where the non-uniform part $\Phi^{V}_i$ is secretly transmitted  from Alice to Bob thanks to a small pre-shared secret seed.
\item To deal with unaligned indices due to the potentially non-degraded channels, chaining also requires to secretly transmit $\Psi_k^V$ with a pre-shared secret seed in the last encoding block. 
\end{enumerate} 
\end{rem} 
\subsection{Channel prefixing}
\label{sec:channel-prefixing}
The channel prefixing procedure with chaining is illustrated in Fig.~\ref{fig_ttilde}.
\begin{figure}
\centering
  \includegraphics[width=8.7cm]{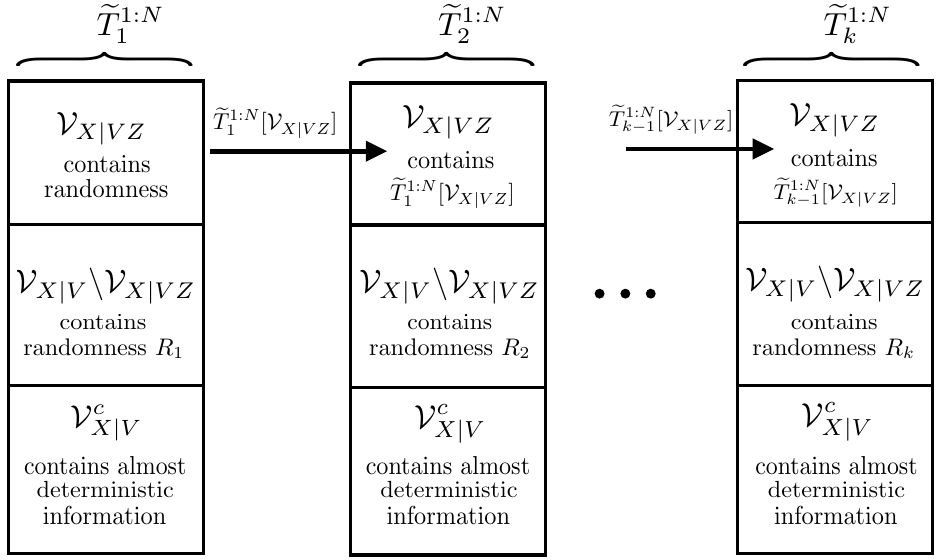}
  \caption{Chaining for the encoding of the $\widetilde{T}_i^{1:N}$'s, which corresponds to channel prefixing.  In Block $i\in \llbracket 1, k\rrbracket$, $\widetilde{T}_i^{1:N}$ is constructed from the randomness $R_i$, and the subsequence $\Psi^{X|V}_{i-1} \triangleq \widetilde{T}_{i-1}^{1:N}[\mathcal{V}_{X|VZ}] = \Psi^{X|V}_{1}$. The remaining symbols of $\widetilde{T}_i^{1:N}$ are almost deterministic given $(R_i,\Psi^{X|V}_{i-1})$.}
  \label{fig_ttilde}
\end{figure}
\smallskip

In Block ${1}$, the encoder forms $\widetilde{X}_1^{1:N}$ as follows. Let ${R}_1$ be a vector of $| \mathcal{V}_{X|V} \backslash \mathcal{V}_{X|VZ}|$ uniformly distributed $q^{(X)}$-ary symbols representing the randomness required for channel prefixing. Given a randomization sequence $r_1$ and $\widetilde{v}_1^{1:N}$ resulting from the encoding of secret and private messages, the encoder draws $\widetilde{t}_1^{1:N}$ from the distribution $\widetilde{p}_{T_1^{1:N}}$ defined as follows.
\begin{align}
& \widetilde{p}_{T_1^j|T_1^{1:j-1}V_1^{1:N}} (t_1^j|t_1^{1:j-1}\widetilde{v}_1^{1:N}) \nonumber\\ 
& \triangleq
\begin{cases}
  1/q^{(X)} & \text{if }j \in  \mathcal{V}_{X|VZ}\\
    \mathds{1} \left\{ t_1^j = r_{1}^j \right\}&\text{if }j\in \mathcal{V}_{X|V} \backslash \mathcal{V}_{X|VZ}\\
  {p}_{T^j|T^{1:j-1}V^{1:N}} (t_1^j|t_1^{1:j-1}\widetilde{v}_1^{1:N}) & \text{if }j\in \mathcal{V}_{X|V}^c
 \end{cases},  \label{defsimT_1} 
\end{align}
where the components of $r_1$ have been indexed by the set of indices $\mathcal{V}_{X|V} \backslash \mathcal{V}_{X|VZ}$, so that 
$$R_1 = \widetilde{T}_i^{1:N}[\mathcal{V}_{X|V} \backslash \mathcal{V}_{X|VZ}].$$
The random $q^{(X)}$-ary symbols that identify the codebook are those in position $\calV_{X|VZ}$, which we denote
\begin{align*}
  \Psi^{X|V}_1 \eqdef \widetilde{T}_1^{1:N}[\mathcal{V}_{X|VZ}].
\end{align*}
Finally, compute $$\widetilde{X}_1^{1:N} \triangleq \widetilde{T}_1^{1:N} G_n,$$ which is transmitted over the channel $W_{YZ|X}$. We note $Y_1^{1:N}$, $Z_1^{1:N}$ the corresponding channel outputs.
\smallskip

In {Block} ${i \in \llbracket 2,k\rrbracket}$, the encoder forms $\widetilde{X}_i^{1:N}$ as follows. Let ${R}_i$ be a vector of $| \mathcal{V}_{X|V} \backslash \mathcal{V}_{X|VZ}|$ uniformly distributed $q^{(X)}$-ary symbols representing the randomness required for channel prefixing in Block $i$. Given a randomization sequence $r_i$ and $\widetilde{v}_i^{1:N}$ resulting from the encoding of secret and private messages, the encoder draws $\widetilde{t}_i^{1:N}$ from the distribution $\widetilde{p}_{T_i^{1:N}}$ defined as follows.
\begin{align} 
&\widetilde{p}_{T^j_i|T_i^{1:j-1}V_i^{1:N}} (t_i^j|t_i^{1:j-1}\widetilde{v}_i^{1:N}) \nonumber \\ 
& \triangleq
\begin{cases}
\mathds{1} \left\{ t^j_i =  \widetilde{t}^j_{i-1} \right\} & \text{if } j\in \mathcal{V}_{X|VZ}  \\
\mathds{1} \left\{ t_i^j = r_{i}^j \right\} & \text{if }j \in \mathcal{V}_{X|V} \backslash \mathcal{V}_{X|VZ}\\
  {p}_{T^j|T^{1:j-1}V^{1:N}} (t_i^j|t_i^{1:j-1}\widetilde{v}_i^{1:N}) & \text{if }j \in \mathcal{V}_{X|V}^c
 \end{cases}, \label{defsimTi}
\end{align}
where the components of $r_i$ have been indexed by the set of indices $\mathcal{V}_{X|V} \backslash \mathcal{V}_{X|VZ}$, so that $$R_i = \widetilde{T}_i^{1:N}[\mathcal{V}_{X|V} \backslash \mathcal{V}_{X|VZ}].$$
Note that the random $q^{(X)}$-ary symbols describing the codebook are $$  \Psi^{X|V}_{i} \eqdef \widetilde{T}_i^{1:N}[\mathcal{V}_{X|VZ}],$$ and are reused from the previous block. Finally, define $$\widetilde{X}_i^{1:N} \triangleq \widetilde{T}_i^{1:N} G_n$$ and transmit it over the channel $W_{YZ|X}$. We denote the corresponding channel outputs by $Y_i^{1:N}$ and $Z_i^{1:N}$.

\subsection{Decoding}
\label{sec:decoding}

Reconstruction of the common message by Bob and Eve follows the idea of \cite{Mondelli14b}, i.e., backward decoding for Eve and forward decoding for Bob. More specifically, the decoding procedure is as follows.

\noindent \textbf{Reconstruction of the common message by Bob.} Bob forms the estimate $\widehat{A}_{1:k}^{1:N}$ of $\widetilde{A}_{1:k}^{1:N}$ as follows. In Block 1, Bob knows $(\Psi^U_1,\Phi^U_1)$, which contains all the $q^{(U)}$-ary symbols $\widetilde{A}_1^{1:N} [  \mathcal{H}_{U|Y}]$ by construction. Bob runs the successive cancellation decoder for source coding with side information of~\cite{Arikan10} using $Y_1^{1:N}$ and $\widetilde{A}_1^{1:N} [  \mathcal{H}_{U|Y}]$ to form ${{\widehat{A}}}_{1}^{1:N}$, an estimate of ${{\widetilde{A}}}_{1}^{1:N}$. In Block $i \in \llbracket 2, k\rrbracket$, Bob estimates $\widetilde{A}_i^{1:N} [  \mathcal{H}_{U|Y}]$ with  $(\Psi^U_1,\widehat{A}_{i-1}^{1:N}[ \mathcal{I}_{UY} \backslash \mathcal{I}_{UZ}], \Phi^U_i)$,\footnote{Observe that $ [{\mathcal{V}}_{U} \backslash (\mathcal{I}_{UY} \cup \mathcal{A}_{UYZ})] \cup \mathcal{A}_{UYZ} \cup [({\mathcal{H}}_{U|Y} \cup {\mathcal{H}}_{U|Z}) \backslash {\mathcal{V}}_{U}] \supset \mathcal{H}_{U|Y}$.} and uses this estimate along with $Y_i^{1:N}$ to run the successive cancellation decoder for source coding with side information to form ${{\widehat{A}}}_{i}^{1:N}$, an estimate of ${{\widetilde{A}}}_{i}^{1:N}$.\smallskip%

\noindent\textbf{Reconstruction of the common message by Eve.} Eve forms the estimate $\widehat{\widehat{A}}_{1:k}^{1:N}$ of $\widetilde{A}_{1:k}^{1:N}$ starting from Block $k$ and going backwards as follows. In Block $k$, Eve knows $(\Psi^U_k,\Phi^U_k)$, which contains all the $q^{(U)}$-ary symbols in $ \widetilde{A}_k^{1:N} [  \mathcal{H}_{U|Z}]$ by construction.\footnote{Using that $\mathcal{A}_{UYZ}$ is a subset of $\mathcal{I}_{UZ} \backslash \mathcal{I}_{UY}$, observe that $ [{\mathcal{V}}_{U} \backslash ( \mathcal{A}_{UYZ} \cup(\mathcal{I}_{UY} \cap \mathcal{I}_{UZ}) ) ] \cup [({\mathcal{H}}_{U|Y} \cup {\mathcal{H}}_{U|Z}) \backslash {\mathcal{V}}_{U}] \supset \mathcal{H}_{U|Z}$. } Eve runs the successive cancellation decoder for source coding with side information using $Z_k^{1:N}$ and $\widetilde{A}_k^{1:N} [  \mathcal{H}_{U|Z}]$ to form $\smash{\widehat{\widehat{A}}}_{k}^{1:N}$, an estimate of ${{\widetilde{A}}}_{k}^{1:N}$. For $i \in \llbracket 1, k-1  \rrbracket$, Eve estimates $\widetilde{A}_{k-i}^{1:N}[ \mathcal{H}_{U|Z} ] $ with $(\Psi^U_1,\smash{\widehat{\widehat{A}}}_{k-i+1}^{1:N}[ \mathcal{A}_{UYZ} ], \Phi^U_{k-i} )$,\footnote{Using that $\mathcal{A}_{UYZ}$ is a subset of $\mathcal{I}_{UZ} \backslash \mathcal{I}_{UY}$, observe that $ [{\mathcal{V}}_{U} \backslash (\mathcal{I}_{UY} \cup \mathcal{A}_{UYZ})] \cup [\mathcal{I}_{UY} \backslash \mathcal{I}_{UZ}] \cup [({\mathcal{H}}_{U|Y} \cup {\mathcal{H}}_{U|Z}) \backslash {\mathcal{V}}_{U}] \supset \mathcal{H}_{U|Z}$. } and uses this estimate along with $Z_{k-i}^{1:N}$ to run the successive cancellation decoder for source coding with side information to form $\smash{\widehat{\widehat{A}}}_{k-i}^{1:N}$, an estimate of ${{\widetilde{A}}}_{k-i}^{1:N}$.\smallskip

\noindent\textbf{Reconstruction of the private and confidential messages by Bob.} Bob forms the estimate $\widehat{B}_{1:k}^{1:N}$ of $\widetilde{B}_{1:k}^{1:N}$ as follows starting with Block $k$. In Block $k$, given $(\Psi^{V|U}_{k},\Phi^{V|U}_{k} , Y_{k}^{1:N}, \widehat{U}_{k}^{1:N}) $, Bob forms $\widehat{B}_k^{1:N}$, an estimate of $\widetilde{B}_k^{1:N}$, with the successive cancellation decoder for source coding with side information. From $\widehat{B}_k^{1:N}$, an estimate $\widehat{\Psi}^{V|U}_{k-1}\eqdef\widehat{B}_k^{1:N} [{\mathcal{V}}_{V|UY}] $ of $\Psi^{V|U}_{k-1}$ is formed. For $i \in \llbracket 1,k-1 \rrbracket$, given $(\widehat{\Psi}^{V|U}_{k-i}, \Phi^{V|U}_{k-i} , Y_{k-i}^{1:N}, \widehat{U}_{k-i}^{1:N}) $, Bob forms $\widehat{B}_{k-i}^{1:N}$, an estimate of $\widetilde{B}_{k-i}^{1:N}$, with the successive cancellation decoder for source coding with side information. From $\widehat{B}_{k-i}^{1:N}$, an estimate of ${\Psi}^{V|U}_{k-i-1}$ is formed. Once all the estimates $\widehat{B}_{1:k}^{1:N}$ have been formed, Bob forms the estimates $\widehat{S}_{1:k}$ and $\widehat{M}_{1:k}$ of $S_{1:k}$ and $M_{1:k}$, respectively.

\section{Analysis of the Polar coding scheme}
\label{sec:analys-polar-coding}

We now analyze in details the characteristics and performances of the polar coding scheme described in Section~\ref{sec:polar-coding-schem}. Specifically, we show the following.

\begin{thm} \label{Thprep}
Consider a discrete memoryless broadcast channel $(\mathcal{X}, p_{YZ|X}, \mathcal{Y},\mathcal{Z})$. The coding scheme of Section~\ref{Sec_CS}, which operates over $k$ encoding blocks of length $N$ and whose complexity is $O(kN \log N)$ achieves the region $\mathcal{R}_{\textup{BCC}}$.
\end{thm}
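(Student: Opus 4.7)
The plan is to mirror, block-by-block, the random-binning proof of Section~\ref{sec:rand-binn-secure}, using Lemmas~\ref{lm:3} and~\ref{lm:4} as the polar analogs of Lemmas~\ref{lm:1} and~\ref{lm:2}. The first step is a \emph{distribution approximation} argument: for each block $i$, I would show that the joint distribution $\widetilde{p}_{A_i^{1:N} B_i^{1:N} T_i^{1:N} Y_i^{1:N} Z_i^{1:N}}$ induced by the successive cancellation encoder in \eqref{eq_sim_A_1}--\eqref{defsimTi} is within $O(\delta_N)$ in variational distance of the target i.i.d.\ product distribution $p_{UVXYZ}^{\otimes N}$. This follows by bounding the divergence term-by-term across the three encoding stages (common, secret/private, prefix), using that conditional entropies polarize and that the bits placed deterministically lie in the ``very high entropy'' sets $\mathcal{V}$ (so their forced-uniform values are close to the true conditional), and then chaining the $k$ blocks by the triangle inequality, at total cost $O(k \delta_N)$.

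The second step is reliability. Since the simulated distribution is $O(k\delta_N)$-close to $p^{\otimes N k}$, the analysis in~\cite{Arikan10} of the successive-cancellation decoder for source coding with side information, combined with the fact that by construction Bob receives all bits in $\mathcal{H}_{U|Y}$ (via $\Psi^U_1, \widehat{A}^{1:N}_{i-1}[\mathcal{I}_{UY}\setminus \mathcal{I}_{UZ}], \Phi^U_i$) and, symmetrically, Eve receives all bits in $\mathcal{H}_{U|Z}$ and Bob further receives all bits in $\mathcal{H}_{V|UY}$, gives a per-block error bound of $O(N \delta_N)$; a union bound over the $k$ blocks and the two decoders yields $\mathbf{P}_e(\mathcal{C}_{Nk}) \leq O(kN\delta_N) \to 0$. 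The backward decoding order for Eve's common message (starting at block $k$) and for Bob's secret/private message (also starting at block $k$) is what makes the chained side information consistent with the reused nearly-uniform ``seed'' parts $\Psi^U_1$ and $\Psi^{V|U}_i$.

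The third and hardest step is strong secrecy. I would bound
\[
I\!\left(S_{1:k};\, Z_{1:k}^{1:N},\, \Psi^U_1,\, \Phi^U_{1:k},\, \Psi^{X|V}_1\right),
\]
noting that the seeds $\Psi^{V|U}_k, \Phi^{V|U}_{1:k}$ are shared secretly with Bob and thus do not appear. Using Lemma~\ref{lm:4} applied to the set $\mathcal{V}_{V|UZ}$ shows that, under the target distribution, these bits are $\delta_N$-close to uniform and independent of $(U^{1:N}, Z^{1:N})$ and hence of the public bits associated with the $U$-layer and the prefix layer; the block-wise hybrid argument of~\cite{Bloch12} extends this to the $k$-block leakage at cost $O(kN\delta_N)$. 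Transferring the bound from the target to the true distribution costs another $O(k\delta_N \log |\mathcal{S}|^k)$ by the continuity of mutual information. The main obstacle I anticipate is precisely this step: verifying that the chained public bits $\Phi^U_{1:k}$, the fixed seed $\Psi^U_1$, and the prefix seed $\Psi^{X|V}_1$ remain (almost) independent of the confidential bits across blocks despite the fact that $\mathcal{H}_{U|Y}, \mathcal{H}_{U|Z}, \mathcal{V}_U$ are generally misaligned, so that the sets $\mathcal{A}_{UYZ}$ and $\mathcal{B}_{V|UY}$ have to be accounted for carefully in the conditioning.

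The fourth step is the rate computation. Using that $|\mathcal{V}_U|/N \to H(U)$, $|\mathcal{H}_{U|Y}|/N \to H(U|Y)$, $|\mathcal{V}_{V|UZ}|/N \to H(V|UZ)$, $|\mathcal{M}_{UVZ}|/N \to I(V;Z|U)$, $|\mathcal{V}_{X|V}\setminus \mathcal{V}_{X|VZ}|/N \to I(X;Z|V)$, etc., and that the amortized per-symbol cost of the seeds $\Psi^U_1, \Phi^U_{1:k}, \Psi^{V|U}_k, \Phi^{V|U}_{1:k}, \Psi^{X|V}_1$ is $O(1/k)$, taking $k \to \infty$ after $N \to \infty$ gives effective rates $R_O \leq I(U;Y)$, $R_S \leq I(V;Y|U)-I(V;Z|U)$, $R_M + R_O + R_S \leq I(V;Y|U) + I(U;Y)$, $R_M + R_R \geq I(X;Z|U)$, and $R_R \geq I(X;Z|V)$, which is exactly $\mathcal{R}_{\textup{BCC}}$ under the working assumption $I(U;Y)\le I(U;Z)$ (the reverse case is symmetric by swapping the roles of $\mathcal{I}_{UY}$ and $\mathcal{I}_{UZ}$). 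Complexity $O(N\log N)$ per block is immediate from the successive cancellation architecture, and the overall complexity $O(kN\log N)$ is linearithmic in the total block length $kN$.
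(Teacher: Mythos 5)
Your proposal mirrors the paper's own proof almost step for step: distribution approximation via divergence chain rules and Pinsker (the paper's Lemmas~\ref{lem_dist_A}--\ref{lemdist_joint}), reliability via optimal coupling and the source-coding-with-side-information decoder, secrecy via a per-block leakage bound plus a hybrid/recurrence argument across blocks (the paper's Lemmas~\ref{lem1c}--\ref{lemdifc}, where the independence of the reused $\Psi^{X|V}_1$ that you correctly flag as the crux is exactly Lemma~\ref{lem4}), and the same $N\to\infty$ then $k\to\infty$ rate accounting. The only real imprecision is your claim that the error probability is $O(kN\delta_N)$: because the chained decoders propagate errors forward (Bob, common) and backward (Eve, common; Bob, secret/private), the bound in the paper actually grows like $k^2$ and $k^3$ respectively, which is harmless since $N\delta_N = N2^{-N^\beta}$ vanishes super-polynomially and the $N$-limit is taken before the $k$-limit.
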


The result of Theorem~\ref{Thprep}, follows in four steps. First, we show that the polar coding scheme of Section~\ref{sec:polar-coding-schem} approximates the statistics of the original \ac{DMS} $(\calU\times\calV\times\calX\times\calY\times\calZ,p_{UVXYZ})$ from which the polarization sets were defined. Second, we show that the various messages rates are indeed those in $\mathcal{R}_{\textup{BCC}}$. Third, we show that the probability of decoding error vanishes with the block length. Finally, we show that the information leakage vanishes with the block length. 

\subsection{Approximation of original \ac{DMS} statistics}
\label{sec:appr-stat}

Recall that the vectors $\widetilde{A}_i^{1:N}$, $\widetilde{B}_i^{1:N}$, $\widetilde{V}_i^{1:N}$, and $\widetilde{X}_i^{1:N}$, generated in Block $i \in \llbracket 1,k\rrbracket$ do not have the exact joint distribution of the vectors ${A}^{1:N}$, ${B}^{1:N}$, ${V}^{1:N}$, and ${X}^{1:N}$, induced by the source polarization of the original \ac{DMS} $(\calU\times\calV\times\calX\times\calY\times\calZ,p_{UVXYZ})$. However, the following lemma shows that the joint distributions are close to one another, which is crucial for the subsequent reliability and secrecy analysis. 

\begin{lem} \label{lem_dist_A}
For $i \in \llbracket 1,k \rrbracket$, we have
\begin{align*}
\mathbb{V}(p_{A^{1:N}}, \widetilde{p}_{A_i^{1:N}}) 
& \leq \delta_N^{(U)}, \\
\mathbb{V}(p_{B^{1:N}U^{1:N}}, \widetilde{p}_{B_i^{1:N}U_i^{1:N}}) 
& \leq \delta_N^{(UV)},\\
\mathbb{V}(p_{X^{1:N}V^{1:N}}, \widetilde{p}_{X_i^{1:N}V_i^{1:N}}) 
& \leq \delta_N^{(XV)},
\end{align*}
where
\begin{align*}
\delta_N^{(U)}  & \triangleq \sqrt{2\log 2} \sqrt{ N \delta_N }, \\
\delta_N^{(UV)} & \triangleq 2\sqrt{\log 2} \sqrt{ N \delta_N }, \\
\delta_N^{(XV)} & \triangleq \sqrt{2\log 2} \sqrt{ 3 N \delta_N }.
\end{align*}

Combining the three previous inequalities, we obtain 
\begin{align*}
\mathbb{V}(p_{U^{1:N}V^{1:N}X^{1:N}Y^{1:N}Z^{1:N}}, \widetilde{p}_{U_i^{1:N}V_i^{1:N}X_i^{1:N}Y_i^{1:N}Z_i^{1:N}}) 
& \leq \delta_N^{(P)}.
\end{align*}
where $ \delta_N^{(P)} \triangleq \sqrt{2\log 2} \sqrt{N \delta_N}(2 \sqrt{ 2 } +  \sqrt{3}).$
\end{lem}

\begin{proof}
See Appendix \ref{App_lem_dist_A}.
\end{proof}

\subsection{Transmission rates}
\label{sec:coding-rates}
We now analyze the rate of common message, confidential message, private message, and randomization sequence, used at the encoder, as well as the different sum rates and the rate of additional information sent to Bob and Eve. We will use the following lemmas.  %
\begin{lem}[Adapted from {\cite[Theorem 3.5]{Sasoglu11}} ] \label{lemcard_1}
	Consider a source $(\mathcal{X} \mathcal{Y}, p_{XY})$ with $|\mathcal{X}| =q$, $q$ prime and  $\mathcal{Y}$ a countable alphabet. Define ${U}^{1:N} \triangleq {X}^{1:N} G_n$ and for $\delta_N \triangleq 2^{-N^{\beta}}$, $\beta < 1/2$,
	$$
	\mathcal{H}_{X|Y} \triangleq \{i \in \llbracket 1,N \rrbracket :  H(U^i|U^{1:i-1} Y^{1:N}) > \delta_N \}.
	$$
We have	$$
	\lim_{N\to \infty} \frac{| \mathcal{H}_{X|Y}|}{N} = H(X|Y).
	$$
\end{lem}

\begin{lem} \label{lemcard_2}
	Consider a source $(\mathcal{X} \mathcal{Y}, p_{XY})$ with $|\mathcal{X}| =q$, $q$ prime and  $\mathcal{Y}$ a countable alphabet. Define ${U}^{1:N} \triangleq {X}^{1:N} G_n$ and for $\delta_N \triangleq 2^{-N^{\beta}}$, $\beta < 1/2$,
	$$
	\mathcal{V}_{X|Y} \triangleq \{i \in \llbracket 1,N \rrbracket :  H(U^i|U^{1:i-1} Y^{1:N}) > \log_2 (q) - \delta_N \}.
	$$
	We have 
	$$
	\lim_{N\to \infty} \frac{| \mathcal{V}_{X|Y}|}{N} = H(X|Y).
	$$
\end{lem}

\begin{proof}
See Appendix \ref{App_card}.	
\end{proof}
\begin{rem} \label{remcard}
Although the case $q=2$ first appeared in \cite{Honda13} and \cite[Lemma 1]{Chou13b}, Lemma~\ref{lemcard_2} has not appeared anywhere to the best of our knowledge. 
	A weaker result has been shown in~\cite[Theorem 3.4]{Sasoglu11}, specifically, for all $\epsilon >0$,
	\begin{multline*}
	\lim_{N\to \infty} \frac{|\{i \in \llbracket 1,N \rrbracket :  H(U^i|U^{1:i-1} Y^{1:N}) > \log_2 (q)- \epsilon \}|}{N} \\= H(X|Y).
	\end{multline*}

\end{rem}

\noindent\textbf{Common message rate.} The overall rate $R_O$ of common information  transmitted satisfies  
\begin{align*}
R_O 
& = \frac{ (k-1)|\mathcal{I}_{UY}|+ |\mathcal{I}_{UY}\cap \mathcal{I}_{UZ}|} {kN} \\
& = \frac{|\mathcal{I}_{UY}|}{N} -  \frac{  |\mathcal{I}_{UY}\backslash  \mathcal{I}_{UZ}|} {kN} \\
& \geq \frac{|\mathcal{I}_{UY}|}{N} -  \frac{  |\mathcal{I}_{UY}|} {kN} \\
& \xrightarrow{N \to \infty} I(Y;U) -  \frac{  I(Y;U)} {k}\\
& \xrightarrow{k \to \infty} I(Y;U),
\end{align*}
where we have used Lemma \ref{lemcard_1} and Lemma \ref{lemcard_2}. Since we also have $R_O \leq \frac{|\mathcal{I}_{UY}|}{N} \xrightarrow{N \to \infty} I(Y;U)$, we conclude
\begin{align}
R_O \xrightarrow{N \to \infty, k \to \infty} I(Y;U). \label{eqsum1}
\end{align}

\noindent\textbf{Confidential message rate.} First, observe that
\begin{align*}
|\Psi^{V|U}_1|  
& = |{\mathcal{H}}_{V|UY} \cap {\mathcal{V}}_{V|U}| \\
& \leq  |\mathcal{H}_{V|UY} |, 
\end{align*}
and $|\Psi^{V|U}_1| \geq |\mathcal{V}_{V|UY}|$ because $\mathcal{V}_{V|UY} \subseteq {\mathcal{H}}_{V|UY}$ and $\mathcal{V}_{V|UY} \subseteq {\mathcal{V}}_{V|U}$. Hence, since $\lim_{N \to \infty} |\mathcal{V}_{V|UY} |/N =H(V|UY)$ by Lemma \ref{lemcard_2} and $\lim_{N \to \infty} |\mathcal{H}_{V|UY} |/N =H(V|UY)$ by Lemma \ref{lemcard_1}, we have 
$$
\lim_{N \to \infty}  \frac{|\Psi^{V|U}_1|}{N}  =H(V|UY).
$$

\noindent{}Then, the overall rate $R_S$ of secret information transmitted is  
\begin{align}
R_S =
& \frac{ |\mathcal{V}_{V|UZ}|+ (k-1) |\mathcal{V}_{V|UZ} \backslash \mathcal{B}_{V|UY}|} {kN} \nonumber \\ \nonumber
& = \frac{ |\mathcal{V}_{V|UZ}|+ (k-1) (|\mathcal{V}_{V|UZ} |- | \mathcal{B}_{V|UY}|)} {kN} \\ \nonumber
& =  \frac{ |\mathcal{V}_{V|UZ} |- | \mathcal{B}_{V|UY}|} {N} + \frac{ |\mathcal{B}_{V|UY}|}{kN} \\  \nonumber
&=  \frac{ |\mathcal{V}_{V|UZ} |- | \Psi^{V|U}_1|} {N}  + \frac{ |\Psi^{V|U}_1|}{kN} \\  \nonumber
& \xrightarrow{N \to \infty} I(V;Y|U) - I(V;Z|U)  + \frac{ H(V|UY)}{k}\\
& \xrightarrow{k \to \infty} I(V;Y|U) - I(V;Z|U) . \label{eqsum2}
\end{align}

\noindent\textbf{Private message rate.} The overall rate $R_M$ of private information  transmitted is  
\begin{align}
R_M
&= \frac{ k |\mathcal{M}_{UVZ}|} {kN}\nonumber\\ \nonumber
&= \frac{  |\mathcal{V}_{V|U} \backslash \mathcal{V}_{V|UZ}|} {N} \\ \nonumber
&= \frac{  |\mathcal{V}_{V|U} |-| \mathcal{V}_{V|UZ}|} {N} \\
& \xrightarrow{N \to \infty} I(V;Z|U) , \label{eqsum3}
\end{align}
where we have used Lemma \ref{lemcard_2}.

\noindent\textbf{Randomization rate.} The randomness used in the stochastic encoder includes the randomization sequence for channel prefixing, as well as the randomness required to identify the codebooks and run the successive cancellation encoding. Using~Lemma \ref{lemcard_2}, we find that the rate required to identify the codebook for the common message is 
\begin{align*}
\frac{ |\mathcal{V}_{U} \backslash \mathcal{I}_{UY}|} {kN} \leq   \frac{ |\mathcal{V}_{U} |} {kN} \xrightarrow{N \to \infty} \frac{H(U|Y)}{k}\xrightarrow{k \to \infty} 0.
 \end{align*} 
Similarly, the rate required to identify the codebook for the secret and private messages corresponds to the rate of $(\Psi^{V|U}_{k},\Phi^{V|U}_{k})$, which is transmitted to Bob to allow him to reconstruct $\widetilde{B}_{1:k}^{1:N}$,
\begin{align*}
 \frac{ |(\Psi^{V|U}_{k},\Phi^{V|U}_{k})|} {kN} 
&  = \frac{ |\widetilde{B}_{k}^{1:N}[\mathcal{H}_{V|UY}]|} {kN} \\
& \xrightarrow{N \to \infty} \frac{ H(V|UY)}{k} \\
 & \xrightarrow{k \to \infty} 0,
 \end{align*}
 where we have used Lemma \ref{lemcard_1}.

The randomization sequence rate used in channel prefixing~is
\begin{align*}
 & \frac{ |\mathcal{V}_{X|V}| + (k-1) |\mathcal{V}_{X|V} \backslash \mathcal{V}_{X|VZ}|} {kN} \\
  & = \frac{ |\mathcal{V}_{X|V} \backslash \mathcal{V}_{X|VZ}|} {N} + \frac{ |\mathcal{V}_{X|VZ}|} {kN} \\
  & = \frac{ |\mathcal{V}_{X|V}|- | \mathcal{V}_{X|VZ}|} {N} + \frac{ |\mathcal{V}_{X|VZ}|} {kN} \\
 & \xrightarrow{N \to \infty} I(X;Z|V) + \frac{ H(X|VZ)} {k}, \\
 & \xrightarrow{k \to \infty} I(X;Z|V), 
 \end{align*}
where we have used Lemma \ref{lemcard_2}. Finally, we justify that the rate of uniform randomness required for successive cancellation encoding in (\ref{eq_sim_A_1})--(\ref{defsimTi}) is negligible in Appendix \ref{App_lemrandA}.

Hence, the overall randomness rate $R_R$ used at the encoder is asymptotically  \begin{align}
R_R \xrightarrow{N \to \infty,k \to \infty} I(X;Z|V). \label{eqsum4}
\end{align}

\noindent\textbf{Sum rates}. By \eqref{eqsum3} and \eqref{eqsum4}, the sum of the private message rate $R_M$ and the randomness rate $R_R$ is asymptotically
\begin{align*}
&R_M +R_R \\
& \xrightarrow{N \to \infty, k \to \infty}  I(V;Z|U) + I(X;Z|V) \\
& \stackrel{(a)}{=} H(Z|U) - H(Z|UV) + H(Z|V) - H(Z|XV)\\
& = H(Z|U) - H(Z|XV)\\
& \stackrel{(b)}{=} H(Z|U) - H(Z|XU)\\
& = I(X;Z|U),
\end{align*}
where $(a)$ and $(b)$ hold by $U - V - X - Z$.

Moreover, by \eqref{eqsum1}, \eqref{eqsum2}, and \eqref{eqsum3}, the sum of the common message rate $R_O$, the private message rate $R_M$, and the confidential message rate $R_S$ is asymptotically
\begin{align*}
& R_O + R_M +R_S \xrightarrow{N \to \infty, k \to \infty}  I(Y;U) +  I(V;Y|U) .
\end{align*}

\noindent\textbf{Seed Rate.}
The rate of the secret sequence that must be shared between the legitimate users to initialize the coding scheme is
\begin{align*}
\frac{ |\Psi^{V|U}_k| + k|\Phi^{V|U}_1| }{kN}
& = \frac{ |\Psi^{V|U}_k| }{kN} + \frac{ |\Phi^{V|U}_1| }{N}\\
& \leq \frac{ |{\mathcal{H}}_{V|UY}| }{kN} + \frac{ |{\mathcal{H}}_{V|UY} \backslash {\mathcal{V}}_{V|UY}| }{N} \\
& \leq \frac{ |{\mathcal{H}}_{V|UY}| }{kN} + \frac{ |{\mathcal{H}}_{V|UY}| - |{\mathcal{V}}_{V|UY}| }{N} \\
& \xrightarrow{N \to \infty}  \frac{H(V|Y)}{k} \\
&\xrightarrow{k \to \infty} 0,
\end{align*}
where we have used Lemma \ref{lemcard_1} and Lemma \ref{lemcard_2}.

Moreover the rate of public communication from Alice to both Bob and Eve is
\begin{align*}
\frac{ |\Psi^{U}_1| + |\Phi^{U}_{1:k}| }{kN} 
& \leq \frac{ |\Psi^{U}_1| + k|{\mathcal{H}}_{U} \backslash {\mathcal{V}}_{U}| }{kN} \\
& = \frac{ |\mathcal{V}_{U} \backslash \mathcal{I}_{UY}| + k(|{\mathcal{H}}_{U} | - | {\mathcal{V}}_{U}| )}{kN} \\
& \leq \frac{ |\mathcal{H}_{U|Y}|  + k(|{\mathcal{H}}_{U} | - | {\mathcal{V}}_{U}|) }{kN} \\
& = \frac{ |\mathcal{H}_{U|Y}| }{kN} +  \frac{ |{\mathcal{H}}_{U} | - | {\mathcal{V}}_{U}| }{N} \\
& \xrightarrow{N \to \infty}  \frac{H(U|Y)}{k}  \\
&\xrightarrow{ k \to \infty} 0.
\end{align*}

\subsection{Average probability of error}
\label{sec:aver-prob-error}

We first show that Eve and Bob can reconstruct the common messages $O^{1:N}_{1:k}$ with small error probability. For $i \in \llbracket 1,k \rrbracket$,
consider an optimal coupling~\cite[Lemma 3.6]{Aldous83} between $\widetilde{p}_{U_i^{1:N}Y_i^{1:N}}$ and $p_{U^{1:N}Y^{1:N}}$ such that $$\mathbb{P} [\mathcal{E}_{UY,i}] = \mathbb{V}(\widetilde{p}_{U_i^{1:N}Y_i^{1:N}} ,p_{U^{1:N}Y^{1:N}}),$$ where $\mathcal{E}_{UY,i} \triangleq \{ (\widetilde{U}_i^{1:N}, \widetilde{Y}_i^{1:N}) \neq ({U}^{1:N} , {Y}^{1:N})\}$. Define also for $i \in \llbracket 2 , k \rrbracket$, $$\mathcal{E}_{i} \triangleq \{ \widehat{A}_{i-1}^{1:N} [\mathcal{I}_{UY} \backslash \mathcal{I}_{UZ}] \neq  \widetilde{A}_{i-1}^{1:N}[\mathcal{I}_{UY} \backslash \mathcal{I}_{UZ}]\}.$$ 
We have
\begin{align}
\mathbb{P}[ O_{i} \neq \widehat{O}_{i}] \nonumber  \nonumber
& \leq \mathbb{P}[ \widehat{U}^{1:N}_{i} \neq \widetilde{U}^{1:N}_{i}]\\ \nonumber
& = \mathbb{P}[ \widehat{U}^{1:N}_{i} \neq \widetilde{U}^{1:N}_{i} |\mathcal{E}_{UY,i}^c\cap \mathcal{E}_i^c] \mathbb{P}[ \mathcal{E}_{UY,i}^c\cap \mathcal{E}_i^c] \\ \nonumber
& \phantom{llm}+ \mathbb{P}[ \widehat{U}^{1:N}_{i} \neq \widetilde{U}^{1:N}_{i} |\mathcal{E}_{UY,i} \cup \mathcal{E}_i] \mathbb{P}[ \mathcal{E}_{UY,i}\cup \mathcal{E}_i]  \\ \nonumber
& \leq \mathbb{P}[ \widehat{U}^{1:N}_{i} \neq \widetilde{U}^{1:N}_{i}  |\mathcal{E}_{UY,i}^c \cap \mathcal{E}_i^c]  +  \mathbb{P}[ \mathcal{E}_{UY,i} \cup \mathcal{E}_i]  \\ \nonumber
& \stackrel{(a)}{\leq}  N \delta_N + \mathbb{P}[ \mathcal{E}_{UY,i}] +  \mathbb{P}[  \mathcal{E}_i]  \\ \nonumber
& \stackrel{(b)}{\leq}  N \delta_N + \delta_N^{(P)} +  \mathbb{P}[  \mathcal{E}_i]\\ \nonumber
& \leq   N \delta_N + \delta_N^{(P)} +  \mathbb{P}[ \widehat{U}^{1:N}_{i-1} \neq \widetilde{U}^{1:N}_{i-1}]\\ \nonumber
& \stackrel{(c)}{\leq}  (i-1)(N \delta_N + \delta_N^{(P)} )+  \mathbb{P}[ \widehat{U}^{1:N}_{1} \neq \widetilde{U}^{1:N}_{1}]  \\
& \stackrel{(d)}{\leq}  i(N \delta_N + \delta_N^{(P)} ), \label{eq_err_utilde}
\end{align}
where $(a)$ follows from the error probability of source coding with side information \cite{Arikan10} and the union bound, $(b)$ holds by the optimal coupling and Lemma~\ref{lem_dist_A}, $(c)$ holds by induction since we have shown that for any $i \in \llbracket 2, k\rrbracket$, $$ \mathbb{P}[ \widehat{U}^{1:N}_{i} \neq \widetilde{U}^{1:N}_{i}] \leq N \delta_N + \delta_N^{(P)} +  \mathbb{P}[ \widehat{U}^{1:N}_{i-1} \neq \widetilde{U}^{1:N}_{i-1}],$$ $(d)$ holds similarly to the previous inequalities. We thus have by the  union bound and (\ref{eq_err_utilde})
\begin{align*}
\mathbb{P}[ O^{1:N}_{1:k} \neq \widehat{O}^{1:N}_{1:k}] 
& \leq \sum_{i=1}^k \mathbb{P}[ O_{i} \neq \widehat{O}_{i}] \\
& \leq \frac{ k(k+1)}{2} (N \delta_N + \delta_N^{(P)}).
\end{align*}
We similarly obtain for Eve
\begin{align*}
\mathbb{P}[ O^{1:N}_{1:k} \neq \widehat{\widehat{O}}^{1:N}_{1:k}] 
& \leq \frac{ k(k+1)}{2}  (N \delta_N + \delta_N^{(P)}).
\end{align*}

Next, we show how Bob can recover the secret and private messages. Informally, the decoding process of the confidential and private messages $(M_{1:k},S_{1:k})$ for Bob is as follows. Reconstruction starts with Block $k$.  Given $(\Psi^{V|U}_{k}, \Phi^{V|U}_k, Y_{k}^{1:N},\widehat{U}^{1:N}_{k}) $, Bob can estimate $\widetilde{V}_k^{1:N}$, from which an estimate $\widehat{\Psi}^{V|U}_{k-1}$ of $\Psi^{V|U}_{k-1}$ is deduced. Then, for $i \in \llbracket 1 , k-1\rrbracket$, given $(\widehat{\Psi}^{V|U}_{k-i}, \Phi^{V|U}_{k-i}, Y_{k-i}^{1:N},\widehat{U}^{1:N}_{k-i}) $, Bob can estimate $\widetilde{V}_{k-i}^{1:N}$, from which an estimate of $\Psi^{V|U}_{k-i-1}$ is deduced. Finally, $S_{1:k}$ is formed from the estimate of $\widetilde{V}_{1:k}^{1:N}$. 

Formally, the analysis is as follows. For $i \in \llbracket 1 , k \rrbracket$, consider an optimal coupling~\cite[Lemma 3.6]{Aldous83} between $\widetilde{p}_{U_i^{1:N}V_i^{1:N}Y_i^{1:N}}$ and $p_{U^{1:N}V^{1:N}Y^{1:N}}$ such that $$\mathbb{P} [\mathcal{E}_{UVY,i}] = \mathbb{V}(\widetilde{p}_{U_i^{1:N}V_i^{1:N}Y_i^{1:N}} ,p_{U^{1:N}V^{1:N}Y^{1:N}}),$$ where $$\mathcal{E}_{UVY,i} \triangleq \{ (\widetilde{U}_i^{1:N}, \widetilde{V}_i^{1:N},{Y}_i^{1:N}) \neq ({U}^{1:N} , {V}^{1:N},{Y}^{1:N})\}.$$ Define also for $i \in \llbracket 1 , k-1 \rrbracket$, 
\begin{align*}
\mathcal{E}_{\Psi^{V|U}_i} &\triangleq \{ \widehat{\Psi}^{V|U}_i \neq \Psi^{V|U}_i\},\\
\mathcal{E}_{\widetilde{U}_i} & \triangleq \{ \widehat{U}_i^{1:N} \neq \widetilde{U}_i^{1:N}\},\\
\mathcal{E}_{i} &\triangleq \mathcal{E}_{\Psi^{V|U}_i} \cup  \mathcal{E}_{\widetilde{U}_i}. 
\end{align*}
For $i \in \llbracket 1 , k-1 \rrbracket$, we have
\begin{align*}
& \mathbb{P} [(M_{i}, S_i) \neq (\widehat{M}_{i},\widehat{S}_{i})]   \\
& \stackrel{(a)}{\leq} \mathbb{P} [\widetilde{V}_{i} \neq \widehat{V}_{i}]  \\
& = \mathbb{P} [\widetilde{V}_{i} \neq \widehat{V}_{i} | \mathcal{E}_{UVY,i}^c \cap \mathcal{E}_{{i}}^c ] \mathbb{P}[\mathcal{E}_{UVY,i}^c \cap \mathcal{E}_{{i}}^c] \\
& \phantom{mm }+  \mathbb{P} [\widetilde{V}_{i} \neq \widehat{V}_{i} |\mathcal{E}_{UVY,i} \cup \mathcal{E}_{{i}} ] \mathbb{P}[\mathcal{E}_{UVY,i} \cup \mathcal{E}_{{i}}]  \\
& \leq \mathbb{P} [\widetilde{V}_{i} \neq \widehat{V}_{i} |\mathcal{E}_{UVY,i}^c \cap \mathcal{E}_{{i}}^c ]  +   \mathbb{P}[\mathcal{E}_{UVY,i} \cup \mathcal{E}_{{i}}]  \\
& \leq \mathbb{P} [\widetilde{V}_{i} \neq \widehat{V}_{i} |\mathcal{E}_{UVY,i}^c \cap \mathcal{E}_{{i}}^c]  +   \mathbb{P}[\mathcal{E}_{UVY,i} ] + \mathbb{P}[ \mathcal{E}_{\Psi^{V|U}_i}] +  \mathbb{P}[ \mathcal{E}_{\widetilde{U}_i}]   \\
& \stackrel{(b)}{\leq} \mathbb{P} [\widetilde{V}_{i} \neq \widehat{V}_{i} |\mathcal{E}_{UVY,i}^c \cap \mathcal{E}_{{i}}^c]  +   \mathbb{P}[\mathcal{E}_{UVY,i} ] + \mathbb{P} [\widetilde{V}_{i+1} \neq \widehat{V}_{i+1}] \\
& \phantom{mm }+ \mathbb{P}[ \widehat{U}^{1:N}_{i} \neq \widetilde{U}^{1:N}_{i}]   \\
& \stackrel{(c)}{\leq} N \delta_N  +   \mathbb{P}[\mathcal{E}_{UVY,i} ] + \mathbb{P} [\widetilde{V}_{i+1} \neq \widehat{V}_{i+1}] + \mathbb{P}[ \widehat{U}^{1:N}_{i} \neq \widetilde{U}^{1:N}_{i}]   \\
& \stackrel{(d)}{\leq} N \delta_N  + \delta_N^{(P)} + \mathbb{P} [\widetilde{V}_{i+1} \neq \widehat{V}_{i+1}] +\mathbb{P}[ \widehat{U}^{1:N}_{i} \neq \widetilde{U}^{1:N}_{i}]  \\
& \stackrel{(e)}{\leq}  (i+1) \left( N \delta_N  + \delta_N^{(P)} \right) + \mathbb{P} [\widetilde{V}_{i+1} \neq \widehat{V}_{i+1}]  \\
& \stackrel{(f)}{\leq} (i+1) (k -i ) \left( N \delta_N  +  \delta_N^{(P)} \right) + \mathbb{P} [\widetilde{V}_{k} \neq \widehat{V}_{k}]  \\
& \stackrel{(g)}{\leq} (i+1)(k -i +1 ) \left( N \delta_N  + \delta_N^{(P)} \right) 
\end{align*}
where $(a)$ holds because $\widetilde{V}_{i}$ contains $(M_{i}, S_i)$ by construction, $(b)$ holds because $\widetilde{V}_{i+1}$ contains $\Psi^{V|U}_{i}$ by construction, $(c)$ follows from the error probability of lossless source coding with side information~\cite{Arikan10}, $(d)$ holds by the optimal coupling and Lemma~\ref{lem_dist_A}, $(e)$ holds by (\ref{eq_err_utilde}), $(f)$ holds by induction, $(g)$ is obtained similarly to the previous inequalities.

Hence,
\begin{align}
& \mathbb{P} [(M_{1:k},S_{1:k}) \neq (\widehat{M}_{1:k},\widehat{S}_{1:k})] \nonumber \\ \nonumber
& \leq \sum_{i=1}^k \mathbb{P} [(M_{i}, S_i) \neq (\widehat{M}_{i},\widehat{S}_{i})] \\ \nonumber
& \leq \sum_{i=1}^k (i+1) (k -i +1 ) \left( N \delta_N  +  \delta_N^{(P)} \right)\\
& = \left(\frac{k (k+1) (k+5)}{6} + k \right) \left( N \delta_N  + \delta_N^{(P)} \right).%
\end{align}

\subsection{Information leakage}

A Bayesian graph that describes dependencies between all the variables involved in the coding scheme of Section \ref{Sec_CS} is given in Figure~\ref{figFGD2}.
\begin{figure}
\centering
  \includegraphics[width=8.7cm]{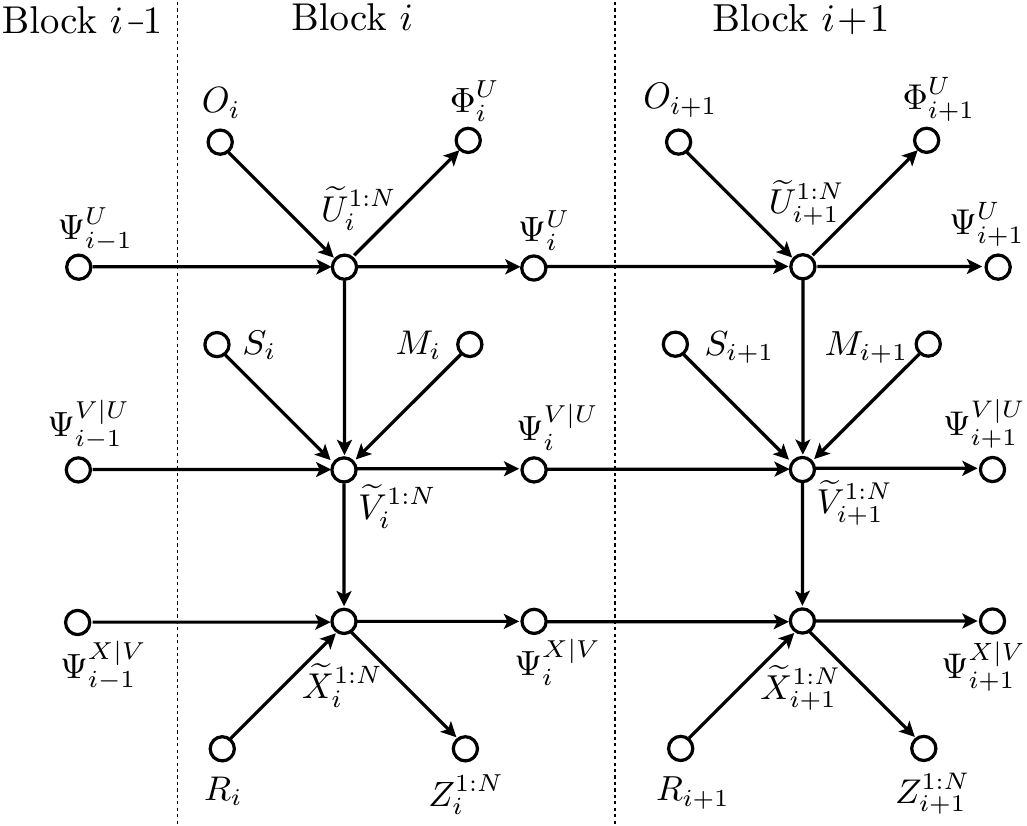}
  \caption{Graphical representation of the dependencies between consecutive encoding blocks. For Block $i \in \llbracket 1 ,k \rrbracket$, $O_i$ is the common message, $M_i$ is the private message, $S_i$ is the confidential message. $\Psi^{V|U}_i$ is the  information retransmitted in the next block to allow Bob to reconstruct $M_i$ and $S_i$ given $\Phi^{V|U}_i$ and its observations $Y^{1:N}_{1:k}$. $\Psi^{U}_i$ is the randomness used to form $\widetilde{U}_i^{1:N}$, $\Psi^{U}_i \subseteq \Psi^{U}_1$ is reused from the previous block. ${R}_i$ and $\Psi^{X|V}_i$ represent the randomness necessary at the encoder to form $\widetilde{X}_i^{1:N}$ where $\Psi^{X|V}_i = \Psi^{X|V}_1$ is reused from the previous block. Finally, $\Phi^{U}_i$ is information, whose rate is negligible, sent to Bob and Eve to allow them to reconstruct the common messages. }
  \label{figFGD2}
\end{figure}
For the secrecy analysis, we must upper bound
$$
 I(S_{1:k};\Psi^{U}_1\Phi^{U}_{1:k}Z_{1:k}^N).
$$
Note that we have introduced $(\Psi^{U}_1,\Phi^{U}_{1:k})$, since these random variables have been made available to Eve. Recall that $\Phi^{U}_{1:k}$ is additional information transmitted to Bob and Eve to reconstruct the common messages $O_{1:k}$. Recall also that $\Psi^{U}_1 \supset \Psi^{U}_i$, $i \in  \llbracket 2 , k \rrbracket$, as it is the randomness reused among all the blocks that allows the transmission of the common messages $O_{1:k}$.
We start by proving that secrecy holds for a given block $i \in \llbracket 2 , k \rrbracket$ in the following lemma.
\begin{lem} \label{lem1c}
For $i \in \llbracket 1 , k \rrbracket$ and $N$ large enough,
$$I(S_i \Psi^{V|U}_{i-1} ; Z_i^{1:N} \Phi^{U}_i \Psi^{U}_1) \leq \delta_N^{(*)},$$

where $\delta_N^{(*)} \triangleq  \sqrt{2\log 2} \sqrt{N \delta_N}(1+ 6 \sqrt{ 2 } + 3 \sqrt{3}) ( N - \log_2 (\sqrt{2\log 2} \sqrt{N \delta_N}(1+ 6 \sqrt{ 2 } + 3 \sqrt{3})) )$, and $\Psi^{V|U}_{0} \triangleq\emptyset$.
\end{lem}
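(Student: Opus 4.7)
The plan is to exploit two ingredients: (i)~by construction $(S_i,\Psi^{V|U}_{i-1})$ coincides with $\widetilde{B}_i^{1:N}[\mathcal{V}_{V|UZ}]$, which under the ideal polar model $p$ is $\delta_N$-close in variational distance to uniform bits independent of $(U^{1:N},Z^{1:N})$ by Lemma~\ref{lm:4}; and (ii)~by Lemma~\ref{lemdist_joint} the block-$i$ marginal of $\widetilde{p}$ is $\delta_N^{(P)}$-close to the ideal $p$. The mutual information of interest should therefore be close to what one computes under an auxiliary distribution in which block~$i$ is an independent copy of the ideal source and $\Psi^U_1$ is an independent block-$1$ quantity.

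\textbf{Key steps.} I would first apply the chain rule
\begin{align*}
I(S_i\Psi^{V|U}_{i-1};Z_i^{1:N}\Phi^U_i\Psi^U_1)=I(S_i\Psi^{V|U}_{i-1};\Psi^U_1)+I(S_i\Psi^{V|U}_{i-1};Z_i^{1:N}\Phi^U_i\mid\Psi^U_1).
\end{align*}
The first term vanishes because $\Psi^U_1$ is a function of block-$1$ randomness only while $(S_i,\Psi^{V|U}_{i-1})$ are fresh uniform bits drawn in block~$i$. For the second term, I would introduce an auxiliary distribution $q$ in which the block-$i$ variables $(U_i^{1:N},V_i^{1:N},X_i^{1:N},Z_i^{1:N})$ follow the ideal polar model and are independent of $\Psi^U_1$. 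Since $\Phi^U_i$ is a deterministic function of $\widetilde{U}_i^{1:N}=\widetilde{A}_i^{1:N}G_n$, Lemma~\ref{lm:4} (plus Pinsker) immediately yields $I_q(B_i^{1:N}[\mathcal{V}_{V|UZ}];Z_i^{1:N}\Phi^U_i)\leq\sqrt{2\log 2}\sqrt{N\delta_N}$, and independence of $\Psi^U_1$ under $q$ removes it from the conditioning.

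\textbf{Transfer and continuity bound.} To transfer the bound from $q$ back to $\widetilde{p}$, I would bound the total variation $\mathbb{V}(\widetilde{p}_{S_i\Psi^{V|U}_{i-1}Z_i^{1:N}\Phi^U_i\Psi^U_1},q_{S_i\Psi^{V|U}_{i-1}Z_i^{1:N}\Phi^U_i\Psi^U_1})$ by a triangle inequality that combines Lemmas~\ref{lem_dist_A}, \ref{lem_dist_BV}, \ref{lemprefix}, and~\ref{lemdist_joint} applied to both block~$1$ and block~$i$, producing a variational distance of order $\sqrt{N\delta_N}$ with the numerical constant $1+6\sqrt{2}+3\sqrt{3}$ that appears in $\delta_N^{(*)}$. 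A standard continuity-of-entropy inequality of the form $|H_P(X)-H_Q(X)|\leq\mathbb{V}(P,Q)\log_2(|\mathcal{X}|/\mathbb{V}(P,Q))$, applied to each of the three entropies forming the mutual information with alphabet cardinality bounded by $2^N$, then delivers the advertised bound $\delta_N^{(*)}=\epsilon(N-\log_2\epsilon)$ with $\epsilon$ of order $\sqrt{N\delta_N}$.

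\textbf{Main obstacle.} The technical difficulty is the chaining: under $\widetilde{p}$ the bits $\Psi^U_1$ are literally reused inside $\widetilde{A}_i^{1:N}$, so the joint $\widetilde{p}_{\Psi^U_1,A_i^{1:N}}$ does not factor, and even conditioned on $\Psi^U_1$ the block-$i$ variables do not follow an exact conditional of the ideal~$p$. One must carefully argue that the coupling induced by reusing a subset of coordinates of $\Psi^U_1$ only perturbs the relevant joint distribution by $O(\sqrt{N\delta_N})$ in total variation — intuitively, because the fresh sampling of positions in $\mathcal{V}_U^c$ and the near-uniformity of positions in $\mathcal{V}_U$ together absorb the dependence. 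The book-keeping over the successive hops (from $\widetilde{p}$ to an intermediate distribution using the ideal polar conditionals, and finally to the fully decoupled $q$) is the source of the constants in $\delta_N^{(*)}$ and the main place where chaining forces extra care compared with the random binning sketch of Section~\ref{Sec_CS}.
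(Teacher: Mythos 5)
Your proposal correctly identifies the four key ingredients the paper uses: (i) the identification $(S_i,\Psi^{V|U}_{i-1})=\widetilde{B}_i^{1:N}[\mathcal{V}_{V|UZ}]$, (ii) the near-independence of $\widetilde{B}_i^{1:N}[\mathcal{V}_{V|UZ}]$ from $(\widetilde{U}_i^{1:N},Z_i^{1:N})$ coming from the definition of $\mathcal{V}_{V|UZ}$ under the ideal source $p$, (iii) the transfer from $p$ to $\widetilde{p}$ via the total-variation bound of Lemma~\ref{lemdist_joint}, and (iv) the continuity-of-entropy inequality $I\leq \mathbb{V}\log_2(2^N/\mathbb{V})$. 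So the strategy is sound. However, your decomposition creates work the paper avoids, and one of your intermediate claims is not correctly justified.

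The paper does not split off $\Psi^U_1$ with the chain rule. It invokes data processing directly: since $\Phi^U_i$ and $\Psi^U_i$ are deterministic functions of $\widetilde{U}_i^{1:N}$ (being subvectors of $\widetilde{A}_i^{1:N}=\widetilde{U}_i^{1:N}G_n$), one has $I(S_i\Psi^{V|U}_{i-1};Z_i^{1:N}\Phi^U_i\Psi^U_i)\leq I(\widetilde{B}_i^{1:N}[\mathcal{V}_{V|UZ}];Z_i^{1:N}\widetilde{U}_i^{1:N})$, and the \emph{unconditional} variational-distance machinery finishes. Your chain rule
$I(S_i\Psi^{V|U}_{i-1};Z_i^{1:N}\Phi^U_i\Psi^U_1)=I(S_i\Psi^{V|U}_{i-1};\Psi^U_1)+I(S_i\Psi^{V|U}_{i-1};Z_i^{1:N}\Phi^U_i\mid\Psi^U_1)$
does not reduce the work: if the first term is exactly zero, the conditional term is \emph{equal} to the original quantity. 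Handling it via an auxiliary $q$ and a conditional continuity argument then forces you to control four entropies instead of three, with a conditional coupling that is more delicate than the unconditional one the paper uses. The decomposition is valid but it is a detour, not a shortcut.

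More substantively, the claim that $I(S_i\Psi^{V|U}_{i-1};\Psi^U_1)=0$ is true but your justification is wrong. You say $(S_i,\Psi^{V|U}_{i-1})$ are "fresh uniform bits drawn in block $i$". $\Psi^{V|U}_{i-1}$ is not fresh and is not drawn in block $i$: it is $\widetilde{B}_{i-1}^{1:N}[\mathcal{V}_{V|UY}\cup((\mathcal{H}_{V|UY}\setminus\mathcal{V}_{V|UY})\cap\mathcal{V}_{V|U})]$, generated in block $i-1$ by an encoder that \emph{does} depend on $\widetilde{U}_{i-1}^{1:N}$ and hence on $\Psi^U_1$. The independence holds for a structural reason that you must state: by construction, $\Psi^{V|U}_{i-1}$ sits entirely inside $\mathcal{V}_{V|U}$, and the encoding rule~(\ref{defsimBi}) fills every position in $\mathcal{V}_{V|U}$ with message or chained-message bits $(S_j,M_j,\Psi^{V|U}_{j-1})$, never with SC-encoded bits that could leak dependence on $\widetilde{U}_{i-1}^{1:N}$. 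By induction (using that $\Psi^{V|U}_1\subset\widetilde{B}_1^{1:N}[\mathcal{V}_{V|U}]$ as well), $\Psi^{V|U}_{i-1}$ is a deterministic function of $(S_{1:i-1},M_{1:i-1})$ alone, hence independent of the $A$-chain and of $\Psi^U_1$. Without this observation the first term in your chain rule is unproven. Note in passing that you use $\Psi^U_1$ as in the lemma statement whereas the paper's proof manipulates $\Psi^U_i$; reconciling the two requires exactly this kind of structural independence argument, so identifying and making it precise is valuable, but your stated rationale for it is incorrect and must be replaced.

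Finally, a minor quantitative slip: Lemma~\ref{lm:4} gives a variational-distance bound of order $\delta_N$, not a mutual-information bound of $\sqrt{2\log 2}\sqrt{N\delta_N}$. The paper's route is the other direction: it first uses the definition of $\mathcal{V}_{V|UZ}$ to bound $I(B^{1:N}[\mathcal{V}_{V|UZ}];U^{1:N}Z^{1:N})\leq N\delta_N$, then Pinsker to get a variational distance of order $\sqrt{N\delta_N}$, and only at the end applies the continuity bound to return to mutual information under $\widetilde{p}$.
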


\begin{proof}
See Appendix~\ref{App_lem1c}.
\end{proof}

Recall that for channel prefixing in the encoding process, we reuse some randomness $\Psi^{X|V}_1$ among all the blocks so that $\Psi^{X|V}_1 = \Psi^{X|V}_i$, $i \in  \llbracket 2 , k \rrbracket$. We show in the following lemma that $\Psi^{X|V}_1$ is almost independent from $(Z_{i}^{1:N}, \Psi^{V|U}_{i-1}, S_{i}, \Phi^{U}_i, \Psi^{U}_i)$. This fact will be useful in the secrecy analysis of the overall scheme.

\begin{lem} \label{lem4} 
For $i \in \llbracket 2 , k \rrbracket$ and $N$ large enough,
 $$I( \Psi^{X|V}_1 ; Z_{i}^{1:N} \Psi^{V|U}_{i-1} S_{i} \Phi^{U}_i \Psi^{U}_i ) \leq \delta_N^{(*)},$$
where $\delta_N^{(*)}$ is defined as in Lemma \ref{lem1c}.
\end{lem}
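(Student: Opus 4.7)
The plan is to reduce Lemma~\ref{lem4} to the source polarization bound of Lemma~\ref{lm:4} composed with the approximation bound of Lemma~\ref{lemdist_joint}, following the same template as the proof of Lemma~\ref{lem1c}.

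By construction of the channel-prefixing chaining rule~\eqref{defsimTi}, the randomness $\Psi^{X|V}_1$ is \emph{identically} equal to $\widetilde{T}_i^{1:N}[\mathcal{V}_{X|VZ}]$ for every block $i$. Moreover, $\Phi^{U}_i$ and $\Psi^{U}_i$ are coordinates of $\widetilde{A}_i^{1:N} = \widetilde{U}_i^{1:N} G_n$ (hence deterministic functions of $\widetilde{U}_i^{1:N}$), while $\Psi^{V|U}_{i-1}$ (placed at positions $\mathcal{B}_{V|UY}$ of $\widetilde{B}_i^{1:N}$ by the rule~\eqref{defsimBi}) together with $S_i$ are coordinates of $\widetilde{B}_i^{1:N} = \widetilde{V}_i^{1:N} G_n$ (hence functions of $\widetilde{V}_i^{1:N}$). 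The data-processing inequality therefore reduces the statement to bounding
\[
I(\widetilde{T}_i^{1:N}[\mathcal{V}_{X|VZ}];\, Z_i^{1:N} \widetilde{U}_i^{1:N} \widetilde{V}_i^{1:N}).
\]

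I would then control this quantity through a variational-distance argument carried out first under the ideal distribution $p_{UVXYZ}$. Lemma~\ref{lm:4}, applied with source $X$ and side information $(V,Z)$, yields
\[
\mathbb{V}\!\left(p_{T^{1:N}[\mathcal{V}_{X|VZ}]\, V^{1:N}\, Z^{1:N}},\; q_{T^{1:N}[\mathcal{V}_{X|VZ}]}\, p_{V^{1:N}\, Z^{1:N}}\right) \leq \delta_N,
\]
where $q$ denotes the uniform distribution on binary strings of length $|\mathcal{V}_{X|VZ}|$. The Markov chain $U - V - (X,Z)$ implies $p_{U^{1:N} \mid T^{1:N}[\mathcal{V}_{X|VZ}] V^{1:N} Z^{1:N}} = p_{U^{1:N} \mid V^{1:N}}$, so a direct factorization shows that the same $\delta_N$-bound persists after appending $U^{1:N}$ on both marginals. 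Using Lemma~\ref{lemdist_joint}, together with the fact that $T^{1:N} = X^{1:N} G_n$ is a deterministic function of $X^{1:N}$, one transports the bound to $\widetilde{p}$; a short triangle-inequality chain then gives
\[
\mathbb{V}\!\left(\widetilde{p}_{\widetilde{T}_i^{1:N}[\mathcal{V}_{X|VZ}]\, Z_i^{1:N}\, \widetilde{U}_i^{1:N}\, \widetilde{V}_i^{1:N}},\; q\,\widetilde{p}_{Z_i^{1:N}\, \widetilde{U}_i^{1:N}\, \widetilde{V}_i^{1:N}}\right) \leq \epsilon_N,
\]
with $\epsilon_N = \sqrt{2\log 2}\,\sqrt{N\delta_N}\,(1 + 6\sqrt{2} + 3\sqrt{3})$ after absorbing the lower-order $\delta_N$ into the dominant $\sqrt{N\delta_N}$ term.

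Finally, I would convert this variational bound into a mutual-information bound via the standard inequality: if $\mathbb{V}(p_{AB}, q_A\, p_B) \leq \epsilon$ with $q_A$ uniform on a set of cardinality at most $2^N$, then $I(A;B) \leq \epsilon\,(N - \log_2 \epsilon)$, which follows from expanding the corresponding Kullback-Leibler divergence and invoking Fannes-type continuity of entropy, exactly as in the proof of Lemma~\ref{lem1c}. Applied with the $\epsilon_N$ above, this yields precisely the $\delta_N^{(*)}$ claimed in the statement. The only delicate step is step three, namely the bookkeeping of constants in the triangle-inequality chain so that the factor $(1 + 6\sqrt{2} + 3\sqrt{3})$ comes out cleanly, together with the verification that the Markov-chain extension of the polarization bound to $U^{1:N}$ preserves the $\delta_N$-closeness without any slack.
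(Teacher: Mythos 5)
Your proposal is correct and follows the same template as the paper's proof: reduce via data processing to $I(\widetilde{T}_i^{1:N}[\mathcal{V}_{X|VZ}];\,Z_i^{1:N}\widetilde{U}_i^{1:N}\widetilde{V}_i^{1:N})$, establish a variational-distance bound between the joint distribution and the product of marginals, transport it from the ideal source $p$ to the encoder-induced $\widetilde{p}$ via Lemma~\ref{lemdist_joint} and the triangle inequality, and close with the standard entropy-continuity bound. The one place you diverge is at the ideal-source step. You invoke Lemma~\ref{lm:4} (with side information $(V,Z)$) to get $\mathbb{V}\bigl(p_{T^{1:N}[\mathcal{V}_{X|VZ}]V^{1:N}Z^{1:N}},\,q\,p_{V^{1:N}Z^{1:N}}\bigr)\leq\delta_N$ and then append $U^{1:N}$ by observing that $p_{U^{1:N}\mid T^{1:N}[\mathcal{V}_{X|VZ}]V^{1:N}Z^{1:N}}=p_{U^{1:N}\mid V^{1:N}}$, which indeed preserves the bound exactly (no slack, since both sides of the variational distance factor through the same conditional kernel). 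The paper instead works with $U^{1:N}$ already present in the conditioning: it bounds $I\bigl(T^{1:N}[\mathcal{V}_{X|VZ}];Z^{1:N}U^{1:N}V^{1:N}\bigr)\leq N\delta_N$ by first dropping $U^{1:N}$ from the entropy terms $H(T^j\mid T^{1:j-1}Z^{1:N}U^{1:N}V^{1:N})=H(T^j\mid T^{1:j-1}Z^{1:N}V^{1:N})$ using $U-V-X$, and then applies Pinsker, obtaining $\sqrt{2\log 2}\sqrt{N\delta_N}$ rather than $\delta_N$ for the ideal-source term. Your route is marginally tighter at that intermediate step and arguably cleaner, since it reuses a lemma already stated in the text; after adding $3\delta_N^{(P)}$ from the triangle-inequality transport both routes yield the same $\delta_N^{(*)}$. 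Your bookkeeping of the constant $(1+6\sqrt{2}+3\sqrt{3})$ and your identification of the two points needing care are both accurate.
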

\begin{proof}
See Appendix~\ref{App_lem4}.
\end{proof}

Using Lemmas \ref{lem1c} and \ref{lem4}, we show in the following lemma a recurrence relation that will make the secrecy analysis over all blocks easier.

\begin{lem} \label{lemdifc}
Let $i \in \llbracket 1 , k-1 \rrbracket$. Define $$\widetilde{L}_{i} \triangleq I(S_{1:k}; \Psi^{U}_1 \Phi^{U}_{1:i} Z_{1:i}^{1:N}).$$ We have
$$
\widetilde{L}_{i+1}  - \widetilde{L}_{i} \leq 3\delta_N^{(*)}.
$$
\end{lem}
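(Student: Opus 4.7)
The plan is to use the chain rule for mutual information to write
\begin{align*}
\widetilde{L}_{i+1} - \widetilde{L}_i = I(S_{1:k}; \Phi^U_{i+1} Z_{i+1}^{1:N} | \Psi^U_1 \Phi^U_{1:i} Z_{1:i}^{1:N}),
\end{align*}
and then bound this conditional mutual information by at most $3\delta_N^{(*)}$ via a sequence of chain-rule manipulations that introduce as ``bridge'' variables the chained codebook randomness $\Psi^{V|U}_i$ and the reused channel-prefix randomness $\Psi^{X|V}_1$, exploit the Markov structure of the functional dependence graph in Figure~\ref{figFGD2} to decouple block $i+1$ from the past conditionally on these bridges, and then apply Lemmas~\ref{lem1c} and~\ref{lem4} at most three times in total.

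Concretely, from the identity $I(X;Y|W) \leq I(X; Z|W) + I(X; Y|W Z)$ applied with $Z = (\Psi^{V|U}_i, \Psi^{X|V}_1)$, I would obtain
\begin{align*}
\widetilde{L}_{i+1} - \widetilde{L}_i &\leq I(S_{1:k}; \Psi^{V|U}_i \Psi^{X|V}_1 | \Psi^U_1 \Phi^U_{1:i} Z_{1:i}^{1:N}) \\
&\phantom{\leq}+ I(S_{1:k}; \Phi^U_{i+1} Z_{i+1}^{1:N} | \Psi^U_1 \Phi^U_{1:i} Z_{1:i}^{1:N} \Psi^{V|U}_i \Psi^{X|V}_1).
\end{align*}
For the first term, I would further manipulate to pull $(\Psi^{V|U}_i, \Psi^{X|V}_1)$ onto the right-hand side of a mutual information with $S_{1:k}$ and then recognize two instances of Lemmas~\ref{lem1c} and~\ref{lem4}: the lemma statements essentially say that $\Psi^{V|U}_i$ (with $S_i$) and $\Psi^{X|V}_1$ are nearly independent of Eve's Block-$i$ observables, which is what is needed to control the leakage of the bridge variables. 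For the second term, I would invoke the Markov chain
\begin{align*}
(S_{1:k} \setminus S_{i+1}, \Phi^U_{1:i}, Z_{1:i}^{1:N}) - (\Psi^U_1, \Psi^{V|U}_i, \Psi^{X|V}_1, S_{i+1}, O_{i,2}) - (\Phi^U_{i+1}, Z_{i+1}^{1:N}),
\end{align*}
which holds because, conditionally on the middle variables, block $i+1$'s Eve-observables depend only on fresh random variables drawn in that block ($O_{i+1}, M_{i+1}, R_{i+1}$, successive-cancellation randomness, and the channel noise), and these are independent of everything from blocks $1,\dots,i$. This collapses the second term to $I(S_{i+1}; \Phi^U_{i+1} Z_{i+1}^{1:N} | \cdot )$, which is bounded by $\delta_N^{(*)}$ via Lemma~\ref{lem1c} applied to block $i+1$. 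Independence of $S_{i+2:k}$ from everything up to block $i+1$ is used along the way to drop the future secret messages.

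The main obstacle will be rigorously justifying this Markov chain, because the chained variables $\Psi^{V|U}_i$ and $O_{i,2}$ are themselves produced by block $i$'s encoding and in particular $\Psi^{V|U}_i$ overlaps in bit positions with $S_i$ in the general construction of Section~\ref{sec:polar-coding-schem}. Some care is therefore needed to verify that conditioning on the shared codebook randomness together with $(S_{i+1}, O_{i,2})$ really does decouple block $i+1$'s outputs from past secret messages, and that the appropriate instances of Lemmas~\ref{lem1c} and~\ref{lem4} can be assembled to yield exactly three $\delta_N^{(*)}$ contributions rather than a larger constant.
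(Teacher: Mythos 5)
Your decomposition
\begin{align*}
\widetilde{L}_{i+1} - \widetilde{L}_i &\leq I(S_{1:k}; \Psi^{V|U}_i \Psi^{X|V}_1 \mid \Psi^U_1 \Phi^U_{1:i} Z_{1:i}^{1:N}) + I(S_{1:k}; \Phi^U_{i+1} Z_{i+1}^{1:N} \mid \Psi^U_1 \Phi^U_{1:i} Z_{1:i}^{1:N} \Psi^{V|U}_i \Psi^{X|V}_1)
\end{align*}
has a genuine gap in the first term, and the obstacle you flag at the end is not a technicality but the defect that breaks this route. The quantity $I(S_{1:k}; \Psi^{V|U}_i \Psi^{X|V}_1 \mid \cdots)$ is not of the form treated by either Lemma~\ref{lem1c} or Lemma~\ref{lem4}. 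Lemma~\ref{lem1c} controls the pair $(S_i, \Psi^{V|U}_{i-1})$ jointly against $(Z_i^{1:N}, \Phi^U_i, \Psi^U_1)$; the index on the bridge is $i-1$, not $i$, precisely because the bits $(S_i, \Psi^{V|U}_{i-1})$ occupy $\mathcal{V}_{V|UZ}$, the positions engineered to be uniform given Eve's block-$i$ view. The variable $\Psi^{V|U}_i$ occupies $\mathcal{V}_{V|UY} \cup ((\mathcal{H}_{V|UY}\setminus\mathcal{V}_{V|UY})\cap\mathcal{V}_{V|U})$, which in general intersects $\mathcal{V}_{V|UZ}\setminus\mathcal{B}_{V|UY}$; so $\Psi^{V|U}_i$ can literally contain bits of $S_i$, making the first term $\Omega(N)$ rather than $O(\delta_N^{(*)})$. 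Even if $\mathcal{B}_{V|UY}$ were chosen to avoid this overlap, the pair $(S_{1:i}, \Psi^{V|U}_i)$ is not a "very high entropy given $Z$" set, and no lemma in the paper gives near-independence of $\Psi^{V|U}_i$ from $S_{1:k}$ conditional on $Z_{1:i}^{1:N}$. Your plan to "manipulate and recognize two instances of Lemmas~\ref{lem1c} and~\ref{lem4}" does not survive being made concrete here.

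The paper's proof avoids this term entirely by ordering the steps differently. It first peels $S_{i+1}$ off and bounds $I(S_{i+1}; \Phi^U_{i+1} Z_{i+1}^{1:N} \Psi^U_1)$ by one $\delta_N^{(*)}$; only then are the bridges $\Psi^{V|U}_i, \Psi^{X|V}_i$ introduced as \emph{additional conditioning} on the left-hand side (the data processing step (d)), followed by the Markov chain $S_{1:i}\Phi^U_{1:i}Z_{1:i}^{1:N} \to \Psi^{V|U}_i \Psi^{X|V}_i \Psi^U_1 \to \Phi^U_{i+1} Z_{i+1}^{1:N} S_{i+1}$ to kill the residual past dependence. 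What then remains are $I(\Psi^{V|U}_i S_{i+1}; \Phi^U_{i+1} Z_{i+1}^{1:N} \Psi^U_1)$ and $I(\Psi^{X|V}_1; \Phi^U_{i+1} Z_{i+1}^{1:N} \Psi^{V|U}_i \Psi^U_1 S_{i+1})$, which are \emph{exactly} Lemma~\ref{lem1c} and Lemma~\ref{lem4} applied to block $i+1$ (in block $i+1$, $\Psi^{V|U}_i$ plays the role of the re-encoded bridge occupying $\mathcal{B}_{V|UY}\subseteq\mathcal{V}_{V|UZ}$). The crucial point is that the bridges must be paired with the \emph{next} block's Eve observables and with $S_{i+1}$, never tested against the current or past $Z$'s. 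Also note that your second term's Markov chain needs $S_{i+1}$ (and possibly $O_{i,2}$) inserted in the conditioning before it holds, which would force you to peel them off explicitly; this is in effect what the paper does, but doing so after conditioning on $\Psi^{V|U}_i$ only compounds the difficulty of your first term.
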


\begin{proof}
See Appendix~\ref{App_lemdifc}.
\end{proof}

We then have
\begin{align*}
\widetilde{L}_{1}
& = I(S_{1:k}; \Psi^{U}_1 \Phi^{U}_{1} Z_{1}^{1:N}) \\
& = I(S_1 ; \Psi^{U}_1 \Phi^{U}_{1} Z_1^{1:N})  +  I(S_{2:k}; \Psi^{U}_1 \Phi^{U}_{1} Z_{1}^{1:N} |S_1)\\
& \stackrel{(a)}{\leq} \delta_N^{(*)} +I(S_{2:k}; \Psi^{U}_1 \Phi^{U}_{1} Z_{1}^{1:N} |S_1)\\
& \leq  \delta_N^{(*)} +I(S_{2:k};  \Psi^{U}_1 \Phi^{U}_{1} Z_{1}^{1:N}S_1 )\\
& \stackrel{(b)}{=} \delta_N^{(*)},
\end{align*}
where $(a)$ follows from Lemma \ref{lem1c}, $(b)$ follows from independence of $S_{2:k}$ and the random variables of Block 1.

Hence, strong secrecy follows from Lemma \ref{lemdifc} because
\begin{align*}
 I(S_{1:k};\Psi^{U}_1\Phi^{U}_{1:k}Z_{1:k}^{1:N}) 
& = \widetilde{L}_{1} + \sum_{i=1}^{k-1} (\widetilde{L}_{i+1}  - \widetilde{L}_{i}) \\
& \leq \delta_N^{(*)} + (k-1) (3\delta_N^{(*)}) \\
& = (3k-2) \delta_N^{(*)}.
\end{align*}

\section{Conclusion}
\label{sec:conclusion}
Our proposed polar coding scheme for the broadcast channel with confidential messages provides an explicit low-complexity scheme achieving the capacity region of~\cite{Watanabe12}, and uses the optimal amount of local randomness at the stochastic encoder. Although the presence of auxiliary random variables and the need to re-align polarization sets through chaining introduces rather involved notation, the coding scheme is conceptually close to a binning proof of the capacity region, in which polarization is used in place of random binning. We believe that a systematic use of this connection will effectively allow one to translate many results proved with output statistics of random binning\cite{yassaee2014achievability} into polar coding schemes.

It is arguable whether the resulting schemes are truly practical, as the block length $N$ and the number of blocks $k$ are likely to be fairly large. Although only random seeds with negligible rate need to be shared between the transmitter and receivers, much work remains to be done to circumvent the need for such seeds.

\appendices
\section{Proof of Lemma~\ref{lem_dist_A}} \label{App_lem_dist_A}
In the following, for joint probability distributions $p_{XY}$ and $q_{XY}$ defined over $\mathcal{X}\times \mathcal{Y}$, we write the conditional relative entropy as
\begin{multline*}
\mathbb{E}_{p_X} \left[ \mathbb{D}(p_{Y|X}||q_{Y|X}) \right]  \triangleq \sum_{x\in \mathcal{X}} p_X(x)\mathbb{D}(p_{Y|X=x}||q_{Y|X=x}).
\end{multline*}

We show the first three inequalities of Lemma \ref{lem_dist_A} in order. Let $i \in \llbracket 2,k-1\rrbracket$.
We have
\begin{align}
& \mathbb{D}(p_{U^{1:N}} || \widetilde{p}_{U_i^{1:N}})  \nonumber  \\ \nonumber
& \stackrel{(a)}{=} \mathbb{D}(p_{A^{1:N}} || \widetilde{p}_{A_i^{1:N}})  \nonumber \displaybreak[0]\\ 
& \stackrel{(b)}{=}  \sum_{j=1}^N \mathbb{E}_{p_{A^{1:j-1}}} \left[ \mathbb{D}(p_{A^{j}|A^{1:j-1}} || \widetilde{p}_{A_i^j|A_i^{1:j-1}} )\right] \nonumber\displaybreak[0] \\ 
& \stackrel{(c)}{=}  \sum_{j\in \mathcal{V}_{U}}  \mathbb{E}_{p_{A^{1:j-1}}}\left[ \mathbb{D}(p_{A^{j}|A^{1:j-1}} || \widetilde{p}_{A_i^j|A_i^{1:j-1}} )\right] \nonumber \\
& \stackrel{(d)}{=} \sum_{j \in \mathcal{V}_{U}} ( \log_2(q^{(U)}) -H(A^{j}|A^{1:j-1}) )  \nonumber\\ \nonumber
& \stackrel{(e)}{\leq} |\mathcal{V}_{U}| \delta_N  \\
& \leq N \delta_N, \label{eq_distA}
\end{align}
where $(a)$ holds by invertibility of $G_n$, $(b)$ holds by the chain rule for divergence \cite{Cover91}, $(c)$ holds by (\ref{eq_sim_A_i}), $(d)$ holds by (\ref{eq_sim_A_i}) and uniformity of $O_i$, $O_{i-1,2}$, and $\Psi_1^U$, $(e)$ holds by definition of $\mathcal{V}_{U}$.

Similarly for $i \in \{ 1, k \}$, using (\ref{eq_sim_A_1}) and (\ref{eq_sim_A_k}) we also have
\begin{align}
 \mathbb{D}(p_{U^{1:N}} || \widetilde{p}_{U_i^{1:N}}) 
 \leq N \delta_N. \label{eq_distA2}
\end{align}

Let $i \in \llbracket 2,k \rrbracket$. We have
\begin{align}
&  \mathbb{E}_{p_{U^{1:N}}} \left[ \mathbb{D}(p_{B^{1:N}|U^{1:N}} || \widetilde{p}_{B_i^{1:N}|U_i^{1:N}}) \right] \nonumber \\ \nonumber
& \stackrel{(a)}{=} \sum_{j=1}^N \mathbb{E}_{p_{B^{1:j-1}U^{1:N}}}\left[ \mathbb{D}(p_{B^{j}|B^{1:j-1}U^{1:N}} || \widetilde{p}_{B_i^j|B_i^{1:j-1}U_i^{1:N}}) \right] \\ \nonumber
& \stackrel{(b)}{=} \sum_{j\in \mathcal{V}_{V|U}} \!\!\!\mathbb{E}_{p_{B^{1:j-1}U^{1:N}}}\left[ \mathbb{D}(p_{B^{j}|B^{1:j-1}U^{1:N}} || \widetilde{p}_{B_i^j|B_i^{1:j-1}U_i^{1:N}}) \right]\\ \nonumber
& \stackrel{(c)}{=} \sum_{j\in \mathcal{V}_{V|U}} ( \log_2(q^{(V)}) -H(B^{j}|B^{1:j-1}U^{1:N}) ) \\ \nonumber
& \stackrel{(d)}{\leq} |\mathcal{V}_{V|U}| \delta_N\\
&  \leq N \delta_N, \label{eqsupbv1}
\end{align}
where $(a)$ holds by the chain rule, $(b)$ holds by~(\ref{defsimBi}), $(c)$ holds by~(\ref{defsimBi}) and uniformity of $\Psi^{V|U}_{i-1}$, $S_i$, and $M_i$, $(d)$ holds by definition of $\mathcal{V}_{V|U}$.

 Then,
 \begin{align}
& \mathbb{D}(p_{V^{1:N}U^{1:N}} || \widetilde{p}_{V_i^{1:N}U_i^{1:N}})   \nonumber \\  \nonumber
&  \stackrel{(a)}{=} \mathbb{D}(p_{B^{1:N}U^{1:N}} || \widetilde{p}_{B_i^{1:N}U_i^{1:N}})  \\  \nonumber
& \stackrel{(b)}{=}  \mathbb{E}_{p_{U^{1:N}}} \left[ \mathbb{D}(p_{B^{1:N}|U^{1:N}} || \widetilde{p}_{B_i^{1:N}|U_i^{1:N}})\right] + \mathbb{D}(p_{U^{1:N}}|| \widetilde{p}_{U_i^{1:N}}) \\
& \stackrel{(c)}{\leq} 2 N \delta_N, \label{eq_distB}
 \end{align}
 where $(a)$ holds by invertibility of $G_n$, $(b)$ holds by the chain rule, $(c)$ holds by \eqref{eq_distA}, \eqref{eq_distA2}, and (\ref{eqsupbv1}).

Similarly, using \eqref{eq_distA2}, and (\ref{eq_sim_Bv_1}), we have
\begin{align}
 \mathbb{D}(p_{V^{1:N}U^{1:N}}|| \widetilde{p}_{V_1^{1:N}U_1^{1:N}}) 
{\leq} 2 N \delta_N. \label{eq_distB2}
 \end{align}

Let $i \in \llbracket 2,k \rrbracket$. We have
\begin{align}
&  \mathbb{E}_{p_{V^{1:N}}} \left[ \mathbb{D}(p_{T^{1:N}|V^{1:N}} || \widetilde{p}_{T_i^{1:N}|V_i^{1:N}}) \right]\nonumber \displaybreak[0] \\ \nonumber
& \stackrel{(a)}{=} \sum_{j=1}^N \mathbb{E}_{p_{T^{1:j-1}V^{1:N}}}\left[\mathbb{D}(p_{T^{j}|T^{1:j-1}V^{1:N}}||  \widetilde{p}_{T_i^j|T_i^{1:j-1}V_i^{1:N}})\right] \displaybreak[0] \\ \nonumber
& \stackrel{(b)}{=} \sum_{j\in \mathcal{V}_{X|V}} \!\!\!\mathbb{E}_{p_{T^{1:j-1}V^{1:N}}}\left[\mathbb{D}(p_{T^{j}|T^{1:j-1}V^{1:N}} || \widetilde{p}_{T_i^j|T_i^{1:j-1}V_i^{1:N}})\right] \displaybreak[0]
\\ \nonumber
& \stackrel{(c)}{=} \sum_{j\in \mathcal{V}_{X|V}} ( \log_2(q^{(X)}) -H(T^{j}|T^{1:j-1}V^{1:N}) ) \\ \nonumber
& \stackrel{(d)}{\leq} |\mathcal{V}_{X|V}| \delta_N\\
&  \leq N \delta_N, \label{eqsupd1XV}
\end{align}
where $(a)$ holds by the chain rule, $(b)$ holds by~(\ref{defsimTi}), $(c)$ holds by~(\ref{defsimTi}) and uniformity of the $q^{(X)}$-ary symbols in $\widetilde{T}_i^{1:N}[ \mathcal{V}_{X|V}]$, $(d)$ holds by definition of $\mathcal{V}_{X|V}$.
 
 Then,
 \begin{align}
& \mathbb{D}(p_{X^{1:N}V^{1:N}} || \widetilde{p}_{X_i^{1:N}V_i^{1:N}})  \nonumber \\ \nonumber
&  \stackrel{(a)}{=}  \mathbb{D}(p_{T^{1:N}V^{1:N}} || \widetilde{p}_{T_i^{1:N}V_i^{1:N}})  \\ \nonumber
& \stackrel{(b)}{=} \mathbb{E}_{p_{V^{1:N}}} \left[ \mathbb{D}(p_{T^{1:N}|V^{1:N}} || \widetilde{p}_{T_i^{1:N}|V_i^{1:N}}) \right]+ \mathbb{D}(p_{V^{1:N}} || \widetilde{p}_{V_i^{1:N}}) \\
& \stackrel{(c)}{\leq} 3 N \delta_N, \label{eq_distT}
 \end{align}
 where $(a)$ holds by invertibility of $G_n$, $(b)$ holds by the chain rule, $(c)$ holds by \eqref{eq_distB} and (\ref{eqsupd1XV}) .
 
 Similarly, using (\ref{defsimT_1}) and~\eqref{eq_distB2}, we have
\begin{align}
 \mathbb{D}(p_{X^{1:N}V^{1:N}} || \widetilde{p}_{X_1^{1:N}V_1^{1:N}}) 
{\leq} 3 N \delta_N. \label{eq_distT2}
 \end{align}

Note that, as remarked in \cite{Goela13}, upper-bounding the divergence with a chain rule is easier than directly upper-bounding the variational distance as in \cite{Korada10,Honda13}.

Using \eqref{eq_distA}, \eqref{eq_distA2}, \eqref{eq_distB}, \eqref{eq_distB2}, \eqref{eq_distT}, \eqref{eq_distT2}, we now prove the last inequality in Lemma \ref{lem_dist_A}. Let $i \in \llbracket 1,k \rrbracket$. Because of the Markov chains $U \to V \to X \to (YZ)$ and $\widetilde{U}_i^{1:N} \to \widetilde{V}_i^{1:N} \to \widetilde{X}_i^{1:N} \to (Y_i^{1:N}Z_i^{1:N})$, we have
\begin{align*}
&	p_{U^{1:N}V^{1:N}X^{1:N}Y^{1:N}Z^{1:N}} \\ 
& \phantom{mmmmm}= p_{Y^{1:N}Z^{1:N}|X^{1:N}}p_{X^{1:N}|V^{1:N}} p_{U^{1:N}V^{1:N}},\\
& \widetilde{p}_{U_i^{1:N}V_i^{1:N}X_i^{1:N}Y_i^{1:N}Z_i^{1:N}} \\ 
& \phantom{mmmmm}= \widetilde{p}_{Y_i^{1:N}Z_i^{1:N}|X_i^{1:N}} \widetilde{p}_{X_i^{1:N}|V_i^{1:N}} \widetilde{p}_{U_i^{1:N}V_i^{1:N}}.
\end{align*}

Hence, since $p_{Y^{1:N}Z^{1:N}|X^{1:N}} =\widetilde{p}_{Y_i^{1:N}Z_i^{1:N}|X_i^{1:N}}$, we have by \cite[Lemma 17]{Cuff09}
\begin{align}
& \mathbb{V}(p_{U^{1:N}V^{1:N}X^{1:N}Y^{1:N}Z^{1:N}}, \widetilde{p}_{U_i^{1:N}V_i^{1:N}X_i^{1:N}Y_i^{1:N}Z_i^{1:N}}) \nonumber  \\    
& = \mathbb{V}(p_{X^{1:N}|V^{1:N}} p_{U^{1:N}V^{1:N}}, \widetilde{p}_{X_i^{1:N}|V_i^{1:N}} \widetilde{p}_{U_i^{1:N}V_i^{1:N}} ).\label{eq:part1}
 \end{align}

We also have
\begin{align}
& \mathbb{V}(p_{X^{1:N}|V^{1:N}} p_{U^{1:N}V^{1:N}}, \widetilde{p}_{X_i^{1:N}|V_i^{1:N}} p_{U^{1:N}V^{1:N}} ) \nonumber \\ \nonumber
& = \mathbb{V}(p_{X^{1:N}|V^{1:N}} p_{V^{1:N}}, \widetilde{p}_{X_i^{1:N}|V_i^{1:N}} p_{V^{1:N}} ) \\ \nonumber
& \stackrel{(a)}{\leq} \mathbb{V}(p_{X^{1:N}|V^{1:N}} p_{V^{1:N}}, \widetilde{p}_{X_i^{1:N}V_i^{1:N}}  ) \\ \nonumber
& \phantom{mm}+ \mathbb{V}(\widetilde{p}_{X_i^{1:N}V_i^{1:N}}, \widetilde{p}_{X_i^{1:N}|V_i^{1:N}} p_{V^{1:N}} )\\ \nonumber
& = \mathbb{V}(p_{X^{1:N}V^{1:N}} , \widetilde{p}_{X_i^{1:N}V_i^{1:N}}  ) + \mathbb{V}(\widetilde{p}_{V_i^{1:N}},  p_{V^{1:N}} )\\ \nonumber
& \leq  \mathbb{V}(p_{X^{1:N}V^{1:N}} , \widetilde{p}_{X_i^{1:N}V_i^{1:N}}  ) + \mathbb{V}( p_{U^{1:N}V^{1:N}},  \widetilde{p}_{U_i^{1:N}V_i^{1:N}} )\\ 
& \stackrel{(b)}{\leq}   \delta_N^{(XV)} + \delta_N^{(UV)},  \label{eq:part2}
\end{align}
where $(a)$ holds by the triangle inequality, and $(b)$ holds by~\eqref{eq_distB}, \eqref{eq_distB2} and \eqref{eq_distT}, \eqref{eq_distT2} using Pinsker's inequality.

Finally, we have
\begin{align}
& \mathbb{V}(p_{U^{1:N}V^{1:N}X^{1:N}Y^{1:N}Z^{1:N}}, \widetilde{p}_{U_i^{1:N}V_i^{1:N}X_i^{1:N}Y_i^{1:N}Z_i^{1:N}}) \nonumber  \\  \nonumber
& \stackrel{(a)}{\leq} \mathbb{V}(p_{X^{1:N}|V^{1:N}} p_{U^{1:N}V^{1:N}}, \widetilde{p}_{X_i^{1:N}|V_i^{1:N}} p_{U^{1:N}V^{1:N}} ) \\ \nonumber
& \phantom{mm} + \mathbb{V}(\widetilde{p}_{X_i^{1:N}|V_i^{1:N}} p_{U^{1:N}V^{1:N}}, \widetilde{p}_{X_i^{1:N}|V_i^{1:N}} \widetilde{p}_{U_i^{1:N}V_i^{1:N}} )  \\ \nonumber
& = \mathbb{V}(p_{X^{1:N}|V^{1:N}} p_{U^{1:N}V^{1:N}}, \widetilde{p}_{X_i^{1:N}|V_i^{1:N}} p_{U^{1:N}V^{1:N}} ) \\ \nonumber
& \phantom{mm}+ \mathbb{V}( p_{U^{1:N}V^{1:N}},  \widetilde{p}_{U_i^{1:N}V_i^{1:N}} )  \\ \nonumber
& \stackrel{(b)}{\leq} \delta_N^{(XV)} + 2 \delta_N^{(UV)},
\end{align}
where $(a)$ holds by the triangle inequality and Equation \eqref{eq:part1}, $(b)$ holds by \eqref{eq_distB}, \eqref{eq_distB2}, and \eqref{eq:part2} using Pinsker's inequality. %

\section{Randomization in (\ref{eq_sim_A_1})--(\ref{defsimTi})} \label{App_lemrandA}
We here justify that the rate of uniform randomness required for successive cancellation encoding in~(\ref{eq_sim_A_1})--(\ref{defsimTi}) is negligible. We will make use of the following lemma.
\begin{lem} \label{lemcoin}
	Let $N \in \mathbb{N}$, and let $\mathcal{J}_N$ be a subset of $\llbracket 1,N \rrbracket$ such that
	\begin{align}
\lim_{N \to \infty}    \frac{|\mathcal{J}_N|}{N} =0. \label{eq:hyp1}
	\end{align}
 Consider $|\mathcal{J}_N|$ sources indexed by $j \in \mathcal{J}_N$, $(\mathcal{X}\times \mathcal{Y},p_{X_jY_j})$ where $\mathcal{X}$ and $\mathcal{Y}$ are finite alphabets. 

	Let $p_U$ denote the uniform distribution over $\mathcal{X}$. We call a sample drawn from $p_U$ a coin toss. Using the interval algorithm~\cite{hoshi1997interval} and assuming that for $j \in \mathcal{J}_N$,  $y_j$ is drawn from $\widetilde{p}_{Y_j}$, one can sample from $p_{X_j|Y_j=y_j} $ using $L_j$ independent coin tosses such that for any $\epsilon>0$ with probability arbitrarily close to one as $N$ goes to infinity,
	$$
\frac{ \sum_{j \in \mathcal{J}_N} \mathbb{E}_{\widetilde{p}_{Y_j}} [L_{j}]}{N} < \epsilon.
	$$
\end{lem}
 
\begin{proof}
For any $j \in \mathcal{J}_N$, using the interval algorithm by~\cite[Theorem 3]{hoshi1997interval}, one can sample from $p_{X_j|Y_j=y_j}$ using $L_{j}$ independent coin tosses with an expected number of coin tosses upper-bounded as follows.
\begin{align}
\mathbb{E}[L_{j}] \leq \frac{H(X_j|Y_j=y_j)}{\log |\mathcal{X}|} + \frac{|\mathcal{X}|}{|\mathcal{X}|-1} + \frac{\log 2}{\log |\mathcal{X}|}. \label{eq:hoshi1}
\end{align}
From \eqref{eq:hoshi1}, we obtain the trivial upper bound
\begin{align*}
\mathbb{E}[L_{j}] \leq 1 + \frac{|\mathcal{X}|}{|\mathcal{X}|-1} + \frac{\log 2}{\log |\mathcal{X}|}. 
\end{align*}
We thus have
\begin{align}
 \mathbb{E}\left[ \frac{\sum_{j \in \mathcal{J}_N} \mathbb{E}_{\widetilde{p}_{Y_j}} [L_{j}] }{N}\right]\nonumber\nonumber
& 
= \frac{1}{N} \sum_{j \in \mathcal{J}_N} \mathbb{E}_{\widetilde{p}_{Y_j}}  [ \mathbb{E}\left[ L_{j}] \right]\nonumber \\\nonumber%
& \leq \frac{|\mathcal{J}_N|}{N}  \left[ 1 + \frac{|\mathcal{X}|}{|\mathcal{X}|-1} +\frac{\log 2}{\log |\mathcal{X}|}  \right] \\  
& \xrightarrow{N \to \infty}{0}, \label{eqlim}
\end{align}
and we conclude with Markov's inequality.
\end{proof}

We start by studying the rate of uniform randomness required for successive cancellation encoding in~(\ref{eq_sim_A_1}), (\ref{eq_sim_A_i}), and~(\ref{eq_sim_A_k}). For any $i\in \llbracket 1,k\rrbracket$, note that the random decisions in~(\ref{eq_sim_A_1}), (\ref{eq_sim_A_i}), and~(\ref{eq_sim_A_k}),  
$$
\widetilde{p}_{{A}_i^j|{A}_i^{1:j-1}} ({a}_i^j|{a}_i^{1:j-1}) \triangleq {p}_{A^j|A^{1:j-1}} (a_i^j|a_i^{1:j-1})  \text{if }j\in {\mathcal{V}}_{U}^c,
$$
can be replaced, using the result in \cite{Chou15f}, by
\begin{multline*}
	\widetilde{p}_{{A}_i^j|{A}_i^{1:j-1}} ({a}_i^j|{a}_i^{1:j-1}) \\ \triangleq \begin{cases}
	{p}_{A^j|A^{1:j-1}} (a_i^j|a_i^{1:j-1})   & \text{if }j\in  {\mathcal{V}}^c_{U} \backslash {\mathcal{H}}_{U}^c\\
	\mathds{1} \left\{ {a}_i^j = (a_i^j)^* \right\}& \text{if }j\in   {\mathcal{H}}_{U}^c
	\end{cases},
\end{multline*}
where $(a_i^j)^* \triangleq \displaystyle\argmax_{a}{p}_{A^j|A^{1:j-1}} (a|a_i^{1:j-1}) $.
Hence, the rate of uniform randomness required for successive cancellation encoding in~(\ref{eq_sim_A_1}), (\ref{eq_sim_A_i}), and~(\ref{eq_sim_A_k}) is negligible, with probability arbitrarily close to one, by Lemma \ref{lemcoin} applied with the substitutions $\mathcal{J_N} \leftarrow\mathcal{H}_{U} \backslash {\mathcal{V}}_{U}$, $X_j \leftarrow A^j$, $Y_j \leftarrow A^{1:j-1}$, where $j \in {\mathcal{V}}^c_{U} \backslash {\mathcal{H}}_{U}^c$. The assumption of Lemma \ref{lemcoin} is indeed satisfied since by Lemma \ref{lemcard_1} and Lemma \ref{lemcard_2}, 
\begin{align*}
\frac{|{\mathcal{V}}^c_{U} \backslash {\mathcal{H}}_{U}^c |}{N} 
& = \frac{|{\mathcal{V}}^c_{U}|}{N} -\frac{| {\mathcal{H}}^c_{U} | }{N} \\
&\xrightarrow{N \to \infty}{0}.
\end{align*}

Similarly, for any $i\in \llbracket 1,k\rrbracket$, the random decisions in (\ref{eq_sim_Bv_1}) and (\ref{defsimBi}),  
\begin{multline*}
\widetilde{p}_{{B}_i^j|{B}_i^{1:j-1} U_i^{1:N}} ({b}_i^j|{b}_i^{1:j-1}\widetilde{u}_{i}^{1:N}) \\ \triangleq
  {p}_{B^j|B^{1:j-1}U^{1:N}} (b_i^j|b_i^{1:j-1}\widetilde{u}_{i}^{1:N})  \text{ if }j\in {\mathcal{V}}_{V|U}^c,
\end{multline*}
can be replaced, using the result in \cite{Chou15f}, by
\begin{align*}
&\widetilde{p}_{{B}_i^j|{B}_i^{1:j-1} U_i^{1:N}} ({b}_i^j|{b}_i^{1:j-1}\widetilde{u}_{i}^{1:N})  \\ 
& \triangleq  \begin{cases}
	{p}_{B^j|B^{1:j-1}U^{1:N}} (b_i^j|b_i^{1:j-1}\widetilde{u}_{i}^{1:N})   &\text{if }j\in  {\mathcal{V}}_{V|U}^c \backslash {\mathcal{H}}^c_{V|U}\\
	\mathds{1} \left\{ {b}_i^j = ({b}_i^j)^* \right\}  &\text{if }j\in   {\mathcal{H}}_{V|U}^c
	\end{cases},
\end{align*}
where $({b}_i^j)^*\triangleq \displaystyle\argmax_{b}{p}_{B^j|B^{1:j-1}U^{1:N}} (b|b_i^{1:j-1}\widetilde{u}_{i}^{1:N})$,
and
for any $i\in \llbracket 1,k\rrbracket$, the random decisions in (\ref{defsimT_1}) and (\ref{defsimTi}),  
\begin{multline*}
\widetilde{p}_{T^j_i|T_i^{1:j-1}V_i^{1:N}} (t_i^j|t_i^{1:j-1}\widetilde{v}_i^{1:N}) \\ \triangleq
  {p}_{T^j|T^{1:j-1}V^{1:N}} (t_i^j|t_i^{1:j-1}\widetilde{v}_i^{1:N})  \text{ if }j \in \mathcal{V}_{X|V}^c,
\end{multline*}
can be replaced, using the result in \cite{Chou15f}, by
\begin{align*}
	&\widetilde{p}_{T^j_i|T_i^{1:j-1}V_i^{1:N}} (t_i^j|t_i^{1:j-1}\widetilde{v}_i^{1:N}) \\
	& \triangleq \begin{cases}
	{p}_{T^j|T^{1:j-1}V^{1:N}} (t_i^j|t_i^{1:j-1}\widetilde{v}_i^{1:N})  & \text{if }j\in  {\mathcal{V}}^c_{X|V} \backslash {\mathcal{H}}^c_{X|V}\\
	\mathds{1} \left\{ {t}_i^j = ({t}_i^j)^* \right\} & \text{if }j\in   {\mathcal{H}}_{X|V}^c
	\end{cases},
\end{align*}
where $ ({t}_i^j)^* \triangleq \displaystyle\argmax_{t}{p}_{T^j|T^{1:j-1}V^{1:N}} (t|t_i^{1:j-1}\widetilde{v}_i^{1:N})$.

Hence, the rate of uniform randomness required for successive cancellation encoding in~(\ref{eq_sim_Bv_1})--(\ref{defsimTi}) is negligible, with probability arbitrarily close to one, by Lemma \ref{lemcoin} applied with the substitutions $\mathcal{J_N} \leftarrow {\mathcal{V}}_{V|U}^c \backslash {\mathcal{H}}^c_{V|U}$, $X_j \leftarrow B^j$, $Y_j \leftarrow (B^{1:j-1},U^{1:N})$, where $j \in {\mathcal{V}}_{V|U}^c \backslash {\mathcal{H}}^c_{V|U}$, and by Lemma \ref{lemcoin} applied with the substitutions $\mathcal{J_N} \leftarrow {\mathcal{V}}^c_{X|V} \backslash {\mathcal{H}}^c_{X|V}$, $X_j \leftarrow T^j$, $Y_j \leftarrow (T^{1:j-1},V^{1:N})$, where $j \in {\mathcal{V}}^c_{X|V} \backslash {\mathcal{H}}^c_{X|V}$.
\begin{rem}
The question whether the randomized decisions for the bits in positions ${\mathcal{V}}^c_{U} \backslash {\mathcal{H}}_{U}^c$, ${\mathcal{V}}_{V|U}^c \backslash {\mathcal{H}}^c_{V|U}$, and ${\mathcal{V}}^c_{X|V} \backslash {\mathcal{H}}^c_{X|V}$, can be replaced by deterministic decisions, remains open \cite{Chou15f}. 
\end{rem}

\section{Proof of Lemma~\ref{lem1c}} \label{App_lem1c}

We will use of the following lemma.

\begin{lem} \label{lemdrev3}
Consider the random variables $(F,G)$ distributed according to $p_{FG}$ over the alphabets $\mathcal{F} \times \mathcal{G}$, where $|\mathcal{F} | = q^{(F)}$, with $q^{(F)}$ prime. Consider $N$ independent realizations of these random variables $F^{1:N}$ and $G^{1:N}$.
Consider the random variables $(\widetilde{F}^{1:N},\widetilde{G}^{1:N})$ distributed according to $\widetilde{p}_{F^{1:N}G^{1:N}}$ over the alphabets $ \mathcal{F}^N \times \mathcal{G}^N$. Define $\widetilde{E}^{1:N} \triangleq \widetilde{F}^{1:N} G_n$ and ${E}^{1:N} \triangleq {F}^{1:N} G_n$. Define also 
 $$\mathcal{V}_{F|G} \triangleq  \left\{ i \in \llbracket 1,N \rrbracket: H( E^i | E^{1:i-1} G^{1:N}) >  \log_2(q^{(F)}) - \delta_N  \right\},$$ with $\delta_N\eqdef 2^{-N^\beta}$ and $\beta\in]0,\tfrac{1}{2}[$.

Assume that 
$$\mathbb{V}(p_{E^{1:N}G^{1:N}}, \widetilde{p}_{E^{1:N}G^{1:N}}) \leq \delta_N^{(FG)}.$$

Then, we have 
\begin{multline*}
\mathbb{V}(\widetilde{p}_{E^{1:N}[\mathcal{V}_{F|G}] G^{1:N}},\widetilde{p}_{E^{1:N}[\mathcal{V}_{F|G}]} \widetilde{p}_{G^{1:N}})  \\ \leq  \sqrt{2\log 2} \sqrt{N \delta_N} + 3\delta_N^{(FG)}.
\end{multline*}
\end{lem}

\begin{proof}
	We have 
	\begin{align}
&  \mathbb{V}({p}_{E^{1:N}[\mathcal{V}_{F|G}] G^{1:N} },\widetilde{p}_{E^{1:N}[\mathcal{V}_{F|G}]} \widetilde{p}_{ G^{1:N}})   \nonumber\\  \nonumber
&  \stackrel{(a)}{\leq} \mathbb{V}({p}_{E^{1:N}[\mathcal{V}_{F|G}] G^{1:N}},{p}_{E^{1:N}[\mathcal{V}_{F|G}]} p_{ G^{1:N}}) \\ \nonumber
& \phantom{mm} + \mathbb{V}({p}_{E^{1:N}[\mathcal{V}_{F|G}]} p_{ G^{1:N}},\widetilde{p}_{E^{1:N}[\mathcal{V}_{F|G}]} \widetilde{p}_{ G^{1:N}}) \\  \nonumber
& \stackrel{(b)}{\leq} \mathbb{V}({p}_{E^{1:N}[\mathcal{V}_{F|G}] G^{1:N}},{p}_{E^{1:N}[\mathcal{V}_{F|G}]} p_{ G^{1:N}}) \\ \nonumber
& \phantom{mm}+ \mathbb{V}({p}_{E^{1:N}[\mathcal{V}_{F|G}]} ,\widetilde{p}_{E^{1:N}[\mathcal{V}_{F|G}]} ) +  \mathbb{V}(p_{ G^{1:N}},\widetilde{p}_{ G^{1:N}}) \\  \nonumber
& \stackrel{(c)}{\leq} \mathbb{V}({p}_{E^{1:N}[\mathcal{V}_{F|G}] G^{1:N}},{p}_{E^{1:N}[\mathcal{V}_{F|G}]} p_{ G^{1:N}})  + 2 \delta_N^{(FG)} \\  \nonumber
& \stackrel{(d)}{\leq} \sqrt{2 \log2} \sqrt{ \mathbb{D}({p}_{E^{1:N}[\mathcal{V}_{F|G}] G^{1:N}} || {p}_{E^{1:N}[\mathcal{V}_{F|G}]} p_{G^{1:N}})}  \\ \nonumber
& \phantom{mm} + 2 \delta_N^{(FG)} \\  \nonumber
& = \sqrt{2 \log2} \sqrt{ I( E^{1:N}[\mathcal{V}_{F|G}] ;G^{1:N} )}  + 2 \delta_N^{(FG)} \\  
& \stackrel{(e)}{\leq}   \sqrt{2\log 2} \sqrt{N \delta_N} + 2\delta_N^{(FG)}, \label{eq_sec_int2}
\end{align}
where $(a)$ and $(b)$ follow from the triangle inequality, $(c)$ holds by hypothesis, $(d)$ holds by Pinsker's inequality, $(e)$ holds because using the fact that conditioning reduces entropy we have
\begin{align*}
& I( E^{1:N}[\mathcal{V}_{F|G}] ; G^{1:N}  )\\
& =  H( E^{1:N}[\mathcal{V}_{F|G}] ) - H({E}^{1:N}[ \mathcal{V}_{F|G} ]| G^{1:N})\\
& \leq |\mathcal{V}_{F|G}| \log_2(q^{(F)})- \sum_{j \in \mathcal{V}_{F|G} } H(E^{j} | E^{1:j-1} G^{1:N})\\
& \leq |\mathcal{V}_{F|G}|\log_2(q^{(F)}) + |\mathcal{V}_{F|G}| (\delta_N -\log_2(q^{(F)}))\\
& \leq N \delta_N.
\end{align*}
We then obtain
\begin{align}
&\mathbb{V}(\widetilde{p}_{E^{1:N}[\mathcal{V}_{F|G}] G^{1:N}},\widetilde{p}_{E^{1:N}[\mathcal{V}_{F|G}]} \widetilde{p}_{G^{1:N}}) \nonumber \displaybreak[0]\\ \nonumber
& \stackrel{(a)}{\leq} \mathbb{V}(\widetilde{p}_{E^{1:N}[\mathcal{V}_{F|G}] G^{1:N}},{p}_{E^{1:N}[\mathcal{V}_{F|G}] G^{1:N}})\\ \nonumber
& \phantom{mm}
+  \mathbb{V}({p}_{E^{1:N}[\mathcal{V}_{F|G}] G^{1:N}},\widetilde{p}_{E^{1:N}[\mathcal{V}_{F|G}]} \widetilde{p}_{ G^{1:N}})  \displaybreak[0] \\
& \stackrel{(b)}{\leq}  \sqrt{2\log 2} \sqrt{N \delta_N} + 3\delta_N^{(FG)}, \label{eq_sec_int3}
\end{align}
where $(a)$ holds by the triangle inequality, $(b)$ holds by hypothesis, and (\ref{eq_sec_int2}).
\end{proof}
Let $i \in \llbracket 1 , k \rrbracket$. With the substitution $F^{1:N} \leftarrow V^{1:N}$, $E^{1:N} \leftarrow B^{1:N}$, $G^{1:N} \leftarrow (U^{1:N}Z^{1:N})$, $\widetilde{F}^{1:N} \leftarrow \widetilde{V}_i^{1:N}$, $\widetilde{E}^{1:N} \leftarrow \widetilde{B}_i^{1:N}$, $\widetilde{G}^{1:N} \leftarrow (\widetilde{U}_i^{1:N}Z_i^{1:N})$, and $\delta_N^{(FG)} \leftarrow \delta_N^{(P)}$ by Lemma \ref{lem_dist_A}, we have by Lemma \ref{lemdrev3}
\begin{multline}
\mathbb{V}(\widetilde{p}_{B_i^{1:N}[\mathcal{V}_{V|UZ}] U_i^{1:N} Z_i^{1:N}},\widetilde{p}_{B_i^{1:N}[\mathcal{V}_{V|UZ}]} \widetilde{p}_{U_i^{1:N} Z_i^{1:N}})\\ {\leq}  \sqrt{2\log 2} \sqrt{N \delta_N} + 3\delta_N^{(P)}, \label{eq_sec_int3}
\end{multline}

Then, for $N$ large enough by \cite{bookCsizar},
\begin{align*}
& I(S_i \Psi^{V|U}_{i-1} ; Z_i^{1:N} \Phi^{U}_i \Psi^{U}_i)  \\
& \leq  I (\widetilde{B}_i^{1:N}[ \mathcal{V}_{V|UZ}] ; Z_i^{1:N} \widetilde{U}_i^{1:N})\\
& \leq   \mathbb{V}(\widetilde{p}_{B_i^{1:N}[\mathcal{V}_{V|UZ}] U_i^{1:N} Z_i^{1:N}},\widetilde{p}_{B_i^{1:N}[\mathcal{V}_{V|UZ}]} \widetilde{p}_{U_i^{1:N} Z_i^{1:N}}) \\
& \phantom{lm} \times \log_2 \frac{| \mathcal{V}_{V|UZ}|}{\mathbb{V}(\widetilde{p}_{B_i^{1:N}[\mathcal{V}_{V|UZ}] U_i^{1:N} Z_i^{1:N}},\widetilde{p}_{B_i^{1:N}[\mathcal{V}_{V|UZ}]} \widetilde{p}_{ U_i^{1:N} Z_i^{1:N}})} \\
& \leq  \sqrt{2\log 2} \sqrt{N \delta_N}(1+ 6 \sqrt{ 2 } + 3 \sqrt{3}) ( N   \\
& \phantom{lm}- \log_2 (\sqrt{2\log 2} \sqrt{N \delta_N}(1+ 6 \sqrt{ 2 } + 3 \sqrt{3})) ),
\end{align*}
where we have used (\ref{eq_sec_int3}) and that $x \mapsto x \log x$ is decreasing for $x>0$ small enough.
\section{Proof of Lemma~\ref{lem4}} \label{App_lem4}

With the substitution $F^{1:N} \leftarrow X^{1:N}$, $E^{1:N} \leftarrow T^{1:N}$, $G^{1:N} \leftarrow (U^{1:N}V^{1:N}Z^{1:N})$, $\widetilde{F}^{1:N} \leftarrow \widetilde{X}_i^{1:N}$, $\widetilde{E}^{1:N} \leftarrow \widetilde{T}_i^{1:N}$, $\widetilde{G}^{1:N} \leftarrow (\widetilde{U}_i^{1:N}\widetilde{V}_i^{1:N}Z_i^{1:N})$, and $\delta_N^{(FG)} \leftarrow \delta_N^{(P)}$ by Lemma \ref{lem_dist_A}, we have by Lemma \ref{lemdrev3}
\begin{align*}
&\mathbb{V}(\widetilde{p}_{T_i^{1:N}[\mathcal{V}_{X|UVZ}] U_i^{1:N} V_i^{1:N}Z_i^{1:N}},\widetilde{p}_{T_i^{1:N}[\mathcal{V}_{X|UVZ}]} \widetilde{p}_{  U_i^{1:N} V_i^{1:N} Z_i^{1:N}}) \\
& \leq \sqrt{2\log 2} \sqrt{N \delta_N}+ 3\delta_N^{(P)}. 
\end{align*}
Hence, since $\mathcal{V}_{X|VZ} = \mathcal{V}_{X|UVZ}$ by the Markov chain $U - V - X - Z$, we have
\begin{align}
\mathbb{V}^* \leq \sqrt{2\log 2} \sqrt{N \delta_N}+ 3\delta_N^{(P)}, \label{eq_sec_int3b}
\end{align}
where we have defined 
\begin{align*}
&\mathbb{V}^* \triangleq \\ 
& \phantom{ml}\mathbb{V}(\widetilde{p}_{T_i^{1:N}[\mathcal{V}_{X|VZ}] U_i^{1:N} V_i^{1:N}Z_i^{1:N}},\widetilde{p}_{T_i^{1:N}[\mathcal{V}_{X|VZ}]} \widetilde{p}_{ U_i^{1:N} V_i^{1:N} Z_i^{1:N}}).
\end{align*}

Then, for $N$ large enough,
\begin{align*}
& I( \Psi^{X|V}_i ; Z_{i}^{1:N} \Psi^{V|U}_{i-1} S_{i} \Phi^{U}_i \Psi^{U}_i) \\
& = I(\widetilde{T}_i^{1:N}[ \mathcal{V}_{X|VZ}] ; Z_{i}^{1:N} \widetilde{B}_{i}^{1:N}[ \mathcal{H}_{V|UZ}] \Phi^{U}_i \Psi^{U}_i )\\
& \leq I(\widetilde{T}_i^{1:N}[ \mathcal{V}_{X|VZ}] ; Z_{i}^{1:N} \widetilde{B}_{i}^{1:N} \widetilde{U}_{i}^{1:N})\\
&  \stackrel{(a)}{=} I(\widetilde{T}_i^{1:N}[ \mathcal{V}_{X|VZ}] ; Z_{i}^{1:N} \widetilde{V}_{i}^{1:N} \widetilde{U}_{i}^{1:N})\\
& \stackrel{(b)}{\leq}   \mathbb{V}^*\log_2 \frac{|\mathcal{V}_{X|VZ}|}{\mathbb{V}^*} \\
& \stackrel{(c)}{\leq} \sqrt{2\log 2} \sqrt{N \delta_N}(1+ 6 \sqrt{ 2 } + 3 \sqrt{3}) ( N \\
& \phantom{mm} - \log_2 (\sqrt{2\log 2} \sqrt{N \delta_N}(1+ 6 \sqrt{ 2 } + 3 \sqrt{3})) ),
\end{align*}
where $(a)$ holds by invertibility of $G_n$, $(b)$ holds by \cite{bookCsizar}, $(c)$ holds (\ref{eq_sec_int3b}) and because $x \mapsto x \log x$ is decreasing for $x>0$ small enough.

\section{Proof of Lemma~\ref{lemdifc}} \label{App_lemdifc}
Let $i \in \llbracket 1 , k-1 \rrbracket$. We have
\begin{align*}
& \widetilde{L}_{i+1}  - \widetilde{L}_{i} \\
& = I(S_{1:k}; \Psi^{U}_1 \Phi^{U}_{1:i+1} Z_{1:i+1}^{1:N}) - I(S_{1:k}; \Psi^{U}_1 \Phi^{U}_{1:i} Z_{1:i}^{1:N})\\
& = I(S_{1:k}; \Phi^{U}_{i+1} Z_{i+1}^{1:N} | \Psi^{U}_1  \Phi^{U}_{1:i} Z_{1:i}^{1:N}) \\
& = I(S_{1:i+1}; \Phi^{U}_{i+1} Z_{i+1}^{1:N} |\Psi^{U}_1 \Phi^{U}_{1:i} Z_{1:i}^{1:N})\\
& \phantom{mmmmmm} +I(S_{i+2:k}; \Phi^{U}_{i+1} Z_{i+1}^{1:N} | \Psi^{U}_1 \Phi^{U}_{1:i} Z_{1:i}^{1:N}  S_{1:i+1})  \\
&  \stackrel{(a)}{\leq}  I(S_{1:i+1} \Phi^{U}_{1:i} Z_{1:i}^{1:N}  ; \Phi^{U}_{i+1} Z_{i+1}^{1:N} |\Psi^{U}_1  )  \\
& \phantom{mmmmmm} +I(S_{i+2:k}; \Phi^{U}_{1:i+1}   Z_{1:i+1}^{1:N} S_{1:i+1}  \Psi^{U}_1   ) \\
& \stackrel{(b)}{=}  I(S_{1:i+1} \Phi^{U}_{1:i} Z_{1:i}^{1:N}  ; \Phi^{U}_{i+1} Z_{i+1}^{1:N} |\Psi^{U}_1  )  \\
& = I(S_{i+1}  ; \Phi^{U}_{i+1} Z_{i+1}^{1:N} |\Psi^{U}_1  ) \\
& \phantom{mmmmmm}+ I(S_{1:i} \Phi^{U}_{1:i} Z_{1:i}^{1:N} ; \Phi^{U}_{i+1} Z_{i+1}^{1:N} |\Psi^{U}_1 S_{i+1} )  \\
& \stackrel{(c)}{\leq} \delta_N^{(*)} + I(S_{1:i} \Phi^{U}_{1:i} Z_{1:i}^{1:N} ; \Phi^{U}_{i+1} Z_{i+1}^{1:N} |\Psi^{U}_1 S_{i+1} )   \\
& \leq \delta_N^{(*)} + I(S_{1:i} \Phi^{U}_{1:i} Z_{1:i}^{1:N} ; \Phi^{U}_{i+1} Z_{i+1}^{1:N} S_{i+1}|\Psi^{U}_1 )    \\
& \stackrel{(d)}{\leq} \delta_N^{(*)} + I(S_{1:i} \Phi^{U}_{1:i} Z_{1:i}^{1:N} \Psi^{V|U}_i \Psi^{X|V}_i ; \Phi^{U}_{i+1} Z_{i+1}^{1:N} S_{i+1} |\Psi^{U}_1 )  \\
& = \delta_N^{(*)} + I( \Psi^{V|U}_i\Psi^{X|V}_i; \Phi^{U}_{i+1} Z_{i+1}^{1:N} S_{i+1} |\Psi^{U}_1 ) \\
& \phantom{mmlm}+ I(S_{1:i} \Phi^{U}_{1:i} Z_{1:i}^{1:N}; \Phi^{U}_{i+1} Z_{i+1}^{1:N} S_{i+1} |\Psi^{V|U}_i\Psi^{X|V}_i\Psi^{U}_1)  \\
&  \stackrel{(e)}{=} \delta_N^{(*)} + I( \Psi^{V|U}_i\Psi^{X|V}_i; \Phi^{U}_{i+1} Z_{i+1}^{1:N} S_{i+1} |\Psi^{U}_1 )   \\
& \leq \delta_N^{(*)} + I( \Psi^{V|U}_i\Psi^{X|V}_i \Psi^{U}_1;  S_{i+1} )  \\
& \phantom{mmmmmm}+ I( \Psi^{V|U}_i\Psi^{X|V}_i ; \Phi^{U}_{i+1} Z_{i+1}^{1:N} | \Psi^{U}_1 S_{i+1} )   \\
& \stackrel{(f)}{=} \delta_N^{(*)} +  I( \Psi^{V|U}_i\Psi^{X|V}_i ; \Phi^{U}_{i+1} Z_{i+1}^{1:N} | \Psi^{U}_1 S_{i+1} )   \\
%
%
%
& = \delta_N^{(*)} +   I( \Psi^{V|U}_i  ; \Phi^{U}_{i+1} Z_{i+1}^{1:N} | \Psi^{U}_1 S_{i+1} ) \\
& \phantom{mmmmmm}+ I(  \Psi^{X|V}_i ; \Phi^{U}_{i+1} Z_{i+1}^{1:N} | \Psi^{V|U}_i \Psi^{U}_1 S_{i+1} ) \\
& \leq  \delta_N^{(*)} +   I( \Psi^{V|U}_i S_{i+1} ; \Phi^{U}_{i+1} Z_{i+1}^{1:N} \Psi^{U}_1 ) \\
& \phantom{mmmmmm}+ I(  \Psi^{X|V}_i ; \Phi^{U}_{i+1} Z_{i+1}^{1:N}  \Psi^{V|U}_i \Psi^{U}_1 S_{i+1} ) \\
& \stackrel{(g)}{\leq} 3\delta_N^{(*)},
\end{align*}
where $(a)$ holds by the chain rule and positivity of mutual information, $(b)$ holds by independence of $S_{i+2:k}$ with all the random variables of the previous blocks, $(c)$ holds by Lemma~\ref{lem1c} because $I(S_{i+1} ; \Phi^{U}_{i+1} Z_{i+1}^{1:N} |\Psi^{U}_1 ) \leq I(S_{i+1} ; \Phi^{U}_{i+1} Z_{i+1}^{1:N} \Psi^{U}_1 )$, in $(d)$ we introduce the random variable $\Psi^{V|U}_i$ and $\Psi^{X|V}_i$ to be able to break the dependencies between the random variables of Block $(i+1)$ and the random variables of the previous blocks, $(e)$ holds because $S_{1:i}\Phi^{U}_{1:i}Z_{1:i}^{1:N} \rightarrow \Psi^{V|U}_i \Psi^{X|V}_i \Psi^{U}_1 \rightarrow \Phi^{U}_{i+1} Z_{i+1}^{1:N} S_{i+1}$ (see Figure \ref{figFGD2}), $(f)$ holds because $(\Psi^{V|U}_i,\Psi^{X|V}_i,\Psi^{U}_i)$ is independent of $S_{i+1}$, $(g)$ holds by Lemmas~\ref{lem1c}, \ref{lem4} and because $\Psi^{X|V}_i$ is equal to $\Psi^{X|V}_1$.%
\section{Proof of Lemma \ref{lemcard_2}} \label{App_card}

Consider a source $(\mathcal{X} \mathcal{Y}, p_{XY})$ with $|\mathcal{X}| =q$, $q$ prime and  $\mathcal{Y}$ a countable alphabet.  Let $(X^{1:N},Y^{1:N})$ be $N$ i.i.d. realizations of this source, where $N \triangleq 2^n$, $n \in \mathbb{N}$. In the following, let $\oplus$ denote the modulo-$q$ addition. We start with some definitions and recall some useful results for our proof.

For a source $(\mathcal{X} \mathcal{Y}, p_{XY})$ the Bhattacharyya source parameter is defined by \cite{Sasoglu11}
$$
Z_s(W) \triangleq \frac{1}{q-1} \sum_{d \in \mathcal{X}\backslash\{0 \}} \sum_{ x \in \mathcal{X}} \sum_{y \in \mathcal{Y}} \sqrt{ p(x,y) p(x \oplus d,y)}.
$$

For a channel $W \triangleq (\mathcal{X}, W_{Y|X}, \mathcal{Y})$, the Bhattacharyya channel parameter is defined by \cite{Sasoglu09} 
$$
Z_c(W) \triangleq \frac{1}{q(q-1)} \sum_{d \in \mathcal{X}\backslash\{0 \}} \sum_{ x \in \mathcal{X}} \sum_{y \in \mathcal{Y}} \sqrt{ W(y|x)W(y|x \oplus d)}.
$$

Recall the following relations between Bhattacharyya parameters and corresponding source entropy and symmetric capacity.

\begin{prop}[\!\!{\cite[Prop. 3.3]{Sasoglu11},\cite[Prop. 3]{Sasoglu09}}]~\\\label{propBH} In this proposition, the base of the logarithm is chosen as $q = |\mathcal{X}|$. 
\begin{itemize}
\item For a source $(\mathcal{X} \mathcal{Y}, p_{XY})$, we have
\begin{align*}
 H(X|Y) & \geq Z_s(X|Y)^2.
\end{align*}
\item For a channel $W \triangleq (\mathcal{X}, W_{Y|X}, \mathcal{Y})$, we have
\begin{align*}
 I(W) & \geq \log \frac{q}{1 + (q-1) Z_c(W)},
\end{align*}
where 
$$
I(W) \triangleq \sum_{x \in \mathcal{X}} \sum_{y \in \mathcal{Y}} \frac{1}{q} W(y|x) \log \frac{ W(y|x)}{ \sum_{x' \in \mathcal{X}} \frac{1}{q} W(y|x')}
$$
denotes the symmetric capacity of the channel $W$.
\end{itemize}
\end{prop}

We have the following equivalence between the Bhattacharyya source parameter and the Bhattacharyya channel parameter. It is an extension of \cite[Th.2]{Honda13} to the $q$-ary case.

\begin{prop} \label{propeq}
Consider a source $(\mathcal{X} \mathcal{Y}, p_{XY})$ with $|\mathcal{X}| =q$, and  $\mathcal{Y}$ a countable alphabet. Let $ \widetilde{\mathcal{Y}} \triangleq \mathcal{X} \times \mathcal{Y}$, and 
\begin{align*}
	\widetilde{Y}^{1:N} \triangleq ( Z^{1:N} , Y^{1:N}) \text{ with } Z^{1:N} \triangleq \widetilde{X}^{1:N} \oplus X^{1:N},
\end{align*}
 where  $\widetilde{X}^{1:N}$ is uniformly distributed and independent of $(X^{1:N},Y^{1:N})$. Define $\widetilde{U}^{1:N} \triangleq \widetilde{X}^{1:N} G_n$, ${U}^{1:N} \triangleq {X}^{1:N} G_n$, and 
	$$
	\widetilde{W}_i^N( \widetilde{u}^{1:i-1}, \widetilde{y}^{1:N} | \widetilde{u}^i) \triangleq p_{ \widetilde{U}^{1:i-1} \widetilde{Y}^{1:N} | \widetilde{U}^i}( \widetilde{u}^{1:i-1}, \widetilde{y}^{1:N} | \widetilde{u}^i).
	$$ 
Then,	we have
	$$
	Z_s(U^i|U^{1:i-1}Y^{1:N}) = Z_c (\widetilde{W}_i^N ).
	$$
\end{prop}

\begin{proof}
Similar to \cite{Honda13}, we have
	\begin{align}
		&\widetilde{W}_i^N( \widetilde{u}^{1:i-1}, \widetilde{y}^{1:N} | \widetilde{u}^i) \nonumber \\ \nonumber
		& = p_{ \widetilde{U}^{1:i-1} \widetilde{Y}^{1:N} | \widetilde{U}^i}( \widetilde{u}^{1:i-1} \widetilde{y}^{1:N} | \widetilde{u}^i)\\ \nonumber
		& = \sum_{x^{1:N}} p_{ X^{1:N} Y^{1:N} \widetilde{X}^{1:N} \widetilde{U}^{1:i-1} | \widetilde{U}^i}(x^{1:N},y^{1:N}, \\
		& \phantom{mmmmmmmmmmmlmmm}z^{1:N}\oplus x^{1:N}, \widetilde{u}^{1:i-1}  | \widetilde{u}^i) \nonumber\\ \nonumber
		& \stackrel{(a)}{=} \sum_{x^{1:N}} p_{ X^{1:N} Y^{1:N}}(x^{1:N},y^{1:N})\\
		& \phantom{mmmmmm} \times p_{ \widetilde{X}^{1:N} \widetilde{U}^{1:i-1} | \widetilde{U}^i}(z^{1:N}\oplus x^{1:N}, \widetilde{u}^{1:i-1}  | \widetilde{u}^i) \nonumber \\ \nonumber
		& = \sum_{x^{1:N}} p_{ X^{1:N} Y^{1:N}}(x^{1:N},y^{1:N}) p_{ \widetilde{X}^{1:N}} (z^{1:N}\oplus x^{1:N}) \\ \nonumber
		& \phantom{mmmmmm} \times\frac{  p_{  \widetilde{U}^{1:i} | \widetilde{X}^{1:N}}(\widetilde{u}^{1:i}  | z^{1:N}\oplus x^{1:N})}{ p_{\widetilde{U}^i}(\widetilde{u}^i) }	\\ \nonumber
		& = \sum_{x^{1:N}} p_{ X^{1:N} Y^{1:N}}(x^{1:N},y^{1:N}) p_{ \widetilde{X}^{1:N}} (z^{1:N}\oplus x^{1:N})\\
		& \phantom{mmmmmm} \times \frac{ \mathds{1} \{  \widetilde{u}^{1:i} = ((z^{1:N} \oplus x^{1:N}) G_n )^{1:i}  \}}{ p_{\widetilde{U}^i}(\widetilde{u}^i) } \nonumber \\ \nonumber
		&\stackrel{(b)}{=} q^{-N+1}\sum_{x^{1:N}} p_{ X^{1:N} Y^{1:N}}(x^{1:N},y^{1:N}) \\
		& \phantom{mmmmmm} \times \nonumber\mathds{1} \{  \widetilde{u}^{1:i} \oplus (z^{1:N} G_n )^{1:i} = (x^{1:N} G_n )^{1:i}  \}  \\ 
		&= q^{-N+1} p_{ U^{1:i} Y^{1:N}}(\widetilde{u}^{1:i} \oplus (z^{1:N} G_n )^{1:i},y^{1:N}), \label{eqmodq}
			\end{align}
	where $(a)$ holds by independence of $(X^{1:N},Y^{1:N})$ and $(\widetilde{X}^{1:N},\widetilde{U}^{1:N})$, $(b)$ holds by uniformity of $\widetilde{X}^{1:N}$ and $\widetilde{U}^{1:N}$.
	
	We then have \eqref{eqspli},	
\begin{figure*}[]
\normalsize

\begin{align}
		& Z_c (\widetilde{W}_i^{N} ) \nonumber \\  \nonumber
		& =  \frac{1}{q(q-1)} \sum_{d \in \mathcal{X}\backslash\{0 \},  x ,\widetilde{y}^{1:N}, \widetilde{u}^{1:i-1}} \sqrt{ p_{ \widetilde{U}^{1:i-1} \widetilde{Y}^{1:N} | \widetilde{U}^i}( \widetilde{u}^{1:i-1} \widetilde{y}^{1:N} | x) p_{ \widetilde{U}^{1:i-1} \widetilde{Y}^{1:N} | \widetilde{U}^i}( \widetilde{u}^{1:i-1} \widetilde{y}^{1:N} | x \oplus d)}\\ \nonumber
		& \stackrel{(a)}{=}  \frac{q^{-N+1}}{q(q-1)} \sum_{d \in \mathcal{X}\backslash\{0 \} , x , y^{1:N},z^{1:N}, \widetilde{u}^{1:i-1}} \!\!\!\!\!\!\!\!\!\!\!\!\!\!\!\sqrt{ p_{ U^{1:i} Y^{1:N}}( (\widetilde{u}^{1:i-1},x) \! \oplus \! (z^{1:N} G_n )^{1:i},y^{1:N}) p_{ U^{1:i} Y^{1:N}}( (\widetilde{u}^{1:i-1}, x\oplus d) \! \oplus \! (z^{1:N} G_n )^{1:i},y^{1:N})}\\ \nonumber
		& \stackrel{(b)}{=} \frac{1}{q-1} \sum_{d \in \mathcal{X}\backslash\{0 \} , x ,y^{1:N},\widetilde{u}^{1:i-1}} \sqrt{ p_{ U^{1:i} Y^{1:N}}( (\widetilde{u}^{1:i-1} ,x),y^{1:N}) p_{ U^{1:i} Y^{1:N}}( (\widetilde{u}^{1:i-1} , x \oplus d),y^{1:N})}\\
		&\stackrel{(c)}{=} Z_s(U^i|U^{1:i-1}Y^{1:N}), \label{eqspli}
	\end{align}

\hrulefill
\vspace*{4pt}
\end{figure*}
	where $(a)$ holds by \eqref{eqmodq}, $(b)$ holds by doing the changes of variables $x \leftarrow x \oplus (z^{1:N} G_n )^{i}$ and $\widetilde{u}^{1:i-1} \leftarrow \widetilde{u}^{1:i-1} \! \oplus \! (z^{1:N} G_n )^{1:i-1}$, $(c)$ holds by definition of the Bhattacharyya source parameter.
\end{proof}

Recall also that for $q$-ary input symmetric channels, with $q$ prime, we have the following result.

\begin{prop} [\cite{karzand2010polar}] \label{propq}
For a $q$-ary input symmetric channel $W \triangleq (\mathcal{X}, p_{Y|X}, \mathcal{Y})$ with $q$-prime, define ${U}^{1:N} \triangleq {X}^{1:N} G_n$, where $X^{1:N}$ is uniformly distributed, and 
	$$
	{W}_i^N( {u}^{1:i-1}, {y}^{1:N} | {u}^i) \triangleq p_{ {U}^{1:i-1}, {Y}^{1:N} | {U}^i}( {u}^{1:i-1}, {y}^{1:N} | {u}^i).
	$$ 
	Define the symmetric capacity of $W_i^N$ by $I(W_i^N)$. Then, for $\delta_N \triangleq 2^{-N^{\beta}}$, $\beta < 1/2$, we have 
	$$
	\lim_{N\to \infty} \frac{| \{i \in \llbracket 1,N \rrbracket :  I(W_i^N) < \delta_N \}|}{N} = \log_2 (q)-I(W).
	$$
	
\end{prop}

We are now equipped to prove Lemma \ref{lemcard_2}. 	Let $\beta < 1/2$ and $\alpha < \beta$. Consider a source $(\mathcal{X} \mathcal{Y}, p_{XY})$ with $|\mathcal{X}| =q$, $q$ prime and  $\mathcal{Y}$ a countable alphabet. Let $ \widetilde{\mathcal{Y}} \triangleq \mathcal{X} \times \mathcal{Y}$, and 
\begin{align*}
	\widetilde{Y}^{1:N} \triangleq ( Z^{1:N} , Y^{1:N}) \text{ with } Z^{1:N} \triangleq \widetilde{X}^{1:N} \oplus X^{1:N},
\end{align*}
 where $\widetilde{X}^{1:N}$ is uniformly distributed and independent of $(X^{1:N},Y^{1:N})$. Define $\widetilde{U}^{1:N} \triangleq \widetilde{X}^{1:N} G_n$, ${U}^{1:N} \triangleq {X}^{1:N} G_n$, and 
	$$
	\widetilde{W}_i^N( \widetilde{u}^{1:i-1}, \widetilde{y}^{1:N} | \widetilde{u}^i) \triangleq p_{ \widetilde{U}^{1:i-1}, \widetilde{Y}^{1:N} | \widetilde{U}^i}( \widetilde{u}^{1:i-1}, \widetilde{y}^{1:N} | \widetilde{u}^i).
	$$ 
	We define
	$$
	\mathcal{A} \triangleq \{i \in \llbracket 1,N \rrbracket :  I(\widetilde{W}_i^N) < 2^{-N^{\beta}} \}
	$$ 
	and
	$$
	\mathcal{B} \triangleq \{i \in \llbracket 1,N \rrbracket :  H(U^i|U^{1:i-1}Y^{1:N}) > \log_2 (q)- 2^{-N^{\alpha}} \}.
	$$ 
	Assume $i \in \mathcal{A}$, then
	\begin{align*}
		&H(U^i|U^{1:i-1}Y^{1:N})\\ 
		& \stackrel{(a)}{\geq} \log_2 (q) Z_s(U^i|U^{1:i-1}Y^{1:N})^2\\
		& \stackrel{(b)}{=} \log_2 (q)Z_c(\widetilde{W}_i^N )^2 \\
		& \stackrel{(c)}{\geq} \log_2 (q) \left(\frac{ qe^{-2^{-N^{\beta}} \log(2)}-1} {q-1}\right)^2 \\
		& \stackrel{(d)}{\geq}\log_2 (q) \left(\frac{q (1-2^{-N^{\beta}} \log(2))-1} {q-1}\right)^2  \displaybreak[0] \\
		& = \log_2 (q) \left( 1 - 2^{-N^{\beta}}\frac{q\log(2)} {q-1} \right)^2 \displaybreak[0] \\
				& \geq  \log_2 (q) - 2^{-N^{\beta}}\frac{2q\log(2)} {q-1}  \\
								& \stackrel{(e)}{\geq}  \log_2 (q)- 2^{-N^{\alpha}},
	\end{align*}
	where $(a)$ holds by Proposition~\ref{propBH}, $(b)$ holds by Proposition~\ref{propeq}, $(c)$ holds because $i \in \mathcal{A}$ and by Proposition~\ref{propBH}, $(d)$ holds because $e^x \geq 1+x$, and $(e)$ holds for $N$ large enough because $\alpha<\beta$.
	Hence, for $N$ large enough, we have 
	$$
	\mathcal{A} \subseteq \mathcal{B},
	$$
	and thus by Proposition \ref{propq} and because $I(\widetilde{W})=\log_2 (q)- H(X|Y)$, we have
	\begin{align} \label{eqp1}
	   H(X|Y) =   \lim_{N\to \infty} \frac{| \mathcal{A}|}{N}\leq \lim_{N\to \infty} \frac{| \mathcal{B}|}{N}. 
	\end{align}
	
	Moreover, 
	$$
	\mathcal{B} \subseteq \{i \in \llbracket 1,N \rrbracket :  H(U^i|U^{1:i-1}Y^{1:N}) > 2^{-N^{\alpha}} \},
	$$
	 and we know by \cite{Sasoglu11} \begin{multline*}
	\lim_{N\to \infty} \frac{|  \{i \in \llbracket 1,N \rrbracket :  H(U^i|U^{1:i-1}Y^{1:N}) > 2^{-N^{\alpha}} \}|}{N}\\ = H(X|Y),
\end{multline*}
	which gives
	\begin{align} \label{eqp2}
	   H(X|Y) \geq   \lim_{N\to \infty} \frac{| \mathcal{B}|}{N}. 
	\end{align}
The combination of \eqref{eqp1} and \eqref{eqp2} proves the lemma.

\bibliographystyle{IEEEtran}
\bibliography{polarwiretap}

\end{document}